\newcommand\op{\operatorname}
\newcommand{\ip}[2]{\langle #1,#2\rangle}
\newcommand{\bigip}[2]{\big\langle #1,#2\big\rangle}
\newcommand{\complex}{\mathbb{C}}
\newcommand{\calX}{\mathcal{X}}
\newcommand{\calY}{\mathcal{Y}} 
\newcommand{\calZ}{\mathcal{Z}}
\newcommand{\comp}{\mathrm{C}}
\newcommand{\jord}{\mathrm{J}} 
\newcommand{\Tr}{\op{Tr}}
\newcommand{\I}{\mathds{1}} 
\renewcommand{\t}{T} 
\newcommand{\tX}{\t_{\calX}} 
\renewcommand{\dim}{\op{dim}} 
\renewcommand\L{\op{L}}
\newcommand\Pos{\op{Pos}}
\newcommand\Herm{\op{Herm}}
\newcommand\C{\op{C}}
\newcommand\D{\op{D}}
\newcommand\U{\op{U}} 
\newcommand\T{\op{T}} 
\newcommand{\JP}{\op{JP}} 
\newcommand{\Sep}{\op{Sep}} 
\newcommand{\PPT}{\op{PPT}}
\theoremstyle{plain}
\newtheorem{theorem}{Theorem}
\newtheorem{lemma}[theorem]{Lemma}
\newtheorem{corollary}[theorem]{Corollary}
\newtheorem{proposition}[theorem]{Proposition}
\newtheorem{result}[theorem]{Result}
\theoremstyle{definition}
\newtheorem{definition}[theorem]{Definition}
\newtheorem{remark}[theorem]{Remark}
\newtheorem{example}[theorem]{Example} 
\title{
Jordan products of quantum channels and their compatibility\\ \quad
}
\author{
Mark Girard\thanks{ 
Institute for Quantum Computing, University of Waterloo, Ontario, Canada. \url{mark.girard@uwaterloo.ca}
} 
\quad \quad \quad 
Martin Pl\'{a}vala\thanks{
Naturwissenschaftlich-Technische Fakult\"{a}t Universit\"{a}t Siegen, Siegen, Germany.
Mathematical Institute, Slovak Academy of Sciences, 
Bratislava, Slovakia.
\url{martin.plavala@uni-siegen.de}
} 
\quad \quad \quad 
Jamie Sikora\thanks{
Virginia Polytechnic Institute and State University, Blacksburg, Virginia, USA. 
Institute for Quantum Computing, University of Waterloo, Waterloo, Ontario, Canada. 
Perimeter Institute for Theoretical Physics, Waterloo, Ontario, Canada. 
\url{sikora@vt.edu}
}
}
\date{September 7, 2020}
\begin{document}

\maketitle

\begin{abstract}
Given two quantum channels, we examine the task of determining whether they are com\-patible---meaning that one can perform both channels simultaneously but, in the future, choose exactly one channel whose output is desired (while forfeiting the output of the other channel). We show several results concerning this task. First, we show it is equivalent to the quantum state marginal problem, i.e., every quantum state marginal problem can be recast as the compatibility of two channels, and vice versa. Second, we show that compatible measure-and-prepare channels (i.e., entanglement-breaking channels) do not necessarily have a measure-and-prepare compatibilizing channel. Third, we extend the notion of the Jordan product of matrices to quantum channels and present sufficient conditions for channel compatibility. These Jordan products and their generalizations might be of independent interest. Last, we formulate the different notions of compatibility as semidefinite programs and numerically test when families of partially dephasing-depolaring channels are compatible.
\end{abstract}

%%%%%%%%%%%%%%%%%%%%%%%%%%%%%%%%%%%%%%%%%%%%%%%%%%%%%% 

\section{Introduction}
We first introduce the different settings for the compatibility of states, measurements, and channels that are considered in this paper. (The interested reader is referred to the reviews in references~\cite{Lahti-coexistence, HeinosaariMiyaderaZiman-review, KiukasLahtiPellonpaaKari-complementarity, Heinosaari-review} for further discussions on these topics.) This overview is followed by a summary of our main results.

%%%%%%%%%%%%%%%%%%%%%%%%%%% 

\subsection{Overview of compatibility problems in quantum information theory}

\subsubsection*{The quantum state marginal problem.}
The \emph{quantum state marginal problem}~\cite{Klyachko-marginalProblem, Klyachko-fermionMarginalProblem, WyderkaHuberGuhne-marginalProblem} is one of the most fundamental problems in quantum theory and quantum chemistry. One version of this problem is the following problem. Given a collection of systems $\calX_1, \ldots, \calX_n$ and a collection of density operators $\rho_1, \ldots, \rho_m$---each acting on some respective subset of subsystems $S_1, \ldots, S_m \subseteq \calX_1 \otimes \cdots \otimes \calX_n$---determine whether there exists a state $\rho$ on $\calX_1 \otimes \cdots \otimes \calX_n$ which is consistent with every density operator $\rho_1, \ldots, \rho_m$. For example, if $\rho_1$ acts on $S_1$, then $\rho$ must satisfy
\begin{equation}
\Tr_{\calX_1 \otimes \cdots \otimes \calX_n \setminus S_1}(\rho) = \rho_1,
\end{equation}
and similarly for the other states $\rho_2, \ldots, \rho_m$. This problem is nontrivial if the density operators act on overlapping systems and indeed is computationally expensive to determine, as the problem is known to be QMA-complete \cite{Liu-QMA, BroadbentGrilo-QMA}. Small instances of the problem can be solved (to some level of numerical precision), however, by solving the following semidefinite programming feasibility problem
\begin{equation}
\begin{split}
\text{find:} \quad & \rho \in \Pos(\calX_1 \otimes \cdots \otimes \calX_n)\\
\text{satisfying:}\quad &
\Tr_{\calX_1 \otimes \cdots \otimes \calX_n \setminus S_1}(\rho) = \rho_1 \\
& \quad \quad \quad \quad \vdots \\
& \Tr_{\calX_1 \otimes \cdots \otimes \calX_n \setminus S_m}(\rho) = \rho_m.
\end{split}
\end{equation}
(See Section~\ref{section:background} for a detailed description of this notation.) Note that the condition that $\rho$ has unit trace is already enforced by the constraints.
In the case where the systems $\calX_2 = \cdots=\calX_n$ and density operators $\sigma:=\rho_2 = \cdots = \rho_n$ are identical (where we omit $\rho_1$ for indexing convenience), we remark that, for the choice of subsystems $S_i = \calX_1\otimes \calX_i$ for each $i\in\{2,\dots,n\}$, one obtains the so-called $(n-1)$-symmetric-extendibility condition for $\sigma$. This is closely related to separability testing~\cite{DPS}.

%%%%%%%%%%%%%%%%%%%%%%%%%%% 

\subsubsection*{The measurement compatibility problem.}
There is an analogous task for quantum measurements called the \emph{measurement compatibility problem} (see~\cite{JaeKyunghyunJungheeJinhyoung-negativity, DesignolleFarkasKaniewski-jointMeas} for POVMS and~\cite{Busch-qubit, BuschHeinosaari-jointMeasQubits} for the special case of qubits). This task can be stated as follows. Two POVMs $\{ M_1, \ldots, M_n \}$ and $\{ N_1, \ldots, N_n \}$ are said to be \emph{compatible} if there exists a choice of POVM $\{ P_{i,j} : i \in \{ 1, \ldots, n \}, j \in \{ 1, \ldots, m \} \}$ that satisfies
\begin{equation}
M_i = \sum_{j=1}^m P_{i,j} \quad \text{ and } \quad N_j = \sum_{i=1}^n P_{i,j}
\end{equation}
for each index $i$ and $j$. In other words, the two measurements are a course-graining of the compatibilizing measurement $\{ P_{i,j} \}$. It may seem at first glance that both measurements are being performed simultaneously, but this view is incorrect. Performing a compatibilizing measurement should be viewed as performing a separate measurement which captures the probabilities of both measurements simultaneously. Although measurement compatibility is defined mathematically, it also has operational applications---for example, incompatible measurements are necessary for quantum steering \cite{UolaMoroderGuhne-steering, QuintinoVertesiBrunner-steering, UolaCostaNguyenGuhne-steering} and Bell non-locality \cite{WolfPerezgarciaFernandez-Bell, BrunnerCavalcantiPironioScaraniWehner-Bell, RossetBancalGisin-Bell}.

Determining the compatibility of the two POVMs above can be solved via the following semi\-{}definite programming feasibility problem
\begin{equation}
\label{eq:PMN_sdp}
\begin{split}
\text{find:} \quad & P_{i,j} \in \Pos(\calX), \text{ for } i \in \{ 1, \ldots, n \}, j \in \{ 1, \ldots, m \} \\
\text{satisfying:}\quad &
M_i = \sum_{j=1}^m P_{i,j}, \text{ for each } i \in \{ 1, \ldots, n \} \\ & N_j = \sum_{i=1}^n P_{i,j}, \text{ for each } j \in \{ 1, \ldots, m \},
\end{split}
\end{equation}
where the POVMs each act on some system $\calX$. Note that the condition $\sum_{i=1}^n \sum_{j=1}^m P_{i,j} = \I_{\calX}$ is enforced by the constraints and thus every collection of operators satisfying the conditions in \eqref{eq:PMN_sdp} necessarily composes a POVM. The semidefinite programming formulation provided above may also be used to certify incompatibility by using notions of duality, i.e., to provide an incompatibility witness \cite{CarmeliHeinosaariToigo-witness, Jencova-witness}.

This notion of compatibility for measurements generalizes the concept for commuting measurements. Indeed, if $[M_i, N_j] = 0$ holds for each choice of indices $i$ and $j$ then the measurement operators $P_{i,j}$ defined as $P_{i,j} = M_i N_j$ form a compatibilizing POVM. This does not hold generally, as the operators $M_i N_j$ need not even be Hermitian if $M_i$ and $N_j$ do not commute. Nevertheless, one can study Hermitian versions of these matrices using Jordan products, as is discussed below.

%%%%%%%%%%%%%%%%%%%%%%%%%%% 

\subsubsection*{Jordan products of matrices.}

The \emph{Jordan product} of two square operators $A$ and $B$ is defined as
\begin{equation}
\label{jmat}
A \odot B = \frac{1}{2}(AB + BA) .
\end{equation}
This is Hermitian whenever both $A$ and $B$ are Hermitian. The Jordan product can be used to study the compatibility of measurements (see~\cite{Heinosaari-coexistence, HeinosaariJivulescuNechita-randomPOVMs}). In particular, for POVMs $\{ M_1, \ldots, M_n \}$ and $\{ N_1, \ldots, N_m \}$, note that the operators defined as
\begin{equation}
\label{compmeas}
P_{i,j} := M_i \odot N_j
\end{equation}
satisfy
\begin{equation}
M_i = \sum_{j=1}^m P_{i,j} \quad \text{ and } \quad N_j = \sum_{i=1}^n P_{i,j}
\end{equation}
for each choice of indices $i$ and $j$. Thus, if $M_i \odot N_j$ is positive semidefinite for each $i$ and $j$ then the POVM defined in Equation~\eqref{compmeas} is a compatibilizing measurement.

%%%%%%%%%%%%%%%%%%%%%%%%%%% 

\subsubsection*{The no-broadcasting theorem.}

Before discussing the quantum channel marginal problem, we take a slight detour and discuss the no-broadcasting theorem. A \emph{quantum broadcaster} for a quantum state $\sigma \in \Pos(\calX \otimes \calY)$ is a channel that acts on the $\calX$ subsystem of $\sigma$ and outputs a state $\rho \in \Pos(\calX_1 \otimes \calX_2 \otimes \calY)$ that satisfies
\begin{equation}
\Tr_{\calX_1}(\rho) = \sigma
\quad \text{ and } \quad
\Tr_{\calX_2}(\rho_1) = \sigma,
\end{equation}
where $\calX = \calX_1 = \calX_2$. In other words, one applies the channel that broadcasts $\sigma$ and decides afterwards where $\sigma$ is to be localized. One can show---when there is no promise on the input $\sigma$---that such a channel cannot exist. An easy way to see this is by trying to broadcast half of an EPR state, which would violate monogamy of entanglement. This proves the well-known \emph{no-broadcasting theorem}, which states that a perfect broadcasting channel cannot exist. (The question of determining the best ``noisy'' broadcasting channel has also been investigated~\cite{BuzekHillery-cloning, Werner-cloning, KeylWerner-cloning}.) One may notice that the conditions above imposed by broadcasting is a special case of the quantum state marginal problem (for fixed input $\sigma$) and, moreover, of the symmetric extendibility problem as described above (for which it is sometimes the case that no solution exists).

%%%%%%%%%%%%%%%%%%%%%%%%%%% 

\subsubsection*{The quantum channel marginal problem/channel compatibility problem.}

The task of determining \emph{compatibility of quantum channels}, which has been studied recently (see, e.g.,~\cite{HeinosaariMiyadera-channels, Plavala-channels, Kuramochi-channels, Kuramochi-ETBchannels, Haapasalo-compat}) can be stated as follows. Given two quantum channels, $\Phi_1$ from $\calX$ to $\calY_1$ and $\Phi_2$ from $\calX$ to $\calY_2$, one determines if there exists another channel $\Phi$ from $\calX$ to $\calY_1 \otimes \calY_2$ that satisfies
\begin{equation}
\Phi_1(X) = \Tr_{\calY_2}(\Phi(X))
\quad \text{ and } \quad
\Phi_2(X) = \Tr_{\calY_1}(\Phi(X))
\end{equation}
for every input $X$. The notion of broadcasting (as defined in the previous paragraph) is a specific instance of this problem, where one chooses both $\Phi_1$ and $\Phi_2$ to be the identity channel. In this work we show how channel compatibility can be expressed as a semidefinite programming feasibility problem, but for now we may express it as a convex feasibility problem over the space of linear maps:
\begin{equation}
\begin{split}
\text{find:} \quad & \Phi \text{ completely positive} \\
\text{satisfying:}\quad &
\Phi_1 = \Tr_{\calY_2} \circ \; \Phi \\
& \Phi_2 = \Tr_{\calY_1} \circ \; \Phi.
\end{split}
\end{equation}
The condition that $\Phi$ is trace preserving follows from the constraints.

%%%%%%%%%%%%%%%%%%%%%%%%%%% 

\subsubsection*{Connections between marginal problems.}

To see how the channel and state versions of the marginal problem are generalizations of each other, we may consider the Choi representations of the channels. It is not hard to see \cite{Plavala-channels, HaapasaloKraftMiklinUola-marginalProblem} that the conditions for the compatibility of $\Phi_1$ and $\Phi_2$ is equivalent to the following conditions on the Choi matrices:
\begin{equation}
\Tr_{\calY_2}(J(\Phi)) = J(\Phi_1)
\quad \text{ and } \quad
\Tr_{\calY_1}(J(\Phi)) = J(\Phi_2),
\end{equation}
where $J(\Phi_1)$ and $J(\Phi_2)$ are the Choi representations of these channels. In other words, the channels are compatible if and only if the normalized Choi matrices are compatible as states.

It is not obvious that the channel and state marginal problems are equivalent. Indeed, normalized Choi matrices only constitute a special case of quantum states---they must satisfy certain partial trace conditions. One of the results proved in this paper states that the quantum channel marginal problem also generalizes the state version. Thus these problems can be viewed as equivalent.

%%%%%%%%%%%%%%%%%%%%%%%%%%%

\subsubsection*{The quantum channel marginal problem generalizes the measurement version.}

To see how the channel compatibility problem is a generalization of the measurement compatibility problem, consider the following reduction. For the POVMs $\{ M_1, \ldots, M_n \}$ and $\{ N_1, \ldots, N_m \}$, define the channels
\begin{equation}
\Phi_M(X) = \sum_{i=1}^n \ip{M_i}{X} E_{i,i}
\quad \text{ and } \quad
\Phi_N(X) = \sum_{j=1}^m \ip{N_j}{X} E_{j,j}
\end{equation}
where $E_{i,i}$ is the density matrix of the $i$\textsuperscript{th} computational basis state.\footnote{One may also think of $E_{i,i}$ as $| i \rangle \! \langle i |$ in Dirac notation.} Channels of this form are known as \emph{measure-and-prepare} (or, equivalently, as \emph{entanglement-breaking}) channels. It is easy to see that the POVMs $\{ M_1, \ldots, M_n \}$ and $\{ N_1, \ldots, N_m \}$ are compatible if and only if $\Phi_M$ and $\Phi_N$ are compatible as channels.

The result above also holds in more general settings. In particular, a similar result holds for measure-and-prepare channels if the choice of state preparations are \emph{distinguishable}. However, if the preparations are chosen in a general way, then this equivalence breaks down. To be precise, consider the channels defined as
\begin{equation}
\Psi'_M(X) = \sum_{i=1}^n \ip{M_i}{X} \, \rho_i
\quad \text{ and } \quad
\Psi'_N(X) = \sum_{j=1}^m \ip{N_j}{X} \, \sigma_j
\end{equation}
for some density operators $\rho_1, \ldots, \rho_n$ and $\sigma_1, \ldots, \sigma_m$. One can show that $\Psi'_M$ and $\Psi'_N$ are compatible if the POVMs $\{ M_1, \ldots, M_n \}$ and $\{ N_1, \ldots, N_m \}$ are compatible (but the converse may not be true).

%%%%%%%%%%%%%%%%%%%%%%%%%%%

\subsubsection*{Self-compatibility.}

Another notion of channel compatibility that we consider in this paper concerns the case when $\Phi_1=\Phi_2=\Phi $ for some fixed channel $\Phi$ (i.e., when the channels are the same). A channel $\Phi$ that is compatible with itself is said to be \emph{self-compatible}. Although most of this work is only concerned with the compatibility of pairs of channels, one may also consider---for some other positive integer $k>2$---whether some choice of $k$ channels $\Phi_1, \ldots, \Phi_k$ are compatible (see Section~\ref{section:background} for details and precise definitions). If $k$ copies of some fixed channel are compatible---i.e., $\Phi_1 = \cdots = \Phi_k=\Phi$ are all equal to some fixed channel $\Phi$---then we say that $\Phi$ is \emph{$k$-self-compatible}.

%%%%%%%%%%%%%%%%%%%%%%%%%%%%%%%%%%%%%%%%%%%%%%%%%%%%%% 

\subsection{Summary of results}

We examine the quantum channel marginal/compatibility problem using several different perspectives. Here we provide an overview of our results, which are stated in this section in an informal manner. Precise definitions and theorem statements can be found in subsequent sections.

%%%%%%%%%%%%%%%%%%%%%%%%%%%

\subsubsection*{The quantum channel marginal problem generalizes the state version.}
As is remarked above, the compatibility of channels is equivalent to the compatibility of their normalized Choi representations as quantum states. That is, the problem of determining the compatibility of channels can be reduced to solving the state marginal problem for a certain choice of states. We show that the quantum channel marginal problem also generalizes the state marginal problem in the following sense. Recently, a somewhat weaker version of this result was proved in \cite{HaapasaloKraftMiklinUola-marginalProblem}, where it was shown that the marginal problem for quantum states with invertible marginals is equivalent to the compatibility of channels. By using a different method to prove the result, we bypass the need for invertible marginal.

\begin{result}[Informal, see Theorem~\ref{thm:state--channel-equivalence} for a formal statement]
Every quantum state marginal problem is equivalent to the compatibility of a particular choice of quantum channels.
\end{result}

%%%%%%%%%%%%%%%%%%%%%%%%%%%

\subsubsection*{On the compatibility of measure-and-prepare.}

We first note that every measure-and-prepare (i.e., entanglement-breaking) channel is self-compa\-{}tible. Indeed, for measure-and-prepare channel $\Phi$ there exists a POVM $\{ M_1, \ldots, M_n \}$ and a collection of density matrices $\rho_1, \ldots, \rho_n$ such that
\begin{equation}
\Phi(X) = \sum_{i=1}^m \ip{X}{M_i} \, \rho_i,
\end{equation}
for all $X$. Now consider the channel
\begin{equation}
\label{intro:EBcomp}
\Psi(X) = \sum_{i=1}^m \ip{X}{M_i} \, \rho_i \otimes \rho_i.
\end{equation}
This channel clearly compatibilizes two copies of $\Phi$. (Moreover, one can easily modify $\Psi$ above such that it compatibilizes $k$ copies of $\Phi$. Hence every measure-and-prepare channel is also $k$-self-compatible for all $k$.)

The task of determining whether two \emph{distinct} measure-and-prepare channels $\Phi_1$ and $\Phi_2$ are compatible, however, is not so straightforward. From the discussion in the previous paragraph, if $\Phi_1$ and $\Phi_2$ are expressed as measure-and-prepare channels such that the prepared states are distinct computational basis states, it can be seen that the notion of channel compatibility is equivalent to that of measurement compatibility. However, this is not the case for all measure-and-prepare channels.  For instance, there may be multiple ways to express the measurements and/or preparations for a particular measure-and-prepare channel. One might guess that any channel which compatibilizes two measure-and-prepare channels must also be measure-and-prepare and can only exist if all three sets of measurements satisfy some conditions. However, we show that this is not the case, as described below.

\begin{result}[See Example~\ref{ex:no-ent-breaking-compatibilizer} for details]
There exists a pair of compatible measure-and-prepare channels with no measure-and-prepare compatibilizer.
\end{result}

The two channels were found using semidefinite programming formulations of channel compatibility and the notion of positive partial transpose (PPT). The channels and proof (which extracts away the SDP formalism) are presented in Section~\ref{sec:no-ent-breaking-compatibilizer}.

%%%%%%%%%%%%%%%%%%%%%%%%%%% 

\subsubsection*{Jordan products of quantum channels.}
The \emph{Jordan product} of two channels, $\Phi_1$ from system $\calX$ to system $\calY_1$ and $\Phi_2$ from $\calX$ to $\calY_2$, is the linear map $\Phi_1 \odot \Phi_2$ from the system $\calX$ to the system $\calY_1 \otimes \calY_2$ whose Choi representation is given by
\begin{equation}
\label{JordanChoi}
J(\Phi_1 \odot \Phi_2) = \sum_{i,j,k,\ell= 1}^{\dim(\calX)} ( E_{i,j} \odot E_{k,\ell} ) \otimes \Phi_1 ( E_{i,j} ) \otimes \Phi_2 ( E_{k,\ell} ),
\end{equation}
where $E_{i,j}$ is the matrix whose $(i,j)$-entry is 1 and has zeros elsewhere,\footnote{One can also think of $E_{i,j}$ as $| i \rangle \! \langle j |$ in Dirac notation.} and $\odot$ on the right-hand side denotes the Jordan product of matrices as defined in Equation~\eqref{jmat}. It is straightforward to see that the map $\Phi = \Phi_1 \odot \Phi_2$ satisfies
\begin{equation}
\label{eq:phi1-and-phi2-are-marginals-of-phi}
\Phi_1(X) = \Tr_{\calY_2} (\Phi(X))
\quad \text{ and } \quad
\Phi_2(X) = \Tr_{\calY_1} ( \Phi(X))
\end{equation}
for every choice of $X$. The map $\Phi_1 \odot \Phi_2$ might not be completely positive, as the corresponding Choi representation in~\eqref{JordanChoi} might not be positive semidefinite. However, if $\Phi_1 \odot \Phi_2$ is completely positive---and thus a channel, since it is trace-preserving by \eqref{eq:phi1-and-phi2-are-marginals-of-phi}---then this linear map is a compatibilizing channel for the channels $\Phi_1$ and $\Phi_2$. The condition that $\Phi_1\odot\Phi_2$ be completely positive is therefore a sufficient condition for the channels $\Phi_1$ and $\Phi_2$ to be compatible.

It is known that for projection-valued measures (PVMs)---that is POVMs where every operator is a projection---the Jordan product provides a necessary and sufficient condition for the compatibility of a PVM with any POVM \cite{HeinosaariReitznerStano-compat}. It is therefore natural to consider whether the Jordan product of channels similarly provides necessary and sufficient conditions for a PVM-measurement channel to be compatible with another arbitrary channel. Namely, for a PVM $\{ \Pi_1, \ldots, \Pi_m \}$, let $\Delta_{\Pi}$ be the corresponding measurement channel defined as
\begin{equation}
\Delta_{\Pi}(X) = \sum_{i=1}^m \Tr(\Pi_i X) E_{i,i}
\end{equation}
for every choice of $X$. One may ask whether complete positivity of the Jordan product $\Delta_{\Pi} \odot \Phi$ is always equivalent to the compatibility of $\Delta_\Pi$ and $\Phi$, for any other choice of channel $\Phi$. This is indeed the case, as described below.
\begin{result}[Informal, see Theorem~\ref{thm:jordan-PVM-NaS} for a formal statement]
$\Delta_{\Pi}$ is compatible with $\Phi$ if and only if $\Delta_{\Pi} \odot \Phi$ is completely positive.
\end{result}

To tackle more general cases, we describe how to relax the sufficient condition that the Jordan product be completely positive. Note that the Choi representation of the Jordan product as provided in \eqref{JordanChoi} can be expressed as
\begin{equation}
\label{intro:JP}
J(\Phi_1 \odot \Phi_2):= (\I_{\L(\calX)} \otimes \Phi_1 \otimes \Phi_2)(A_{\JP}) ,
\end{equation}
where $A_{\JP}$ is the Choi representation of the Jordan product of two identity channels (which we discuss in detail in Section~\ref{sec:jordanproduct}). By replacing $A_{\JP}$ in~\eqref{intro:JP} with another matrix $A \in \Herm(\calX \otimes \calX_1 \otimes \calX_2)$ satisfying
\begin{equation}
\label{eq:conditions-for-A-to-define-gen-Jordan-product}
\Tr_{\calX_1}(A) = \Tr_{\calX_2}(A) = \Tr_{\calX_1}(A_{\JP}) = \Tr_{\calX_2}(A_{\JP}) ,
\end{equation}
one obtains another linear map---which we denote by $\Phi_1 \odot_A \Phi_2$. Each such matrix $A$ provides another potential compatibilizing channel, so long as the corresponding map $\Phi_1 \odot_A \Phi_2$ is completely positive. If there exists at least one choice of Hermitian matrix $A$ satisfying \eqref{eq:conditions-for-A-to-define-gen-Jordan-product} such that the map $\Phi_1 \odot_A \Phi_2$ is completely positive, we say that the channels are \emph{Jordan compatible}. What is surprising is that this sufficient condition is also necessary in most cases, as described in the following two results.

\begin{result}[Informal, see Theorem~\ref{thm:jordan-gen-inverseIFF} for a formal statement]
If the channels $\Phi_1$ and $\Phi_2$ are invertible as linear maps, then they are compatible if and only if they are Jordan compatible. (Note that the inverses do not have to be quantum channels themselves.)
\end{result}

The requirement that both channels are invertible is not too restrictive, as indicated by the following result.

\begin{result}[Informal, see Theorem~\ref{thm:geometry-fullMeasure} for a formal statement]
{The set of Jordan-compatible pairs of channels has full measure as a subset of all compatible pairs.}
\end{result}

The Jordan product for quantum channels possesses many nice properties which are discussed in detail in Section~\ref{sec:jordanproduct}.

%%%%%%%%%%%%%%%%%%%%%%%%%%% 

\subsubsection*{Semidefinite programs for channel compatibility.}

The task of determining whether two channels $\Phi_1$ and $\Phi_2$ are compatible can be formulated as the following semidefinite programming feasibility problem:
\begin{equation}
\label{eq:channelcompatibility-sdp}
\begin{split}
\text{find:}\quad & X\in\Pos(\calX\otimes\calY_1\otimes\calY_2)\\
\text{satisfying:}\quad & \Tr_{\calY_2}(X) = J(\Phi_1)\\
& \Tr_{\calY_1}(X) = J(\Phi_2).
\end{split}
\end{equation}
This formulation can be found by using the Choi representations of each channel and their compatibilizer (where $X$ is the Choi representation of the desired compatibilizer).

The Jordan-compatibility of two channels can be similarly determined via the following semi\-{}definite programming feasibility problem:
\begin{equation}
\label{eq:jordan-product-sdp}
\begin{split}
\text{find:}\quad
& A\in\Herm(\calX\otimes\calX_1\otimes\calX_2)\\
\text{satisfying:}\quad
& \Tr_{\calX_1}(A) = \Tr_{\calX_1}(A_{\JP}) \\
& \Tr_{\calX_2}(A) = \Tr_{\calX_1}(A_{\JP}) \\
& (\I_{\L(\calX)} \otimes \Phi_1 \otimes \Phi_2)(A) \in\Pos(\calX\otimes\calY_1\otimes\calY_2) ,
\end{split}
\end{equation}
where $A_{\JP}$ is the matrix as determined in the discussion around~\eqref{intro:JP}.

The formulation in~\eqref{eq:channelcompatibility-sdp} has the advantage of being linear in the Choi representations of the channels---one could therefore keep them as variables and impose affine constraints on the channels. As a concrete example, one may ask whether there exist two qubit-to-qubit channels that are compatible and both unital. (We know that two identity channels do not satisfy these two conditions, but an identity channel and a completely depolarizing channel does.)

We use duality theory to show a few theorems of the alternative for the cases of compatibility and Jordan compatibility. For an example, we present one of two versions for compatibility, below.
\begin{result}[Informal, see Theorem~\ref{TotAV1} for a formal statement]
$\Phi_1$ and $\Phi_2$ are compatible
if and only if there does \emph{not} exist $Z_1$ and $Z_2$ such that
\begin{equation}
\Tr_{\calY_2}^*(Z_1) + \Tr_{\calY_1}^*(Z_2) \geq 0
\quad \text{ and } \quad \ip{Z_1}{J(\Phi_1)} + \ip{Z_2}{J(\Phi_2)} < 0.
\end{equation}
\end{result}
(Note that we define formally what the adjoint of the partial trace is later, but for now it can simply be viewed as a linear map.)

%%%%%%%%%%%%%%%%%%%%%%%%%%% 

\subsubsection*{Numerically testing qubit-to-qubit channels.}

We test various notions of compatibility for certain classes of qubit-to-qubit channels using the semidefinite programming formulations shown above. The family of channels that we consider are the \emph{partially dephasing-depolarizing} channels defined as
\begin{equation}
\Xi_{p,q} = (1-p-q)\I_{\L(\calX)}+p\Delta + q\Omega
\end{equation}
for parameters $p,q\in[0,1]$. Here $\I_{\L(\calX)}$ is the identity channel, $\Delta$ is the completely dephasing channel, and $\Omega$ is the completely depolarizing channel, which are defined by the equations
\begin{equation}
\label{qubitchannels}
\I_{\L(\calX)}(X) = X,\qquad
\Delta(X) = \sum_{i=1}^{\op{dim}(\calX)}\Tr(E_{i,i}X)E_{i,i},\qquad
\text{and}\qquad
\Omega(X) = \frac{\Tr(X)}{\op{dim}(\calX)}\I_{\calX}
\end{equation}
holding for all operators $X$, where $\I_\calX$ is the identity matrix.

In Section \ref{sec:qubitchanels}, we investigate the values $(p,q) \in [0,1] \times [0,1]$ for which the channel $\Xi_{p,q}$ is $k$-self-compatible for $k\in\{ 1, \ldots, 10 \} \cup \{ \infty \}$. (Recall that the condition that $\Xi_{p,q}$ is measure-and-prepare is equivalent to the condition that it is $k$-self\-compatible for all $k$.)

We also examine the values of $(p,q) \in [0,1] \times [0,1]$ for which $\Xi_{p,q} \odot \Xi_{p,q}$ is completely positive. This region turns out to be marginally smaller than the region where $\Xi_{p,q}$ is self-compatible, thus reinforcing the need for our generalization of the Jordan product. In fact, the channel $\Xi_{p,q}$ is invertible when $p, q > 0$ satisfies $p + q < 1$, so self-compatibility for this channel can be examined using (generalized) Jordan products for almost all values of $p$ and $q$.

Finally, we depict the region of values $(q_0, q_1) \in [0,1] \times [0,1]$ for which the pairs of channels $(\Xi_{0,q_0}, \Xi_{0,q_1}) = (\Omega_{q_0}, \Omega_{q_1})$ are compatible and when the Jordan product $\Omega_{q_0} \odot \Omega_{q_1}$ is completely positive. It turns out that the values of $(q_0, q_1)$ for which $\Omega_{q_0} \odot \Omega_{q_1}$ is completely positive contains some nontrivial instances while simultaneously missing other trivial instances of compatible pairs. 
This illustrates that the (standard) Jordan product provides an interesting sufficient condition for compatibility.

%%%%%%%%%%%%%%%%%%%%%%%%%%% 

\subsection{Paper organization}

The paper is organized as follows. In the next section, we introduce the notation and background for the work in this paper. In particular, we discuss quantum measurements and measure-and-prepare (i.e., entanglement-breaking) channels in Subsection~\ref{Sect22} and the compatibility of states, measurements, and channels in Subsection~\ref{Sect23}. We then study the equivalence between the quantum state marginal problem and the quantum channel marginal problem in Section~\ref{Sect3}. In Section~\ref{Sect4}, we study the compatibility of measure-and-prepare channels and show that sometimes they can be compatible without having a measure-and-prepare compatibilizer. In Section~\ref{sec:jordanproduct}, we introduce the Jordan product of quantum channels and study its connections to compatibility in depth. Section~\ref{section:geometry} continues our analysis of Jordan compatibility by studying the geometry of pairs of compatible and Jordan-compatible channels. In particular, we show that almost all compatible pairs of channels are Jordan compatible. We then study the question of determining compatibility through the lens of semidefinite programming and its duality theory in Section~\ref{SectSDP}. In Section~\ref{SectNumerical}, we study compatibility of pairs of qubit channels using numerical methods for solving such semidefinite programs. Lastly, we conclude and discuss open problems in Section~\ref{SectConclusions}.

%%%%%%%%%%%%%%%%%%%%%%%%%%% 
%%%%%%%%%%%%%%%%%%%%%%%%%%% 

\section{Notation and background}
\label{section:background}

In this section we summarize the terminology and notation used in this paper and review some of the basic known facts and results concerning compatibility in quantum theory. Note that we mostly use the same notation as in~\cite{Watrous-QI}.

%%%%%%%%%%%%%%%%%%%%%%%%%%% 

\subsection{Notation}

In this paper we work with (finite-dimensional) complex Euclidean spaces, which we may assume to be of the form $\complex^n$ for some positive integer $n$. We reserve the notation $\calX, \calX_1, \ldots, \calX_n$, $\calY, \calY_1, \ldots, \calY_n$ for complex Euclidean spaces. We use the following notation for frequently used sets of linear operators and linear maps of operators.
\begin{itemize}
\item $\L(\calX,\calY)$ is the space of linear operators from $\calX$ to $\calY$, and we write $\L(\calX)$ when $\calX=\calY$.
\item $\U(\calX,\calY)$ is the set of isometries from $\calX$ to $\calY$, while $\U(\calX)$ is the set of unitaries acting on $\calX$.
\item $\Herm(\calX)$ is the set of Hermitian operators acting on $\calX$.
\item $\Pos(\calX)$ is the set of Hermitian, positive semidefinite operators acting on $\calX$.
\item $\D(\calX)$ is the set of density matrices acting on $\calX$.
\item $\T(\calX, \calY)$ is the set of linear maps from $\L(\calX)$ to $\L(\calY)$, and we write $\T(\calX)$ when $\calX=\calY$.
\item $\C(\calX, \calY)$ is the set of quantum channels from $\calX$ to $\calY$, and we write $\C(\calX)$ when $\calX=\calY$.
\item $\Sep(\calX:\calY)$ is the set of separable operators acting on $\calX \otimes \calY$.
\item $\PPT(\calX:\calY)$ is the set of PPT operators acting on $\calX \otimes \calY$. (An operator $X\in\L(\calX\otimes\calY)$ is said to be PPT (positive partial transpose) if both $X$ and its partial transpose $X^{\t_\calX}$ are positive semidefinite.)
\end{itemize}

Every complex Euclidean space $\calX = \complex^n$ comes equipped with an inner product defined as $\ip{x}{y} = \sum_{i=1}^n\overline{x_i}y_i$ for every $x,y\in\calX$. For an operator $A \in \L(\calX,\calY)$, its adjoint $A^*\in\L(\calY,\calX)$ is the unique operator that satisfies $\ip{Ax}{y}=\ip{x}{A^*y}$ for every $x\in\calX$ and $y\in\calY$. The space $\L(\calX,\calY)$ is itself a complex Euclidean space with Hilbert--Schmidt inner product given by
\begin{equation}
\ip{A}{B}=\Tr(A^*B)
\end{equation}
for every $A,B\in\L(\calX,\calY)$. A self-adjoint operator $\Pi\in\L(\calX)$ is a projection operator if it satisfies $\Pi^2=\Pi$. For a subspace $\mathcal{V} \subseteq \calX$, we denote the projection operator onto the subspace $\mathcal{V}$ as the operator $\Pi_\mathcal{V}$. For an operator $X \in \L(\calX)$ we denote the image of $X$ as the subspace
\begin{equation}
\op{im}(X) = \{ Xu : u \in \calX \} \subseteq \calX.
\end{equation}
We use the notation $A \geq B$ to mean that the operator $A - B$ is positive semidefinite. The identity operator (i.e., identity matrix) on $\calX$ is denoted $\I_\calX$.

A quantum channel is a completely positive and trace-preserving linear map $\Phi\in\T(\calX,\calY)$. The identity channel on $\calX$ is denoted $\I_{\L(\calX)}$. Quantum channels are often presented in terms of their \emph{Choi representations}. For a linear map $\Phi\in\T(\calX,\calY)$, the Choi representation of $\Phi$ is the operator $J(\Phi) \in \L(\calX\otimes\calY)$ defined as
\begin{equation}
J(\Phi) = \sum_{i, j = 1}^{\dim(\calX)} E_{i,j} \otimes \Phi(E_{i,j}).
\end{equation}
The Choi representation satisfies the following properties:
\begin{itemize}
\item $\Phi$ is completely positive if and only if $J(\Phi) \geq 0$.
\item $\Phi$ is trace preserving if and only if $\Tr_\calY( J(\Phi)) = \I_\calX$, where $\Tr_\calY$ denotes the partial trace over~$\calY$.
\item $\Phi$ is entanglement-breaking if and only if $J(\Phi) \in \Sep(\calX : \calY)$.
\end{itemize}

One recovers the action of the linear map $\Phi \in \C(\calX, \calY)$ on a matrix $X \in \L(\calX)$ from its Choi representation by the equation
\begin{equation}
\Phi(X) = \Tr_{\calX}\bigl((X^\t\otimes\I_{\calY})J(\Phi)\bigr),
\end{equation}
where $X^\t$ denotes the transpose of an operator $X$. Note that the Choi representation of a linear map $\Phi\in\T(\calX,\calY)$ can also be defined via the equation
\begin{equation}
J(\Phi) = (\I_{\L(\calX)}\otimes\Phi)(J(\I_{\L(\calX)})).
\end{equation}
For a linear map $\Phi \in \T(\calX, \calY)$, its adjoint $\Phi^* \in \T(\calY, \calX)$ is the unique linear map that satisfies $\ip{\Phi(X)}{Y} = \ip{X}{\Phi^*(Y)}$ for every $X\in\L(\calX)$ and $Y\in\L(\calY)$. Note that the adjoint of a partial trace map is equivalent to tensoring with an identity operator. For example, the adjoint of the map $\Tr_{\calY} \in \T(\calX \otimes \calY, \calX)$ is given as
\begin{equation}
\Tr_{\calY}^*(X) = X \otimes \I_{\calY}
\end{equation}
for every $X \in \L(\calX)$. \\

%%%%%%%%%%%%%%%%%%%%%%%%%% 

\subsection{Background: Quantum measurements and measure-and-prepare channels}
\label{Sect22}

A \emph{measurement} in quantum theory is a procedure that assigns probabilities to quantum states. Every measurement on a complex Euclidean space $\calX$ is described by a \emph{positive-operator valued measure} (POVM)---a collection of positive semidefinite operators $\{M_i\,:\, i\in\Gamma\}\subset\Pos(\calX)$ for some finite set $\Gamma$ of \emph{measurement outcomes} that satisfies
\begin{equation}
\sum_{i\in\Gamma} M_i = \I_\calX.
\end{equation}
If the state of the system is described by a given density operator $\rho \in \D(\calX)$, the probability of obtaining a particular outcome $i\in\Gamma$ when measuring the system is equal to $\ip{M_i}{ \rho}$. Without loss of generality, the finite set of measurement outcomes may be assumed to be of the form $\Gamma=\{ 1, \dots, m \}$ for some positive integer $m$.

A \emph{projective-valued measure} (PVM) is a POVM of the form $\{\Pi_i\,:\, i\in\Gamma\}$ such that the operator $\Pi_i$ is a projection operator for each outcome $i\in\Gamma$. One necessarily has that $\Pi_i\Pi_j=0$ for distinct outcomes $i, j \in \Gamma$.

Every POVM $\{M_1,\dots,M_m\}$ has an associated \emph{measurement channel}, which is the linear map $\Phi_M\in\C(\calX,\calY)$ defined as
\begin{equation}
\label{eq:measchannel}
\Phi_M(X) = \sum_{i = 1}^m \ip{M_i}{X} E_{i,i}
\end{equation}
for every $X\in\L(\calX)$, where one defines $\calY=\complex^m$. It is easy to check that $\Phi_M$ is completely positive and trace preserving. Measure-and-prepare channels are a generalization of the idea above and are considered often in this work.

\begin{definition}[Measure-and-prepare channel]
A \emph{measure-and-prepare} (or \emph{entanglement-breaking}) \linebreak channel $\Phi\in \C(\calX, \calY)$ is a channel for which there is a choice of POVM $\{M_1,\dots,M_m\}\subset\Pos(\calX)$ and density matrices $\rho_1,\dots,\rho_m \in \D(\calY)$ such that $\Phi$ may be expressed as
\begin{equation}
\Phi(X) = \sum_{i=1}^m \ip{M_i}{X}\, \rho_i
\end{equation}
for every $X\in\L(\calX)$. The channel $\Phi$ is said to be \emph{generated by} the POVM $\{M_1,\dots,M_m\}$ and the density matrices $\rho_1, \ldots, \rho_m$.
\end{definition}

Note that there is not typically a unique choice of POVM or collection of density operators that generate a given measure-and-prepare channel. Recall that the condition that a channel ${\Phi\in\C(\calX,\calY)}$ be measure-and-prepare is equivalent to the condition that $J(\Phi)\in\op{Sep}(\calX:\calY)$ (i.e., its Choi representation is a separable operator).

%%%%%%%%%%%%%%%%%%%%%%%%%% 

\subsection{Compatibility of states, measurements, and channels}
\label{Sect23}

We now consider the problem of compatibility for quantum states.

\begin{definition}[Quantum state marginal problem]
Two states $\rho_1 \in \D(\calX \otimes \calY_1)$ and $\rho_2 \in \D(\calX \otimes \calY_2)$ are said to be \emph{compatible} if there exists another state $\rho \in \D(\calX \otimes \calY_1 \otimes \calY_2)$ satisfying
\begin{equation}
\Tr_{\calY_2}(\rho) = \rho_1
\quad \text{ and } \quad
\Tr_{\calY_1}(\rho) = \rho_2.
\end{equation}
The state $\rho$ is said to be the \emph{joint state} for $\rho_1$ and $\rho_2$. Determining whether two overlapping states are compatible is known as the \emph{quantum state marginal problem}.
\end{definition}

Suppose that $\{M_1,\dots,M_m\}\subset\Pos(\calX)$ and $\{N_1,\dots,N_n\}\subset\Pos(\calX)$ are POVMs for some complex Euclidean space $\calX$. One can ask whether it is possible to obtain the statistics of both measurements as course-grainings from a single measurement. If so, the measurements are said to be compatible.

\begin{definition}[Measurement compatibility]
Two POVMS $\{M_1,\dots,M_m\}\subset\Pos(\calX)$ and \linebreak $\{N_1,\dots,N_n\}\subset\Pos(\calX)$ are said to be \emph{compatible} if there exists another POVM of the form
\begin{equation}
\{P_{i,j}\,:\, 1\leq i\leq m,\, 1\leq j\leq n\}\subset \Pos(\calX)
\end{equation}
satisfying
\begin{align}
M_i = \sum_{j=1}^n P_{i,j}\qquad \text{and}\qquad N_j = \sum_{i=1}^m P_{i,j}
\end{align}
for each choice of indices $i\in\{1,\dots,m\}$ and $j\in\{1,\dots,n\}$. A POVM $P$ satisfying these conditions is said to be a \emph{compatibilizer} (or \emph{compatibilizing measurement}) for $M$ and $N$.
\end{definition}

Compatibility of POVMs has been investigated before (see \cite{HeinosaariMiyaderaZiman-review} for a review). Analogous to the compatibility of POVMs, one may consider whether two channels may obtained as the marginals of another larger channel. This notion of compatibility is defined below.

\begin{definition}[Channel compatibility] \label{def:intro-channelComapatibility}
Two channels $\Phi_1 \in \C(\calX, \calY_1)$ and $\Phi_2 \in \C(\calX, \calY_2)$ are \emph{compatible} if there exists a channel $\Phi \in \C(\calX, \calY_1 \otimes \calY_2)$ satisfying
\begin{align}
\Phi_1(X) = \Tr_{\calY_2} ( \Phi(X) ) \qquad\text{and}\qquad \Phi_2(X) = \Tr_{\calY_1} ( \Phi(X) )
\end{align}
for every $X\in\L(\calX)$. The channel $\Phi$ is said to be a \emph{compatibilizer} (or \emph{compatibilizing channel}) for $\Phi_1$ and $\Phi_2$.
\end{definition}

Compatibility of channels has also been investigated recently (see, e.g.,~\cite{HeinosaariMiyadera-channels, Plavala-channels, Kuramochi-channels, Kuramochi-ETBchannels, Haapasalo-compat}). Note that two compatible channels do not necessarily possess a unique compatibilizer.\footnote{For a simple example of a pair of channels that do not possess a unique compatibilizer, consider the completely dephasing channel defined by the equation $\Omega(X) = \Tr(X)\I_{\calX}/\op{dim}(\calX)$. This channel is trivially self-compatible, but there are infinitely many choices compatibilizing channel for two copies of $\Omega$. Indeed, any channel $\Phi\in\T(\calX,\calX_1\otimes\calX_2)$ having the form $\Phi(X)=\Tr(X)\rho$, where $\calX=\calX_1=\calX_2$ and $\rho\in\D(\calX_1\otimes\calX_2)$ is a maximally entangled state, will compatibilize two copies of $\Omega$. This follows from the fact that $\Tr_{\calX_1}(\rho)=\Tr_{\calX_2}(\rho) = \I_{\calX}/\op{dim}(\calX)$ holds for every choice of maximally entangled state $\rho\in\D(\calX_1\otimes\calX_2)$.} A channel $\Phi\in\C(\calX,\calY)$ is said to be \emph{self-compatible} if two copies of $\Phi$ are compatible.

It is natural to generalize the notion of compatibility to more than two channels. In particular, for $k \geq 2$, a collection of channels $\Phi_1\in\C(\calX,\calY_1),\dots,\Phi_k\in\C(\calX,\calY_k)$ are said to be compatible if there exists a compatibilizing channel $\Phi\in\C(\calX,\calY_1\otimes\cdots\otimes\calY_k)$ such that
\begin{equation}
\Tr_{\calY_1 \otimes \cdots \otimes \calY_k \setminus \calY_a}(\Phi(X)) = \Phi_a(X)
\end{equation}
holds for every $X \in \L(\calX)$ and each index $a \in \{ 1, \ldots, k \}$. A channel $\Phi \in \C(\calX, \calY)$ is said to be \emph{$k$-self-compatible} if $k$ copies of the channel $\Phi$ are compatible. It is known that a channel is measure and prepare if and only if it is $k$-self-compatible for every positive integer $k \in \mathbb{N}$.

%%%%%%%%%%%%%%%%%%%%%%%%%%%%%%%%%%%%%%%%%%%%%%%%%%%%%%

\section{Equivalence of the quantum state and quantum channel versions of the marginal problem}
\label{Sect3}

It is straightforward to see that the marginal problem for quantum states generalizes the problem of determining compatibility for quantum channels. Indeed, two channels are compatible precisely when their corresponding normalized Choi representations are compatible as states \cite{Plavala-channels, HaapasaloKraftMiklinUola-marginalProblem}. To see this, suppose a pair of channels $\Phi_1 \in \C(\calX, \calY_1)$ and $\Phi_2 \in \C(\calX, \calY_2)$ are compatible with some choice of compatibilizing channel $\Phi \in \C(\calX, \calY_1 \otimes \calY_2)$. Consider now the condition that
\begin{equation}
\label{Choistates}
\Tr_{\calY_2} \circ \, \Phi = \Phi_1.
\end{equation}
The left- and right-hand side of this equality must be equal as maps, so their Choi representations must also coincide. Let $\rho$, $\rho_1$, and $\rho_2$ be the states defined by normalizing versions of the Choi representations of $\Phi$, $\Phi_1$, and $\Phi_2$, respectively, i.e.,
\begin{equation}
\rho = \frac{1}{\op{dim}(\calX)}J(\Phi), \qquad \rho_1 = \frac{1}{\op{dim}(\calX_1)}J(\Phi_1), \qquad \text{and}\qquad \rho_2 = \frac{1}{\op{dim}(\calX)}J(\Phi_2).
\end{equation}
Then the equality in~\eqref{Choistates} is equivalent to the condition tat
\begin{equation}
\Tr_{\calY_2}(\rho) = \rho_1.
\end{equation}
Similarly, the condition that $\Tr_{\calY_1} \circ \, \Phi = \Phi_2$ is equivalent to the condition tat
\begin{equation}
\Tr_{\calY_1}(\rho) = \rho_2.
\end{equation}
Hence the question of compatibility for a pair of quantum channels can be reduced to the corresponding marginal problem for the corresponding states defined as the normalized Choi representations.

It is also the case that the state marginal problem can be reduced to the channel compatibility problem. In particular, every quantum state marginal problem can be reformulated as the problem of determining compatibility of some pair of channels. This has already been established in \cite{HaapasaloKraftMiklinUola-marginalProblem} in the case when the marginals of the states have full rank. In this section, we prove this reduction holds even in the more general case when the marginals do not necessarily have full rank.

Let $\rho_1 \in \D(\calX \otimes \calY_1)$ and $\rho_2 \in \D(\calX \otimes \calY_2)$ be a pair of compatible quantum states. It must be the case that
\begin{equation}
\Tr_{\calY_1}(\rho_1) = \Tr_{\calY_1 \otimes \calY_2}(\rho) = \Tr_{\calY_2}(\rho_2),
\end{equation}
where $\rho$ is some choice of joint state. Therefore we may assume without loss of generality that
\begin{equation}
\Tr_{\calY_1}(\rho_1) = \Tr_{\calY_2}(\rho_2),
\end{equation}
as otherwise the pair of states would be clearly not compatible. If it holds further that these marginals are both equal to $\Tr_{\calY_1}(\rho_1) = \Tr_{\calY_2}(\rho_2) = \I_\calX/\op{dim}(\calX)$ (i.e., the maximally mixed state), one may simply view $\rho_1$ and $\rho_2$ as the normalizations of the Choi representations for some channels $\Phi_1 \in \C(\calX, \calY_1)$ and $\Phi_2 \in \C(\calX, \calY_2)$. In this case, the marginal problem for $\rho_1$ and $\rho_2$ is trivially equivalent to the problem of determining the compatibility of the channels $\Phi_1$ and $\Phi_2$.

The more general case, when the marginal state $\Tr_{\calY_1}(\rho_1) = \Tr_{\calY_2}(\rho_2)$ is not proportional to the identity operator, is considered in Theorem \ref{thm:state--channel-equivalence}. We first prove the following useful lemma, which will assist us in considering the case when the marginal does not have full rank.

\begin{lemma}
\label{lemma:quantumMarginal-subspace}
Let $A \in \Pos(\calX \otimes \calY)$ be a positive operator and let $\Pi := \Pi_{\op{im}(\Tr_{\calY}(A))}$ be the projection operator onto $\op{im}(\Tr_{\calY}(A)) \subseteq \calX$. It holds that
\begin{equation}
A = (\Pi \otimes \I_{\calY}) \, A \, (\Pi\otimes \I_{\calY}).
\end{equation}
\end{lemma}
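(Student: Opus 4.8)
The plan is to show that the image of $A$ (which coincides with its support, since $A$ is positive semidefinite) is contained in $\op{im}(\Pi)\otimes\calY$. The stated identity then follows from the general fact that, for a Hermitian projection $P$ and a Hermitian operator $A$, one has $A = PAP$ exactly when $\op{im}(A)\subseteq\op{im}(P)$; here $P = \Pi\otimes\I_\calY$ has image $\op{im}(\Pi)\otimes\calY$. Concretely, I would introduce the complementary projection $Q := \I_\calX - \Pi$, which projects onto $\op{ker}(\Tr_\calY(A)) = \op{im}(\Tr_\calY(A))^\perp$, and reduce the lemma to proving that $A\,(Q\otimes\I_\calY) = 0$.

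First I would form the operator $B := (Q\otimes\I_\calY)\,A\,(Q\otimes\I_\calY)$. Since $A\geq 0$ and $Q\otimes\I_\calY$ is Hermitian, $B$ is again positive semidefinite. The crucial computation is its partial trace: using the identity $\Tr_\calY\bigl((M\otimes\I_\calY)\,A\,(N\otimes\I_\calY)\bigr) = M\,\Tr_\calY(A)\,N$ for operators $M,N$ on $\calX$, one gets $\Tr_\calY(B) = Q\,\Tr_\calY(A)\,Q$, and this vanishes because $Q$ projects onto the kernel of the positive operator $\Tr_\calY(A)$.

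Next I would invoke the elementary observation that a positive semidefinite operator with vanishing partial trace is itself zero: indeed $\Tr(B) = \Tr\bigl(\Tr_\calY(B)\bigr) = 0$ together with $B\geq 0$ forces $B = 0$. Finally, writing $A = C^*C$ (for instance $C = A^{1/2}$), the vanishing of $B = \bigl(C(Q\otimes\I_\calY)\bigr)^*\bigl(C(Q\otimes\I_\calY)\bigr)$ gives $C(Q\otimes\I_\calY) = 0$, hence $A\,(Q\otimes\I_\calY) = C^*C\,(Q\otimes\I_\calY) = 0$. Equivalently $A\,(\Pi\otimes\I_\calY) = A$, and taking adjoints (both $A$ and $\Pi\otimes\I_\calY$ being Hermitian) yields $(\Pi\otimes\I_\calY)\,A = A$, so that $(\Pi\otimes\I_\calY)\,A\,(\Pi\otimes\I_\calY) = A$, as claimed.

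The step to be careful about is the passage from $B = 0$ to $A\,(Q\otimes\I_\calY) = 0$: one cannot naively cancel the projections, and the clean route is the $C^*C$ factorization above (equivalently, noting that $B=0$ says precisely that $C(Q\otimes\I_\calY)$ has trivial image). Everything else is routine bookkeeping; the only genuinely quantum ingredient is the interplay between positivity and the partial trace that lets $\Tr_\calY(B)=0$ propagate all the way to $B=0$.
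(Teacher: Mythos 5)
Your proof is correct and takes essentially the same route as the paper's: both reduce the lemma to showing that $A\,\bigl((\I_{\calX}-\Pi)\otimes\I_{\calY}\bigr)=0$, and both deduce this from the positivity of $A$ together with the vanishing of the trace quantity $\ip{A}{(\I_{\calX}-\Pi)\otimes\I_{\calY}}$ (which in your version appears as the vanishing partial trace of the sandwiched operator). The only difference is one of packaging: the paper directly invokes the standard fact that two positive semidefinite operators with zero Hilbert--Schmidt inner product have zero product, whereas you prove that fact inline via the sandwich $B$ and the square-root factorization $A=C^*C$.
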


\begin{proof}
Note that
\begin{equation}
\ip{A}{(\I_{\calX}-\Pi)\otimes\I_{\calY}} = \ip{\Tr_\calY(A)}{\I_{\calX}-\Pi} = 0.
\end{equation}
As both $A$ and $(\I_{\calX}-\Pi)\otimes\I_{\calY}$ are positive operators, it follows that $A((\I_{\calX}-\Pi)\otimes\I_{\calY}) = 0 $ and $((\I_{\calX}-\Pi)\otimes\I_{\calY})A = 0$. The desired result follows.
\end{proof}

We are now ready to show that the marginal problem for any pair of states, $\rho_1\in\D(\calX\otimes\calY_1)$ and $\rho_2\in\D(\calX\otimes\calY_2)$, can be reduced to the compatibility problem for a particular pair of channels, $\Phi_1\in\C(\calX,\calY_1)$ and $\Phi_2\in\C(\calX,\calY_2)$ and.

\begin{theorem}
\label{thm:state--channel-equivalence}
Let $\rho_1\in\D(\calX\otimes\calY_1)$ and $\rho_2\in\D(\calX\otimes\calY_2)$ be states and suppose there is a state $\sigma\in\D(\calX)$ satisfying
\begin{equation}
\sigma = \Tr_{\calY_1}(\rho_1) = \Tr_{\calY_2}(\rho_2).
\end{equation}
Let $\Phi_1\in\T(\calX,\calY_1)$ and $\Phi_2\in\T(\calX,\calY_2)$ be the linear maps whose Choi representations may be expressed as
\begin{equation}
\label{eq:eq1}
\begin{split}
J(\Phi_1) &= (\sigma^{-\frac{1}{2}}\otimes\I_{\calY_1})\rho_1(\sigma^{-\frac{1}{2}}\otimes\I_{\calY_1}) + \frac{1}{\op{dim}(\calY_1)}(\I_\calX-\Pi_{\op{im}(\sigma)})\otimes\I_{\calY_1}\\\text{and}\qquad J(\Phi_2) &= (\sigma^{-\frac{1}{2}}\otimes\I_{\calY_2})\rho_2(\sigma^{-\frac{1}{2}}\otimes\I_{\calY_2}) + \frac{1}{\op{dim}(\calY_2)}(\I_\calX-\Pi_{\op{im}(\sigma)})\otimes\I_{\calY_2}
\end{split}
\end{equation}
(where we interpret $\sigma^{-\frac{1}{2}}$ as the Moore--Penrose pseudoinverse of $\sigma^{\frac{1}{2}}$ if $\sigma$ is not invertible). The maps $\Phi_1$ and $\Phi_2$ are channels. Moreover, the operators $\rho_1$ and $\rho_2$ are compatible as states if and only if $\Phi_1$ and $\Phi_2$ are compatible as channels.
\end{theorem}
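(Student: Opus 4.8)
The plan is to verify the two assertions in turn: first that $\Phi_1$ and $\Phi_2$ are genuine channels, and then the stated equivalence. Throughout I write $\Pi = \Pi_{\op{im}(\sigma)}$, and I record at the outset the consequence of Lemma~\ref{lemma:quantumMarginal-subspace} (applied with $\calX$ as the first tensor factor): since $\Tr_{\calY_1}(\rho_1)=\Tr_{\calY_2}(\rho_2)=\sigma$, we have $\rho_1=(\Pi\otimes\I_{\calY_1})\rho_1(\Pi\otimes\I_{\calY_1})$ and $\rho_2=(\Pi\otimes\I_{\calY_2})\rho_2(\Pi\otimes\I_{\calY_2})$, i.e.\ $\rho_1,\rho_2$ are supported on $\op{im}(\sigma)\otimes\calY_i$. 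To see that $\Phi_1$ is a channel, note that $J(\Phi_1)$ is a sum of two positive operators (a congruence of $\rho_1\ge 0$ and a nonnegative multiple of the projection $\I_\calX-\Pi\ge 0$), so $\Phi_1$ is completely positive; and computing $\Tr_{\calY_1}(J(\Phi_1))$ term by term, using the pseudoinverse identity $\sigma^{-\frac12}\sigma\,\sigma^{-\frac12}=\Pi$ for the first term and $\Tr_{\calY_1}(\I_{\calY_1})=\op{dim}(\calY_1)$ for the second, gives $\Pi+(\I_\calX-\Pi)=\I_\calX$, so $\Phi_1$ is trace preserving. The same computation applies verbatim to $\Phi_2$.

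The crux of the argument is the pair of identities
\[
(\sigma^{\frac12}\otimes\I_{\calY_1})\,J(\Phi_1)\,(\sigma^{\frac12}\otimes\I_{\calY_1}) = \rho_1, \qquad (\sigma^{\frac12}\otimes\I_{\calY_2})\,J(\Phi_2)\,(\sigma^{\frac12}\otimes\I_{\calY_2}) = \rho_2.
\]
To prove the first, I conjugate each of the two defining terms of $J(\Phi_1)$ by $\sigma^{\frac12}\otimes\I_{\calY_1}$: the first term becomes $(\Pi\otimes\I_{\calY_1})\rho_1(\Pi\otimes\I_{\calY_1})=\rho_1$ by the support fact above, while the second term vanishes because $\sigma^{\frac12}(\I_\calX-\Pi)=\sigma^{\frac12}-\sigma^{\frac12}\Pi=0$ (as $\sigma^{\frac12}$ is supported on $\op{im}(\sigma)$). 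This single identity drives both directions of the equivalence.

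For the equivalence I treat the two directions symmetrically. If $\rho_1,\rho_2$ are compatible with joint state $\rho\in\D(\calX\otimes\calY_1\otimes\calY_2)$, I define a candidate compatibilizer by
\[
J(\Phi) = (\sigma^{-\frac12}\otimes\I_{\calY_1\otimes\calY_2})\,\rho\,(\sigma^{-\frac12}\otimes\I_{\calY_1\otimes\calY_2}) + \tfrac{1}{\op{dim}(\calY_1)\op{dim}(\calY_2)}\,(\I_\calX-\Pi)\otimes\I_{\calY_1\otimes\calY_2}.
\]
This is positive by the same reasoning as above, and applying $\Tr_{\calY_2}$ reproduces exactly the two terms of $J(\Phi_1)$ (using $\Tr_{\calY_2}(\rho)=\rho_1$ and $\Tr_{\calY_2}(\I_{\calY_2})=\op{dim}(\calY_2)$), so $\Tr_{\calY_2}(J(\Phi))=J(\Phi_1)$ and symmetrically $\Tr_{\calY_1}(J(\Phi))=J(\Phi_2)$; trace preservation of $\Phi$ then follows because its marginal $J(\Phi_1)$ is trace preserving. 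Conversely, given a compatibilizing channel $\Phi$ with Choi operator $X=J(\Phi)$, I set
\[
\rho = (\sigma^{\frac12}\otimes\I_{\calY_1\otimes\calY_2})\,X\,(\sigma^{\frac12}\otimes\I_{\calY_1\otimes\calY_2}),
\]
which satisfies $\rho\ge 0$ and $\Tr(\rho)=\Tr\bigl(\Tr_{\calY_1\otimes\calY_2}(X)\,\sigma\bigr)=\Tr(\sigma)=1$ since $\Tr_{\calY_1\otimes\calY_2}(X)=\I_\calX$. Pushing $\Tr_{\calY_2}$ through the congruence and invoking the crux identity yields $\Tr_{\calY_2}(\rho)=(\sigma^{\frac12}\otimes\I_{\calY_1})J(\Phi_1)(\sigma^{\frac12}\otimes\I_{\calY_1})=\rho_1$, and likewise $\Tr_{\calY_1}(\rho)=\rho_2$, so $\rho$ is a joint state. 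Note that no correction term is needed in this direction precisely because the correction term in each $J(\Phi_i)$ is annihilated by the conjugation.

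I expect the main obstacle to be the bookkeeping in the non-invertible case: one must consistently track the projection $\Pi$, justify $\sigma^{-\frac12}\sigma\,\sigma^{-\frac12}=\Pi$ and $\sigma^{\frac12}(\I_\calX-\Pi)=0$ for the Moore--Penrose pseudoinverse, and use Lemma~\ref{lemma:quantumMarginal-subspace} to confine $\rho_1,\rho_2$ to $\op{im}(\sigma)\otimes\calY_i$ so that the two congruences invert one another on the relevant support. Once these support facts are in hand, every marginal computation reduces to pulling $\sigma^{\pm\frac12}$ through a partial trace, which is routine.
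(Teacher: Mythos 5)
Your proposal is correct and follows essentially the same route as the paper's proof: the same verification that $\Phi_1,\Phi_2$ are channels, the same construction of the compatibilizer $J(\Phi)$ from a joint state $\rho$ (congruence by $\sigma^{-\frac12}$ plus the normalized correction term on $\ker(\sigma)$), and the same reverse construction $\rho=(\sigma^{\frac12}\otimes\I)J(\Phi)(\sigma^{\frac12}\otimes\I)$ justified via Lemma~\ref{lemma:quantumMarginal-subspace} and the pseudoinverse identities. Isolating the conjugation identity $(\sigma^{\frac12}\otimes\I_{\calY_i})J(\Phi_i)(\sigma^{\frac12}\otimes\I_{\calY_i})=\rho_i$ up front is only a cosmetic reorganization of the computations the paper performs inline.
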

\begin{proof}
Note that $\Phi_1$ and $\Phi_2$ are completely positive, as each of the terms in the sums on the right-hand sides of the equalities in \eqref{eq:eq1} are positive semidefinite operators. These maps are also trace preserving, as
\begin{equation}
\Tr_{\calY_1}(J(\Phi_1)) = \sigma^{-\frac{1}{2}}\sigma\sigma^{-\frac{1}{2}} + \I_{\calX}-\Pi_{\op{im}(\sigma)} = \I_{\calX}
\end{equation}
(where we note that $\Pi_{\op{im}(\sigma)} = \sigma^{-\frac{1}{2}}\sigma\sigma^{-\frac{1}{2}}$) and similarly $\Tr_{\calY_2}(J(\Phi_2))=\I_{\calX}$. We may conclude that the maps $\Phi_1$ and $\Phi_2$ are channels, as claimed.

Now suppose that $\rho_1$ and $\rho_2$ are compatible and let $\rho\in\D(\calX\otimes\calY_1\otimes\calY_2)$ be a choice of joint state for $\rho_1$ and $\rho_2$. Let $\Phi\in\T(\calX,\calY_1\otimes\calY_2)$ be the linear map whose Choi representation may be expressed as
\begin{equation}
\label{eq:choi}
\begin{split}
J(\Phi) &= (\sigma^{-\frac{1}{2}}\otimes\I_{\calY_1}\otimes\I_{\calY_2})\rho(\sigma^{-\frac{1}{2}}\otimes\I_{\calY_1}\otimes\I_{\calY_2}) \\
&+ \dfrac{1}{\op{dim}(\calY_1)\op{dim}(\calY_2)}(\I_\calX-\Pi_{\op{im}(\sigma)})\otimes\I_{\calY_1}\otimes\I_{\calY_2}.
\end{split}
\end{equation}
Note that $\Phi$ is completely positive, as its Choi representation is expressed in \eqref{eq:choi} as the sum of positive semidefinite operators. It is straightforward to verify that the map defined in this manner satisfies $\Tr_{\calY_2}(J(\Phi)) = J(\Phi_1)$ and $\Tr_{\calY_1}(J(\Phi)) = J(\Phi_2)$. Thus $\Phi$ is a compatibilizer for $\Phi_1$ and $\Phi_2$.

Suppose instead that $\Phi_1$ and $\Phi_2$ are compatible as channels and let $\Phi\in\C(\calX,\calY_1\otimes\calY_2)$ be a choice of compatibilizing channel. Define an operator $\rho\in\L(\calX\otimes\calY_1\otimes\calY_2)$ as
\begin{equation}
\rho = (\sigma^{\frac{1}{2}}\otimes\I_{\calY_1}\otimes\I_{\calY_2})J(\Phi)(\sigma^{\frac{1}{2}}\otimes\I_{\calY_1}\otimes\I_{\calY_2}).
\end{equation}
It is evident that $\rho$ is positive semidefinite. Recalling the definition of $\Phi_1$ from \eqref{eq:eq1}, one has that
\begin{align*}
\Tr_{\calY_2}(\rho) &= (\sigma^{\frac{1}{2}}\otimes\I_{\calY_1})\Tr_{\calY_2}(J(\Phi))(\sigma^{\frac{1}{2}}\otimes\I_{\calY_1})\nonumber\\
&= (\sigma^{\frac{1}{2}}\otimes\I_{\calY_1})J(\Phi_1)(\sigma^{\frac{1}{2}}\otimes\I_{\calY_1})\nonumber\\
& = (\Pi_{\op{im}(\sigma)}\otimes\I_{\calY_1})\rho_1(\Pi_{\op{im}(\sigma)}\otimes\I_{\calY_1})\\
& = \rho_1 ,
\end{align*}
where the equality in the third line follows from the facts that
\begin{equation}
\sigma^{\frac{1}{2}}\sigma^{-\frac{1}{2}} = \sigma^{-\frac{1}{2}}\sigma^{\frac{1}{2}} = \Pi_{\op{im}(\sigma)}
\qquad\text{and}\qquad
\sigma^{\frac{1}{2}}(\I_{\calX} - \Pi_{\op{im}(\sigma)})\sigma^{\frac{1}{2}} = 0,
\end{equation}
and equality in the fourth line follows from Lemma \ref{lemma:quantumMarginal-subspace}. The proof that that $\Tr_{\calY_1}(\rho) = \rho_2$ is analogous. It follows that $\rho$ is a joint state for $\rho_1$ and $\rho_2$, which completes the proof.
\end{proof}

%%%%%%%%%%%%%%%%%%%%%%%%%%%%%%%%%%%%%%%%%%%%%%%%%%%%%%

\section{Compatibility of measure-and-prepare channels}
\label{Sect4}

In this section, we consider some facts about the compatibility of measure-and-prepare channels (and also for the special case of measurement channels). We review some known conditions for compatibility of certain classes of measure-and-prepare channels. More importantly, we also consider the following question: If a pair of measure-and-prepare channels $\Phi_1\in\C(\calX, \calY_1)$ and $\Phi_2\in\C(\calX, \calY_2)$ are compatible, do they they necessarily possess a measure-and-prepare compatibilizer? We show that this is not the case by providing a counterexample.

%%%%%%%%%%%%%%%%%%%%%%%%%%%

\subsection{Compatibility with measurement channels}

Here we state a necessary and sufficient condition for a channel to be compatible with a fixed choice of measurement channel. Recall that a measurement channel is a special instance of a measure-and-prepare channel, where the measurement outcomes are recorded in the computation basis. The equivalence in the following proposition was shown in the more general context of generalized probability theories in \cite[Lemma 1]{HeinosaariLeppajarviPlavala-noFreeInformation}.

\begin{proposition}[\cite{HeinosaariLeppajarviPlavala-noFreeInformation}]\label{prop:meascompat}
Let $\{M_1,\dots,M_m\}\subset\Pos(\calX)$ be a POVM and let $\Phi_M$ be the measurement channel $\Phi_M$ generated by $M$,
\begin{equation}
\Phi_M(X) = \sum_{i=1}^m\ip{M_i}{X} E_{i,i}.
\end{equation}
Let $\Phi \in \C(\calX,\calY)$ be a channel. The following are equivalent.
\begin{enumerate}
\item The channels $\Phi$ and $\Phi_M$ are compatible.
\item There exist completely positive maps $\Phi_1,\dots,\Phi_m\in\T(\calX,\calY)$ satisfying $\Phi=\sum_{i=1}^m\Phi_i$ such that, for each $i\in\{1,\dots,m\}$, one has $\Tr(\Phi_i(X)) = \ip{M_i}{X}$ for every $X\in\L(\calX)$.
\end{enumerate}
\end{proposition}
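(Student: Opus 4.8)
The plan is to prove the two implications separately, exploiting the fact that the output of the measurement channel $\Phi_M$ is diagonal in the computational basis, which forces any compatibilizer to carry a block structure relative to the register recording the measurement outcome. Throughout, a compatibilizing channel for $\Phi$ and $\Phi_M$ is a channel $\Lambda \in \C(\calX, \complex^m \otimes \calY)$ satisfying $\Tr_{\calY}\circ\Lambda = \Phi_M$ and $\Tr_{\complex^m}\circ\Lambda = \Phi$, and the goal is to pass between such a $\Lambda$ and a decomposition of $\Phi$ into completely positive pieces.

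For the implication (2) $\Rightarrow$ (1), the natural approach is to build a compatibilizer directly from the given decomposition. Given completely positive maps $\Phi_1,\dots,\Phi_m$ with $\Phi = \sum_i \Phi_i$ and $\Tr(\Phi_i(X)) = \ip{M_i}{X}$, I would define $\Lambda\in\T(\calX,\complex^m\otimes\calY)$ by $\Lambda(X) = \sum_{i=1}^m E_{i,i}\otimes\Phi_i(X)$. This map is completely positive, being a sum of completely positive maps (each summand is the tensor of a fixed positive operator $E_{i,i}$ with a completely positive map), and it is trace preserving since $\Tr(\Lambda(X)) = \sum_i\Tr(\Phi_i(X)) = \sum_i\ip{M_i}{X} = \ip{\I_\calX}{X} = \Tr(X)$. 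The two partial traces then recover $\Tr_\calY(\Lambda(X)) = \sum_i\ip{M_i}{X}E_{i,i} = \Phi_M(X)$ and $\Tr_{\complex^m}(\Lambda(X)) = \sum_i\Phi_i(X) = \Phi(X)$, so $\Lambda$ is a compatibilizer. These are all routine verifications.

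The more interesting direction is (1) $\Rightarrow$ (2). Starting from any compatibilizer $\Lambda$, I would expand it in block form relative to the first register, writing $\Lambda(X) = \sum_{i,j=1}^m E_{i,j}\otimes\Lambda_{i,j}(X)$ for linear maps $\Lambda_{i,j}\in\T(\calX,\calY)$, and set $\Phi_i := \Lambda_{i,i}$ to be the diagonal blocks. Taking the $\complex^m$-marginal gives $\Phi = \Tr_{\complex^m}(\Lambda) = \sum_i \Lambda_{i,i} = \sum_i\Phi_i$, while the $\calY$-marginal $\Tr_\calY(\Lambda(X)) = \sum_{i,j}\Tr(\Lambda_{i,j}(X))E_{i,j}$ must equal the diagonal operator $\Phi_M(X) = \sum_i\ip{M_i}{X}E_{i,i}$; comparing diagonal entries yields $\Tr(\Phi_i(X)) = \ip{M_i}{X}$ for every $X$.

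The main obstacle—and essentially the only nontrivial point—is to verify that each diagonal block $\Phi_i = \Lambda_{i,i}$ is completely positive, since a priori the individual blocks of a completely positive map need not be. I would resolve this by writing the diagonal block as a compression, $\Lambda_{i,i}(X) = (\langle i|\otimes\I_\calY)\,\Lambda(X)\,(|i\rangle\otimes\I_\calY)$, so that $\Phi_i$ is the composition of $\Lambda$ with the map $Y\mapsto V_i Y V_i^*$ where $V_i = \langle i|\otimes\I_\calY$. Conjugation by a fixed operator is completely positive, and the composition of completely positive maps is completely positive, so each $\Phi_i$ is completely positive, which completes the argument.
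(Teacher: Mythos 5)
Your proof is correct, but there is no in-paper argument to compare it against: the paper states Proposition~\ref{prop:meascompat} purely as a citation to Lemma~1 of \cite{HeinosaariLeppajarviPlavala-noFreeInformation}, where the equivalence is proved in the more general framework of generalized probabilistic theories. What you have produced is therefore a self-contained, matrix-level proof of the cited result, and it is the natural one in the quantum setting. For (2)$\Rightarrow$(1) you assemble the compatibilizer $\Lambda(X)=\sum_{i=1}^m E_{i,i}\otimes\Phi_i(X)$, with trace preservation following from $\sum_i M_i=\I_\calX$; for (1)$\Rightarrow$(2) you extract the diagonal blocks $\Phi_i(X)=(\langle i|\otimes\I_\calY)\,\Lambda(X)\,(|i\rangle\otimes\I_\calY)$ of an arbitrary compatibilizer, and you correctly isolate the only delicate point—complete positivity of these blocks—resolving it by noting that compression $Y\mapsto V_iYV_i^*$ is completely positive, so each $\Phi_i$ is a composition of completely positive maps. (Off-diagonal blocks of $\Lambda$ need not be completely positive, but your argument never uses them, and the diagonal comparison with $\Phi_M(X)=\sum_i\ip{M_i}{X}E_{i,i}$ gives exactly the trace condition.) Two points of contact with the paper are worth noting: your condition $\Tr(\Phi_i(X))=\ip{M_i}{X}$ is precisely what the paper's remark following the proposition rephrases as $\Tr_\calY(J(\Phi_i))=M_i^\t$, so your block decomposition of $\Lambda$ is the map-level version of slicing the Choi matrix $J(\Lambda)$ by the classical register; and your direct construction in (2)$\Rightarrow$(1) is the same mechanism the paper uses elsewhere (e.g.\ the compatibilizer $(\Xi_\Pi\odot\Delta_\Pi)(X)=\sum_i\Pi_iX\Pi_i\otimes E_{i,i}$ in the proof of Theorem~\ref{thm:jordan-PVM-NaS}), so your argument fits seamlessly into the paper's toolkit. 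The tradeoff is scope: the cited lemma buys generality beyond quantum theory, while your proof buys self-containedness and elementary verifiability within the paper's own formalism.
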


We remark that the condition that $\Tr(\Phi_i(X)) = \ip{M_i}{X}$ holds for every $X\in\L(\calX)$ is equivalent to the condition that $\Tr_{\calY}(J(\Phi_i))=M_i^\t$. Indeed, one has that
\begin{equation}
\Tr(\Phi_i(X)) = \ip{J(\Phi_i)}{X^\t\otimes \I_{\calY}} = \ip{\Tr_\calY(J(\Phi_i))^\t}{X}.
\end{equation}
for every $X\in\L(\calX)$, and the desired equivalence follows.

%%%%%%%%%%%%%%%%

\subsection{Conditions for the compatibility of measure-and-prepare channels}

Here we review some known results regarding the compatibility of measure-and-prepare channels, which are stated rigorously in Proposition \ref{prop:compatible measurement channels}. The first statement is that two measure-and-prepare channels are compatible as channels whenever their underlying POVMs are compatible as measurements. A stronger result holds if the collections of preparation states defining the measure-and-prepare channels are assumed to be distinguishable. Allow us to take a moment to recall what this means. A collection of density operators $\{\rho_1, \ldots, \rho_m \} \subset \D(\calY)$ is said to be (\emph{perfectly}) \emph{distinguishable} if it holds that $\rho_i \rho_j=0$ for every pair of indices $i, j \in \{ 1, \ldots, n \}$ satisfying $i \neq j$. In particular, this means that there exists a POVM that can perfectly distinguish between these states and, moreover, that such a POVM can be given by the projection operators, i.e, it is given as $\{ \Pi_{\op{im}(\rho_1)},\dots,\Pi_{\op{im}(\rho_m)} \}$, where the projections must be pairwise orthogonal.

\begin{proposition}
\label{PropositionEB}\label{prop:compatible measurement channels}
Let $\Phi_1 \in \C(\calX, \calY_1)$ and $\Phi_2 \in \C(\calX, \calY_2)$ be measure-and-prepare channels having the form
\begin{equation}
\Phi_1(X) = \sum_{i=1}^m \ip{M_i}{X} \rho_i
\qquad\text{and}\qquad
\Phi_1(X) = \sum_{j=1}^n \ip{N_j}{X} \sigma_j
\end{equation}
for some choice of POVMs $\{M_1,\dots,M_m\}\subset\Pos(\calX)$ and $\{N_1,\dots,N_n\}\subset \Pos(\calX)$, and some collections of density matrices $\rho_1, \ldots, \rho_m \in \D(\calY_1)$ and $\sigma_1, \ldots, \sigma_n \in \D(\calY_2)$. The following statements hold:
\begin{enumerate}[label = (\arabic*)]
\item\label{item:compatible-measurements-channels-1} If $M$ and $N$ are compatible as POVMS then $\Phi_1$ and $\Phi_2$ are compatible as channels.
\item\label{item:compatible-measurements-channels-2} If the collections of density matrices $\rho_1, \ldots, \rho_m$ and $\sigma_1, \ldots, \sigma_n$ are each distinguishable then the channels $\Phi_1$ and $\Phi_2$ are compatible if and only if $M$ and $N$ are compatible as POVMs.
\end{enumerate}
\end{proposition}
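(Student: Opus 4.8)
The plan is to prove the two statements separately, obtaining the reverse (``if'') direction of statement~\ref{item:compatible-measurements-channels-2} as a consequence of statement~\ref{item:compatible-measurements-channels-1}. For statement~\ref{item:compatible-measurements-channels-1} I would start from a compatibilizing POVM $\{P_{i,j}\}$ for $M$ and $N$ (which exists by hypothesis) and simply prepare the corresponding tensor products, defining the measure-and-prepare channel
\begin{equation}
\Phi(X) = \sum_{i=1}^m\sum_{j=1}^n \ip{P_{i,j}}{X}\, \rho_i\otimes\sigma_j .
\end{equation}
This $\Phi$ is automatically a channel, since $\{P_{i,j}\}$ is a POVM and each $\rho_i\otimes\sigma_j\in\D(\calY_1\otimes\calY_2)$. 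A short computation using $\Tr(\sigma_j)=1$ and $M_i=\sum_j P_{i,j}$ gives $\Tr_{\calY_2}(\Phi(X))=\sum_i\ip{M_i}{X}\rho_i=\Phi_1(X)$, and symmetrically $\Tr_{\calY_1}(\Phi(X))=\Phi_2(X)$, so $\Phi$ compatibilizes $\Phi_1$ and $\Phi_2$. This argument also supplies the reverse direction of statement~\ref{item:compatible-measurements-channels-2}.

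For the forward direction of statement~\ref{item:compatible-measurements-channels-2}, the plan is to recover a compatibilizing POVM from a compatibilizing channel by exploiting distinguishability. Writing $\Pi_i := \Pi_{\op{im}(\rho_i)}$ and $Q_j := \Pi_{\op{im}(\sigma_j)}$, the distinguishability hypotheses $\rho_i\rho_{i'}=0$ and $\sigma_j\sigma_{j'}=0$ (for $i\ne i'$, $j\ne j'$) make $\{\Pi_i\}$ and $\{Q_j\}$ each families of pairwise orthogonal projections. Given a compatibilizer $\Phi\in\C(\calX,\calY_1\otimes\calY_2)$, I would define candidate POVM elements by $P_{i,j} := \Phi^*(\Pi_i\otimes Q_j)$. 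Each $P_{i,j}$ is positive semidefinite because $\Phi^*$ is completely positive (being the adjoint of a channel) and $\Pi_i\otimes Q_j\geq 0$. It then remains only to verify the marginal conditions $\sum_j P_{i,j}=M_i$ and $\sum_i P_{i,j}=N_j$; once these hold, $\sum_{i,j}P_{i,j}=\sum_i M_i=\I_\calX$ follows automatically, so $\{P_{i,j}\}$ is a genuine compatibilizing POVM and $M,N$ are compatible.

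To check $\sum_j P_{i,j}=M_i$, it suffices to evaluate $\ip{\sum_j P_{i,j}}{X}=\Tr\big[(\Pi_i\otimes Q)\Phi(X)\big]$ on all $X\in\Pos(\calX)$, where $Q:=\sum_j Q_j$. Splitting $Q=\I_{\calY_2}-(\I_{\calY_2}-Q)$, the term with $\I_{\calY_2}$ equals $\Tr[\Pi_i\Phi_1(X)]=\sum_k\ip{M_k}{X}\Tr[\Pi_i\rho_k]$, and distinguishability forces $\Tr[\Pi_i\rho_k]=\delta_{ik}$ (using $\Pi_i\rho_i=\rho_i$ and $\Pi_i\rho_k=0$ for $k\ne i$), leaving exactly $\ip{M_i}{X}$. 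The correction term vanishes: since $\Phi_2(X)=\Tr_{\calY_1}(\Phi(X))$ is supported on $\op{im}(Q)$, one has $\Tr[(\I_{\calY_1}\otimes(\I_{\calY_2}-Q))\Phi(X)]=\Tr[(\I_{\calY_2}-Q)\Phi_2(X)]=0$, whereupon the positive-operator argument of Lemma~\ref{lemma:quantumMarginal-subspace}---applied to the positive operators $\I_{\calY_1}\otimes(\I_{\calY_2}-Q)$ and $\Phi(X)$---gives $(\I_{\calY_1}\otimes(\I_{\calY_2}-Q))\Phi(X)=0$, so the correction term is zero. The identity $\sum_i P_{i,j}=N_j$ follows by the symmetric argument with the two output systems interchanged, and linearity extends both identities from $\Pos(\calX)$ to all of $\L(\calX)$.

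The main obstacle is precisely this last vanishing step. The projections $\{\Pi_i\}$ and $\{Q_j\}$ need not sum to the identities on $\calY_1$ and $\calY_2$---the preparation states may fail to span the output spaces---so one cannot naively replace $Q$ by $\I_{\calY_2}$. The resolution is to confine $\Phi(X)$ to the support of its marginal via the same positivity/zero-trace reasoning as in Lemma~\ref{lemma:quantumMarginal-subspace}. Distinguishability is exactly what guarantees both that $\Tr[\Pi_i\rho_k]=\delta_{ik}$ and that the relevant projections are orthogonal; without it, the extracted operators $P_{i,j}$ would in general fail to reproduce $M_i$ and $N_j$, which is consistent with the parenthetical remark that the converse fails for general (non-distinguishable) preparations.
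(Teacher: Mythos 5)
Your proof is correct, and its skeleton coincides with the paper's: statement~\ref{item:compatible-measurements-channels-1} via the compatibilizer $\Phi(X)=\sum_{i,j}\ip{P_{i,j}}{X}\,\rho_i\otimes\sigma_j$, and the forward direction of statement~\ref{item:compatible-measurements-channels-2} via the extracted operators $P_{i,j}=\Phi^*\bigl(\Pi_{\op{im}(\rho_i)}\otimes\Pi_{\op{im}(\sigma_j)}\bigr)$. The one place you genuinely diverge is exactly the obstacle you flag: the projections need not sum to the identity on the output spaces. The paper handles this with a without-loss-of-generality reparametrization of the data: it appends a zero POVM element $M_{m+1}=0$ together with the state $\rho_{m+1}=\Pi/\Tr(\Pi)$, where $\Pi=\I_{\calY_1}-\sum_{i}\Pi_{\op{im}(\rho_i)}$ (and similarly on the $\calY_2$ side); this leaves the channels unchanged, keeps the collections distinguishable, and forces the projections to resolve the identity, after which the marginal identities $\sum_j P_{i,j}=M_i$ and $\sum_i P_{i,j}=N_j$ follow in one line. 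You instead keep the possibly deficient $Q=\sum_j Q_j$ and annihilate the correction term $\I_{\calY_2}-Q$ by the support argument of Lemma~\ref{lemma:quantumMarginal-subspace}: the overlap $\ip{\I_{\calY_1}\otimes(\I_{\calY_2}-Q)}{\Phi(X)}=\ip{\I_{\calY_2}-Q}{\Phi_2(X)}$ vanishes, and positivity of both factors then forces $(\I_{\calY_1}\otimes(\I_{\calY_2}-Q))\Phi(X)=0$. Both resolutions are sound. The paper's WLOG yields a shorter computation but silently relies on the (easy) fact that a compatibilizer for the augmented POVMs restricts to one for the originals---which holds because $\sum_j P_{m+1,j}=M_{m+1}=0$ together with $P_{m+1,j}\geq 0$ forces each $P_{m+1,j}=0$. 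Your version never modifies the given data, so no such restriction step arises, at the price of the extra positivity argument---which, fittingly, is the very mechanism the paper isolates in Lemma~\ref{lemma:quantumMarginal-subspace}.
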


\begin{proof}
To prove statement \ref{item:compatible-measurements-channels-1}, suppose that $M$ and $N$ are compatible and let
\begin{equation}
\{P_{i,j}\,:\, 1\leq i\leq m,\, 1\leq j\leq n\}\subset\Pos(\calX)
\end{equation}
be a choice of POVM that compatibilizes $M$ and $N$. Define a channel $\Phi \in \C(\calX,\calY_1 \otimes \calY_2)$ as
\begin{equation}
\Phi(X) = \sum_{i=1}^n \sum_{j=1}^m \ip{P_{i,j}}{X} \rho_i \otimes \sigma_j
\end{equation}
for every $X\in\L(\calX)$. Taking partial traces, it may be verified that $\Phi$ compatibilizes $\Phi_1$ and $\Phi_2$.

One direction of statement \ref{item:compatible-measurements-channels-2} follows from statement \ref{item:compatible-measurements-channels-1}. To prove the reverse implication, assume that the channels $\Phi_1$ and $\Phi_2$ are compatible and let $\Phi\in\C(\calX,\calY_1\otimes\calY_2)$ be a compatibilizing channel. For each choice of indices $i\in\{1,\dots,m\}$ and $j\in\{1,\dots,n\}$, one may define the operator
\begin{equation}
P_{i,j} = \Phi^*(\Pi_{\op{im}(\rho_i)}\otimes\Pi_{\op{im}(\sigma_j)}) \in \Pos(\calX).
\end{equation}
We may assume without loss of generality that $\sum_{i=1}^{m} \Pi_{\op{im}(\rho_i)} = \I_{\calY_1}$. Indeed, the projection operators $\Pi_{\op{im}(\rho_1)},\dots,\Pi_{\op{im}(\rho_m)}$ must be pairwise orthogonal, as it has been assumed that the density matrices $\rho_1, \ldots, \rho_m $ are distinguishable, and if these projections did not sum to the identity, the operator $\Pi = \I_{\calX}-\sum_{i=1}^m\Pi_{\op{im}(\rho_i)}$ would be a nontrivial projection operator. One could then define the operator $M_{m+1}=0$ and the state $\rho_{m+1} = \Pi/\Tr(\Pi)$ such that the POVM $\{M_1,\dots,M_{m+1}\}$ and collection of density matrices $\rho_1,\dots,\rho_{m+1}$ satisfy the desired property and generate the same measurement channel, which justifies our assumption. Similarly, it may be assumed without loss of generality that $\sum_{j=1}^{n} \Pi_{\op{im}(\sigma_j)} = \I_{\calY_2}$.

For each index $i\in\{1,\dots,m\}$, one has that
\begin{equation}
\sum_{j=1}^n \ip{P_{i,j}}{X}
= \ip{\Pi_{\op{im}(\rho_i)}\otimes\I_{\calY_2}}{\Phi(X)} = \ip{\Pi_{\op{im}(\rho_i)}}{\Phi_1(X)} = \ip{M_i}{X}
\end{equation}
every $X\in\L(\calX)$, and thus $\sum_{j=1}^n P_{i,j} = M_i$. Similarly, for each index $j\in\{1,\dots,n\}$,
\begin{equation}
\sum_{i=1}^m \ip{P_{i,j}}{X}
= \ip{\I_{\calY_1}\otimes\Pi_{\op{im}(\sigma_j)})}{\Phi(X)} = \ip{\Pi_{\op{im}(\sigma_j)}}{\Phi_2(X)} = \ip{N_j}{X}
\end{equation}
holds for every $X\in\L(\calX)$, and thus $\sum_{i=1}^m P_{i,j} = N_j$. From these two conditions, it can be verified that $P$ is a compatibilizing POVM for $M$ and $N$, so the POVMs are compatible.
\end{proof}

%%%%%%%%%%%%%%%%%%%%%%%%%%% 

\subsection{Compatible measure-and-prepare channels do not necessarily have a measure-and-prepare compatibilizer}
\label{sec:no-ent-breaking-compatibilizer}

Every measure-and-prepare channel $\Phi \in \C(\calX, \calY)$ is necessarily self-compatible and, moreover, there is a measure-and-prepare channel $\Psi\in\C(\calX,\calY\otimes\calY)$ that compatibilizes two copies of $\Phi$. Indeed, suppose $\{M_1,\dots,M_m\}\subset \Pos(\calX)$ is a POVM and $\rho_1, \ldots, \rho_m \in \D(\calY)$ are density matrices such that $\Phi$ may be expressed as
\begin{equation}
\Phi(X) = \sum_{i=1}^m \ip{M_i}{X}\,\rho_i.
\end{equation}
One may define a measure-and-prepare compatibilizer $\Psi\in\C(\calX,\calY\otimes\calY)$ as
\begin{equation}
\Psi(X) =\sum_{i=1}^m \ip{M_i}{X}\,\rho_i \otimes\rho_i.
\end{equation}
We now consider the question of whether two compatible measure-and-prepare channels necessarily possess a measure-and-prepare compatibilizer. It turns out that the answer is no, which we demonstrate with the following example.

\begin{example}
\label{ex:no-ent-breaking-compatibilizer}
Let $\calX=\calY_1= \calY_2=\complex^2$ and let $\Phi_1\in\C(\calX,\calY_1)$ and $\Phi_2\in\C(\calX,\calY_2)$ be the channels whose Choi representations $J(\Phi_1)\in\L(\calX\otimes\calY_1)$ and $J(\Phi_2)\in\L(\calX\otimes\calY_2)$ are given by the $4\times 4$ matrices
\begin{equation}
\label{eq:choiphi1phi2}
J(\Phi_1) = \left(
\begin{array}{cccc}
\frac{3}{4} & \cdot & \cdot & \frac{1}{4} \\
\cdot & \frac{1}{4} & \cdot & \cdot \\
\cdot & \cdot & \frac{1}{4} & \cdot \\
\frac{1}{4} & \cdot & \cdot & \frac{3}{4} \\
\end{array}
\right)
\qquad\text{and}
\qquad
J(\Phi_2) = \left(
\begin{array}{cccc}
\frac{5}{8} & \cdot & \cdot & \frac{3}{8} \\
\cdot & \frac{3}{8} & \frac{1}{8} & \cdot \\
\cdot & \frac{1}{8} & \frac{3}{8} & \cdot \\
\frac{3}{8} & \cdot & \cdot & \frac{5}{8} \\
\end{array}
\right),
\end{equation}
where $\cdot$ indicates an entry that is $0$. One may observe that both $J(\Phi_1)$ and $J(\Phi_2)$ as well as their partial transposes (where $Z^{\tX}$ denotes the partial transpose of an operator $Z \in \L(\calX\otimes\calY)$),
\begin{equation}
J(\Phi_1)^{\tX} = \left(
\begin{array}{cccc}
\frac{3}{4} & \cdot & \cdot & \cdot \\
\cdot & \frac{1}{4} & \frac{1}{4} & \cdot \\
\cdot & \frac{1}{4} & \frac{1}{4} & \cdot \\
\cdot & \cdot & \cdot & \frac{3}{4} \\
\end{array}
\right)
\qquad\text{and}
\qquad
J(\Phi_2)^{\tX} = \left(
\begin{array}{cccc}
\frac{5}{8} & \cdot & \cdot & \frac{1}{8} \\
\cdot & \frac{3}{8} & \frac{3}{8} & \cdot \\
\cdot & \frac{3}{8} & \frac{3}{8} & \cdot \\
\frac{1}{8} & \cdot & \cdot & \frac{5}{8} \\
\end{array}
\right)
\end{equation}
are positive semidefinite matrices. The operators $J(\Phi_1)$ and $J(\Phi_2)$ are therefore positive under partial transpose (PPT). It follows that $\Phi_1$ and $\Phi_2$ are necessarily measure-and-prepare channels as their Choi representations are separable as operators.\footnote{Recall that a channel is measure-and-prepare channel if and only if its Choi representation is a separable operator. Moreover, for bipartite operators where each local dimension is equal to $2$, an operator is separable if and only if it is PPT~\cite{HorodeckiHorodeckiHorodecki-PPT}.} Consider now the map $\Phi\in\T(\calX,\calY_1\otimes\calY_2)$ whose Choi representation $J(\Phi)\in\L(\calX\otimes\calY_1\otimes\calY_2)$ is given by the $8\times 8$ matrix
\begin{equation}
J(\Phi) = \left(
\begin{array}{cccccccc}
\frac{1}{2} & \cdot & \cdot & \cdot & \cdot & \frac{3}{16} & \frac{1}{8} & \cdot \\
\cdot & \frac{1}{4} & \cdot & \cdot & \frac{1}{16} & \cdot & \cdot & \frac{1}{8} \\
\cdot & \cdot & \frac{1}{8} & \cdot & \cdot & \cdot & \cdot & \frac{3}{16} \\
\cdot & \cdot & \cdot & \frac{1}{8} & \cdot & \cdot & \frac{1}{16} & \cdot \\
\cdot & \frac{1}{16} & \cdot & \cdot & \frac{1}{8} & \cdot & \cdot & \cdot \\
\frac{3}{16} & \cdot & \cdot & \cdot & \cdot & \frac{1}{8} & \cdot & \cdot \\
\frac{1}{8} & \cdot & \cdot & \frac{1}{16} & \cdot & \cdot & \frac{1}{4} & \cdot \\
\cdot & \frac{1}{8} & \frac{3}{16} & \cdot & \cdot & \cdot & \cdot & \frac{1}{2} \\
\end{array}
\right).
\end{equation}
It is easy to verify that this map satisfies $ \Tr_{\calY_2}(J(\Phi)) = J(\Phi_1) $ and $ \Tr_{\calY_1}(J(\Phi)) = J(\Phi_2) $. Moreover, it may be verified that $J(\Phi)$ is positive definite, as the eigenvalues of $J(\Phi)$ are the four possible values of
\begin{equation}
\frac{4\pm\sqrt{3}+\sqrt{10\pm4\sqrt{3}}}{16},
\end{equation}
each having multiplicity $2$. Therefore $\Phi$ is a channel that compatibilizes $\Phi_1$ and $\Phi_2$, and thus the measure-and-prepare channels $\Phi_1$ and $\Phi_2$ are compatible.

It remains to show that there does not exist a measure-and-prepare compatibilizer for the channels $\Phi_1$ and $\Phi_2$. Toward a contradiction, suppose there exists some choice of measure-and-prepare channel $\Psi \in \C(\calX,\calY_1\otimes\calY_2)$ that compatibilizes the channels $\Phi_1$ and $\Phi_2$. This channel must satisfy\footnote{Here we make use of the fact that every separable operator must be PPT.}
\begin{equation}
\label{PPTcondition}
\Tr_{\calY_2}(J(\Psi)) = J(\Phi_1) , \qquad \Tr_{\calY_1}(J(\Psi)) = J(\Phi_2), \qquad \text{and} \qquad J(\Psi)^{\tX} \geq 0.
\end{equation}
For any choice of Hermitian operators $Z_1 \in \L(\calX \otimes \calY_1)$ and $Z_2 \in \L(\calX \otimes \calY_2)$ satisfying
\begin{equation}
\label{eq:Z1Y2Z2Y1}
\Tr_{\calY_2}^*(Z_1)+ \Tr_{\calY_1}^*(Z_2) \geq 0,
\end{equation}
it must be the case that
\begin{align}
\ip{J(\Phi_1)^{\tX}}{Z_1} + \ip{J(\Phi_2)^{\tX}}{Z_2}
& = \ip{\Tr_{\calY_2}(J(\Psi)^{\tX})}{Z_1} + \ip{\Tr_{\calY_1}(J(\Psi)^{\tX})}{Z_2} \\
& = \ip{J(\Psi)^{\tX}}{\Tr_{\calY_2}^*(Z_1)+ \Tr_{\calY_1}^*(Z_2)}\\
&\geq0,
\end{align}
where the inequality follows from the fact that the expression in the second line is an inner product of two operators that are positive semidefinite by assumption, and $\Psi$ is any channel that satisfies the conditions in \eqref{PPTcondition}. {(Here we make use of the adjoint maps $\Tr_{\calY_1}^*$ and $\Tr_{\calY_2}^*$ of the corresponding partial traces. To preserve the correct ordering of the spaces in the tensor product, these are defined here as the maps that satisfy
\begin{equation}
\label{eq:Tr*definition}
\Tr_{\calY_1}^*(X\otimes Y_2) = X\otimes\I_{\calY_1}\otimes Y_2
\qquad\text{and}\qquad
\Tr_{\calY_2}^*(X\otimes Y_1) = X\otimes Y_1\otimes\I_{\calY_2}
\end{equation}
for every $X\in\L(\calX)$, $Y_1\in\L(\calY_1)$, and $Y_2\in\L(\calY_2)$.)}

To prove that there does \emph{not} exist a measure-and-prepare compatibilizer, it therefore suffices to find a choice of operators $Z_1\in\L(\calX\otimes\calY_1)$ and $Z_2\in\L(\calX\otimes\calY_2)$ satisfying
\begin{equation}
\label{earlier adjoints}
\Tr_{\calY_2}^*(Z_1)+ \Tr_{\calY_1}^*(Z_2) \geq 0
\qquad\text{and}\qquad
\ip{J(\Phi_1)^{\tX}}{Z_1} + \ip{J(\Phi_2)^{\tX}}{Z_2} < 0 .
\end{equation}
Toward this goal, let $Z_1$ and $Z_2$ be the operators defined by the matrices
\begin{equation}
Z_1 = \left(
\begin{array}{cccc}
-2 & 0 & 0 & 2 \\
0 & 48 & -38 & 0 \\ 0 & -38 & 48 & 0 \\ 2 & 0 & 0 & -2 \\
\end{array}
\right)
\qquad\text{and}\qquad
Z_2 =\left(
\begin{array}{cccc}
3 & 0 & 0 & -4 \\ 0 & 40 & -47 & 0 \\ 0 & -47 & 40 & 0 \\ -4 & 0 & 0 & 3 \\
\end{array}
\right).
\end{equation}
For this choice of operators, one has that
\begin{equation}
\ip{J(\Phi_1)^{\tX}}{Z_1} + \ip{J(\Phi_2)^{\tX}}{Z_2}
= - \frac{1}{2} < 0.
\end{equation}
However, the matrix
\begin{equation}
\Tr_{\calY_2}^*(Z_1) + \Tr_{\calY_1}^*(Z_2)
=
\left(
\begin{array}{cccccccc}
1 & 0 & 0 & 0 & 0 & -4 & 2 & 0 \\ 0 & 38 & 0 & 0 & -47 & 0 & 0 & 2 \\ 0 & 0 & 51 & 0 & -38 & 0 & 0 & -4 \\ 0 & 0 & 0 & 88 & 0 & -38 & -47 & 0 \\ 0 & -47 & -38 & 0 & 88 & 0 & 0 & 0 \\ -4 & 0 & 0 & -38 & 0 & 51 & 0 & 0 \\ 2 & 0 & 0 & -47 & 0 & 0 & 38 & 0 \\ 0 & 2 & -4 & 0 & 0 & 0 & 0 & 1 \\
\end{array}
\right)
\end{equation}
may be verified to be positive semidefinite. This contradicts the assumption that there exists a measure-and-prepare compatibilizer for $\Phi_1$ and $\Phi_2$, and thus no such compatiblizer exists.
\end{example}

\begin{remark}
The matrices above where found numerically using the semidefinite programming formulations discussed later in this work. However, the proof above does not rely on semidefinite programming and can be presented without this formalism.
\end{remark}

%%%%%%%%%%%%%%%%%%%%%%%%%%%%%%%%%%%%%%%%%%%%%%%%%%%%%%%%%%%%%%

\section{The Jordan product of channels}
\label{sec:jordanproduct}

For any pair of operators $A,B\in\L(\calX)$, their \emph{Jordan product} is defined as
\begin{equation}
A \odot B := \dfrac{1}{2} (AB + BA).
\end{equation}
The Jordan product provides a useful sufficient condition for verifying that two POVMs are compatible. Let $\{M_1,\dots,M_m\}\subset\Pos(\calX)$ and $\{N_1,\dots,N_n\}\subset\Pos(\calX)$ be two POVMs. If it holds that
\begin{equation}
\label{eq:jordan product positive}
M_i\odot N_j\geq0
\end{equation}
for each pair of indices $i$ and $j$, then $M$ and $N$ are compatible. Indeed, one may define a choice of compatibilizing POVM as $P_{i,j}=M_i\odot N_j$ \cite{Heinosaari-coexistence}. If at least one of the POVMs $M$ or $N$ is also a \emph{projection-valued measure} (PVM), then the condition in \eqref{eq:jordan product positive} is also a necessary condition for these POVMs to be compatible \cite{HeinosaariReitznerStano-compat}.

In this section, we generalize the notion of Jordan products for operators to linear maps using the Choi representation. We also investigate properties of Jordan products of quantum channels.

\subsection{Definition and properties}
\label{sec:JordanProdDef}

Here we define the notion of Jordan products of linear maps and discuss a few of its properties.

\begin{definition}[Jordan product of linear maps] \label{def:jordan-prod}
Let $\Phi_1 \in \T(\calX, \calY_1)$ and $\Phi_2 \in \T(\calX, \calY_2)$ be linear maps. The \emph{Jordan product} of $\Phi_1$ and $\Phi_2$ is the linear map
\begin{equation}
\Phi_1 \odot \Phi_2 \in \T(\calX, \calY_1 \otimes \calY_2)
\end{equation}
whose Choi representation $J(\Phi_1 \odot \Phi_2)\in\L(\calX\otimes\calY_1\otimes\calY_2)$ is the operator
\begin{equation}
J(\Phi_1 \odot \Phi_2) := \sum_{i,j,k,\ell = 1}^{\dim(\calX)} ( E_{i,j} \odot E_{k,\ell} ) \otimes \Phi_1 ( E_{i,j} ) \otimes \Phi_2 ( E_{k,\ell} ). \label{eq:jordan-def}
\end{equation}
\end{definition}

Many useful properties of the Jordan product for linear maps follow immediately from the definition, some of which are outlined as follows. The binary operation $\odot$ is bilinear---that is, for all choices of linear maps $\Phi_1, \Phi_1' \in \T(\calX, \calY_1)$ and $\Phi_2, \Phi_2' \in \T(\calX, \calY_2)$ and scalars $\alpha, \beta \in \complex$, one has that
\begin{equation}
\begin{split}
(\alpha \Phi_1 + \beta \Phi_1') \odot \Phi_2 &= \alpha\, \Phi_1 \odot \Phi_2 + \beta\, \Phi_1' \odot \Phi_2 \\
\text{and }\quad\Phi_1 \odot (\alpha \Phi_2 + \beta \Phi_2') &= \alpha\, \Phi_1 \odot \Phi_2 + \beta\, \Phi_1 \odot \Phi_2' .
\end{split}
\end{equation}
For any further choices of linear maps $\Psi_1 \in \T(\calY_1, \calY_1')$ and $\Psi_2 \in \T(\calY_2, \calY_2')$, one has that
\begin{equation}
(\Psi_1\otimes\Psi_2)\circ(\Phi_1\odot\Phi_2) = (\Psi_1\circ\Phi_1)\odot(\Psi_2\circ\Phi_2).
\end{equation}
It is also evident that
\begin{equation}
(\Phi_2\odot\Phi_1)(X) = \op{Swap}_{\calY_1,\calY_2}\bigl((\Phi_1\odot\Phi_2)(X)\bigr)
\end{equation}
holds for every $X\in\L(\calX)$, where $\op{Swap}_{\calY_1,\calY_2} \in \T(\calY_1 \otimes \calY_2, \calY_2 \otimes \calY_1)$ is the linear map defined by the equation
\begin{equation}
\op{Swap}_{\calY_1,\calY_2}(Y_1\otimes Y_2)=Y_2\otimes Y_1
\end{equation}
for every $Y_1\in\L(\calY_1)$ and $Y_2\in\L(\calY_2)$.

It is useful to note a few alternative ways of representing this Jordan product. In particular, the Choi representation of the Jordan product of $\Phi_1 \in \T(\calX, \calY_1)$ and $\Phi_2 \in \T(\calX, \calY_2)$ can be expressed as
\begin{equation}
\label{eq:ChoiJordanProduct}
J(\Phi_1 \odot \Phi_2) = \frac{1}{2} \sum_{i,j=1}^{\dim(\calX)} E_{i,j} \otimes \left( \sum_{k=1}^{\dim(\calX)}\Phi_1(E_{i,k})\otimes\Phi_2(E_{k,j}) + \Phi_1(E_{k,j})\otimes\Phi_2(E_{i,k}) \right).
\end{equation}
It is then straightforward to check that
\begin{equation}
(\Phi_1 \odot \Phi_2)(X) = \frac{1}{2}(\Phi_1 \otimes \Phi_2)\bigl(W(X\otimes\I_{\calX}+\I_\calX\otimes X)\bigr)
\end{equation}
for every $X\in\L(\calX)$, where $W\in\U(\calX\otimes\calX)$ is the unitary operator defined by the equation
\begin{equation}
W(x\otimes y) = y \otimes x \; \text{ for every } \; x, y \in \calX.
\end{equation}
Indeed, making use of the expression in \eqref{eq:ChoiJordanProduct}, one has
\begin{align}
(\Phi_1 \odot \Phi_2)(X) &= \Tr_{\calX} \bigl(J(\Phi_1 \odot \Phi_2)(X^\t\otimes \I_{\calY_1 \otimes \calY_2}) \bigr) \\
&= \dfrac{1}{2} (\Phi_1 \otimes \Phi_2) \left( \sum_{i,j,k = 1}^{\dim(\calX)} \ip{E_{i,j}}{X} ( E_{i,k} \otimes E_{k,j} + E_{k,j} \otimes E_{i,k} ) \right) \\
&= \dfrac{1}{2} (\Phi_1 \otimes \Phi_2) \left( \sum_{i,j,k = 1}^{\dim(\calX)} \ip{E_{i,j}}{X} W ( E_{k,k} \otimes E_{i,j} + E_{i,j} \otimes E_{k,k} ) \right) \\
&= \dfrac{1}{2} (\Phi_1 \otimes \Phi_2) ( W ( \I_{\calX} \otimes X + X \otimes \I_{\calX} ) )
\end{align}
where we have used the identity $X = \sum_{i,j=1}^{\dim(\calX)} \ip{E_{i,j}}{X} E_{i,j}$.

Other interesting properties of Jordan products arise when considering the Jordan product of channels. In particular, the Jordan product of trace-preserving maps is also trace preserving, as the following proposition shows.

\begin{proposition}
\label{prop:jordan-trace-preserving}
Suppose that $\Phi_1 \in \T(\calX, \calY_1)$ and $\Phi_2 \in \T(\calX, \calY_2)$ are trace-preserving maps. One has that
\begin{equation}
\label{eq:tracepreservingJordan}
\begin{split}
\Tr_{\calY_2} ( J(\Phi_1 \odot \Phi_2) ) = J(\Phi_1)
\quad \text{ and } \quad
\Tr_{\calY_1} ( J(\Phi_1 \odot \Phi_2) ) = J(\Phi_2).
\end{split}
\end{equation}
Moreover, $\Phi_1 \odot \Phi_2$ is trace preserving.
\end{proposition}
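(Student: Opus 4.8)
The plan is to verify both marginal identities by a direct term-by-term computation from the defining formula~\eqref{eq:jordan-def}, and then to deduce trace preservation from the standard Choi criterion that a map $\Phi$ is trace preserving if and only if $\Tr_\calY(J(\Phi)) = \I_\calX$. First I would compute $\Tr_{\calY_2}(J(\Phi_1\odot\Phi_2))$ by pushing the partial trace through the sum. Since the partial trace acts only on the $\calY_2$ tensor factor, each term contributes $\Tr(\Phi_2(E_{k,\ell}))$; trace preservation of $\Phi_2$ then gives $\Tr(\Phi_2(E_{k,\ell})) = \Tr(E_{k,\ell}) = \delta_{k,\ell}$, which collapses the quadruple sum onto its diagonal terms $\ell=k$ and leaves $\sum_{i,j,k} (E_{i,j}\odot E_{k,k})\otimes\Phi_1(E_{i,j})$.

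The one computation requiring a little care is the matrix-unit identity $\sum_{k} E_{i,j}\odot E_{k,k} = E_{i,j}$. This follows from the matrix-unit products $E_{i,j}E_{k,k}=\delta_{j,k}E_{i,j}$ and $E_{k,k}E_{i,j}=\delta_{i,k}E_{i,j}$, so that averaging and summing over $k$ yields $\tfrac{1}{2}(E_{i,j}+E_{i,j})=E_{i,j}$. Substituting this back gives $\Tr_{\calY_2}(J(\Phi_1\odot\Phi_2)) = \sum_{i,j} E_{i,j}\otimes\Phi_1(E_{i,j}) = J(\Phi_1)$, as claimed. The second identity $\Tr_{\calY_1}(J(\Phi_1\odot\Phi_2))=J(\Phi_2)$ is entirely analogous: tracing out $\calY_1$ instead invokes trace preservation of $\Phi_1$ to force $i=j$, and the same matrix-unit identity (with indices relabeled) produces $\sum_{k,\ell}E_{k,\ell}\otimes\Phi_2(E_{k,\ell})=J(\Phi_2)$.

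Finally I would obtain trace preservation of $\Phi_1\odot\Phi_2$ for free from the marginal identities just proved. Since the two partial traces act on different subsystems they compose as $\Tr_{\calY_1\otimes\calY_2} = \Tr_{\calY_1}\circ\Tr_{\calY_2}$, so applying the first identity and then trace preservation of $\Phi_1$ gives $\Tr_{\calY_1\otimes\calY_2}(J(\Phi_1\odot\Phi_2)) = \Tr_{\calY_1}(J(\Phi_1)) = \I_\calX$; by the Choi criterion this is exactly the statement that $\Phi_1\odot\Phi_2$ is trace preserving. I do not expect any genuine obstacle here: the argument is pure bookkeeping, and the only point demanding attention is tracking the matrix-unit products and keeping straight which partial trace invokes trace preservation of which factor.
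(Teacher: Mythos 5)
Your proof is correct and takes essentially the same approach as the paper's: a direct matrix-unit computation in which trace preservation of $\Phi_2$ (resp.\ $\Phi_1$) collapses the sum, followed by the Choi criterion via $\Tr_{\calY_1\otimes\calY_2}(J(\Phi_1\odot\Phi_2)) = \Tr_{\calY_1}(J(\Phi_1)) = \I_\calX$ for the final claim. The only cosmetic difference is ordering: the paper first resolves the Jordan products of matrix units to rewrite the Choi matrix as in \eqref{eq:ChoiJordanProduct} and then applies the partial trace, whereas you trace first and then invoke the identity $\sum_{k} E_{i,j}\odot E_{k,k}=E_{i,j}$.
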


\begin{proof}
Using~\eqref{eq:ChoiJordanProduct}, we can write
\begin{align}
\Tr_{\calY_2} ( J(\Phi_1 \odot \Phi_2) ) &= \frac{1}{2} \sum_{i,j=1}^{\dim(\calX)} E_{i,j}\otimes \left(
\sum_{k=1}^{\dim(\calX)} \Tr(\Phi_2(E_{k,j}))\Phi_1 ( E_{i,k} ) + \Tr(\Phi_2(E_{i,k}))\Phi_1 ( E_{k,j} )
\right) \\
&=\frac{1}{2} \sum_{i,j=1}^{\dim(\calX)} E_{i,j}\otimes \bigl( \Phi_1 ( E_{i,j} ) + \Phi_1 ( E_{i,j} ) \bigr) \\
&= J(\Phi_1),
\end{align}
where we made use of the fact that $\Phi_2$ is trace preserving. The proof of the other equality in \eqref{eq:tracepreservingJordan} is analogous. Finally, note that
\begin{equation}
\Tr_{\calY_1\otimes\calY_2}(J(\Phi_1\odot\Phi_2)) = \Tr_{\calY_1}(J(\Phi_1)) = \I_\calX
\end{equation}
and thus $\Phi_1\odot\Phi_2$ is trace preserving.
\end{proof}

An immediate consequence of Proposition \ref{prop:jordan-trace-preserving} is that the Jordan product provides us with a useful criteria for determining if two channels are compatible. In particular, given two channels $\Phi_1 \in \C(\calX, \calY_1)$ and $\Phi_2 \in \C(\calX, \calY_2)$, one straightforward way to check if $\Phi_1$ and $\Phi_2$ are compatible is to check if the Jordan product map $\Phi_1 \odot \Phi_2$ is completely positive.

\begin{corollary}
\label{cor:jordan-suff-condition}
Suppose that $\Phi_1 \in \C(\calX, \calY_1)$ and $\Phi_2 \in \C(\calX, \calY_2)$ are such that $\Phi_1 \odot \Phi_2$ is completely positive. Then $\Phi_1 \odot \Phi_2$ compatibilizes $\Phi_1$ and $\Phi_2$.
\end{corollary}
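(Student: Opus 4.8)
The plan is to observe that this corollary is essentially immediate once one combines the hypothesis with Proposition~\ref{prop:jordan-trace-preserving}. First I would note that $\Phi_1 \odot \Phi_2$ is completely positive by assumption, while Proposition~\ref{prop:jordan-trace-preserving} guarantees that it is also trace preserving (using that $\Phi_1$ and $\Phi_2$ are channels, hence trace preserving). Thus $\Phi_1 \odot \Phi_2 \in \C(\calX, \calY_1 \otimes \calY_2)$ is a bona fide quantum channel, and so it is at least a legitimate candidate for a compatibilizer of $\Phi_1$ and $\Phi_2$.

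Next I would recall the Choi-level marginal identities supplied by Proposition~\ref{prop:jordan-trace-preserving},
\begin{equation*}
\Tr_{\calY_2}(J(\Phi_1 \odot \Phi_2)) = J(\Phi_1)
\quad \text{and} \quad
\Tr_{\calY_1}(J(\Phi_1 \odot \Phi_2)) = J(\Phi_2),
\end{equation*}
and translate them into the channel-level marginal conditions required by Definition~\ref{def:intro-channelComapatibility}. The key (and entirely routine) step is the observation that taking the partial trace of a Choi matrix over an output space corresponds to post-composing the map with that partial trace: for any $\Phi \in \T(\calX, \calY_1 \otimes \calY_2)$ one has $\Tr_{\calY_2}(J(\Phi)) = J(\Tr_{\calY_2} \circ \Phi)$, since both sides equal $\sum_{i,j} E_{i,j} \otimes \Tr_{\calY_2}(\Phi(E_{i,j}))$, and similarly for $\Tr_{\calY_1}$. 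Because the Choi representation is a linear bijection between maps in $\T(\calX, \calY_1 \otimes \calY_2)$ and operators in $\L(\calX \otimes \calY_1 \otimes \calY_2)$, the two displayed identities are equivalent to $\Tr_{\calY_2} \circ (\Phi_1 \odot \Phi_2) = \Phi_1$ and $\Tr_{\calY_1} \circ (\Phi_1 \odot \Phi_2) = \Phi_2$.

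Finally I would observe that these two equalities are precisely the defining conditions for $\Phi_1 \odot \Phi_2$ to be a compatibilizer of $\Phi_1$ and $\Phi_2$, which completes the argument. There is no genuine obstacle here: all of the substantive content is already carried by Proposition~\ref{prop:jordan-trace-preserving}, and the only point requiring care is the (elementary) passage from the partial-trace conditions on Choi matrices to the corresponding composition conditions on the underlying maps, together with the bookkeeping that complete positivity plus trace preservation makes $\Phi_1 \odot \Phi_2$ an admissible channel.
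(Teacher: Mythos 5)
Your proof is correct and follows exactly the route the paper intends: the paper states this corollary as an immediate consequence of Proposition~\ref{prop:jordan-trace-preserving}, and your argument simply spells out the routine details (complete positivity plus trace preservation makes $\Phi_1 \odot \Phi_2$ a channel, and the Choi-level marginal identities translate, via injectivity of the Choi representation, into the channel-level conditions of Definition~\ref{def:intro-channelComapatibility}).
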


As this is only a sufficient condition, other methods must be used to determine the compatibility of the channels whose Jordan product is not completely positive.

If the maps $\Phi_1$ and $\Phi_2$ are both measure-and-prepare channels, the Jordan product map $\Phi_1 \odot \Phi_2$ can be obtained simply by taking the Jordan products of the elements of the POVMs that generate $\Phi_1$ and $\Phi_2$, discussed in the proposition below. It is in this sense that the sufficient condition for the compatibility of measurements via the Jordan product of the POVM elements is a generalization of the similar sufficient condition via the Jordan product for channels.

\begin{proposition}
Let $\Phi_1 \in \C(\calX, \calY_1)$ and $\Phi_2 \in \C(\calX, \calY_2)$ be measure-and-prepare channels that are generated by the POVMs $\{M_1,\dots,M_m\}\subset \Pos(\calX)$ and $\{N_1,\dots,N_n\}\subset\Pos(\calX)$, and the collections of density matrices $\rho_1, \ldots, \rho_m \in \D(\calY_1)$ and $\sigma_1, \ldots, \sigma_n \in \D(\calY_2)$, respectively. It holds that
\begin{equation}
J( \Phi_1 \odot \Phi_2 ) = \sum_{i=1}^{m}\sum_{j=1}^n \bigl( M_i \odot N_j \bigr)^\t\otimes \rho_i \otimes \sigma_j . \label{eq:jordan-ETBprod}
\end{equation}
Moreover, if $M_i \odot N_j \geq 0$ for each index $i$ and $j$, then the map $\Phi_1 \odot \Phi_2$ is completely positive (and thus compatible).
\end{proposition}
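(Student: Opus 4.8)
The plan is to verify the stated formula for $J(\Phi_1 \odot \Phi_2)$ by direct computation from Definition~\ref{def:jordan-prod}, and then read off complete positivity from the explicit form. First I would evaluate how each channel acts on matrix units. Using the measure-and-prepare forms $\Phi_1(X) = \sum_i \ip{M_i}{X}\rho_i$ and $\Phi_2(X) = \sum_j \ip{N_j}{X}\sigma_j$ together with the identity $\ip{M_a}{E_{i,j}} = \Tr(M_a E_{i,j}) = (M_a)_{j,i} = (M_a^\t)_{i,j}$, one obtains
\begin{equation}
\Phi_1(E_{i,j}) = \sum_{a=1}^m (M_a^\t)_{i,j}\,\rho_a
\qquad\text{and}\qquad
\Phi_2(E_{k,\ell}) = \sum_{b=1}^n (N_b^\t)_{k,\ell}\,\sigma_b.
\end{equation}

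Next I would substitute these expressions into the defining sum~\eqref{eq:jordan-def} and interchange the order of summation so as to pull the operators $\rho_a\otimes\sigma_b$ outside. This gives
\begin{equation}
J(\Phi_1 \odot \Phi_2) = \sum_{a=1}^m\sum_{b=1}^n \left(\sum_{i,j,k,\ell=1}^{\dim(\calX)} (M_a^\t)_{i,j}(N_b^\t)_{k,\ell}\,(E_{i,j}\odot E_{k,\ell})\right)\otimes\rho_a\otimes\sigma_b.
\end{equation}
The crux is to identify the inner bracket with $(M_a\odot N_b)^\t$. Here the cleanest route is to exploit bilinearity of the matrix Jordan product: since $E_{i,j}\odot E_{k,\ell} = \tfrac12(E_{i,j}E_{k,\ell}+E_{k,\ell}E_{i,j})$ and $\sum_{i,j}(M_a^\t)_{i,j}E_{i,j} = M_a^\t$ (and likewise for $N_b^\t$), the four-fold sum factors as $M_a^\t\odot N_b^\t$. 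Finally, I would note the elementary identity $M_a^\t\odot N_b^\t = \tfrac12(M_a^\t N_b^\t + N_b^\t M_a^\t) = \tfrac12\bigl((N_b M_a)^\t + (M_a N_b)^\t\bigr) = (M_a\odot N_b)^\t$, which yields~\eqref{eq:jordan-ETBprod}. I expect this index-and-transpose bookkeeping to be the main obstacle; the bilinearity observation is what keeps it from becoming a tangle of Kronecker deltas (the alternative, expanding $E_{i,j}E_{k,\ell}=\delta_{j,k}E_{i,\ell}$ and summing term by term, works but is messier and demands more care with the transpose conventions).

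For the final claim, the argument is immediate from the formula. If $M_i\odot N_j\geq 0$ for every $i$ and $j$, then each $(M_i\odot N_j)^\t\geq 0$, since the transpose preserves positive semidefiniteness; as $\rho_i$ and $\sigma_j$ are also positive semidefinite, every summand $(M_i\odot N_j)^\t\otimes\rho_i\otimes\sigma_j$ is a tensor product of positive semidefinite operators and hence positive semidefinite. A sum of such operators is positive semidefinite, so $J(\Phi_1\odot\Phi_2)\geq 0$, which is exactly complete positivity of $\Phi_1\odot\Phi_2$. Combined with Proposition~\ref{prop:jordan-trace-preserving} (trace preservation) and Corollary~\ref{cor:jordan-suff-condition}, this shows that $\Phi_1\odot\Phi_2$ is a channel that compatibilizes $\Phi_1$ and $\Phi_2$.
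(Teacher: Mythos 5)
Your proof is correct and follows exactly the route the paper intends: the paper's own proof is simply the one-line remark that the identity ``follows by direct calculation from the definition of the Jordan product of maps,'' and your computation (expanding $\Phi_1(E_{i,j})$ and $\Phi_2(E_{k,\ell})$ via $\ip{M_a}{E_{i,j}}=(M_a^\t)_{i,j}$, using bilinearity of $\odot$, and the identity $M_a^\t\odot N_b^\t=(M_a\odot N_b)^\t$) is precisely that calculation carried out in full. The positivity argument and the appeal to Proposition~\ref{prop:jordan-trace-preserving} and Corollary~\ref{cor:jordan-suff-condition} for the compatibility claim are likewise exactly what the paper relies on.
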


\begin{proof}
This follows by direct calculation from the definition of the Jordan product of maps.
\end{proof}

There are a few further properties of Jordan products of channels that follow directly from the definition that we mention here. Given channels $\Phi_1, \Phi_1' \in \C(\calX, \calY_1)$, a channel $\Phi_2 \in \C(\calX, \calY_2)$, and a scalar $\lambda \in [0, 1]$, if it holds that $J(\Phi_1 \odot \Phi_2) \geq 0$ and $J(\Phi'_1 \odot \Phi_2) \geq 0$ then
\begin{equation}
J((\lambda \Phi_1 + (1-\lambda) \Phi'_1) \odot \Phi_2) \geq 0.
\end{equation}
Furthermore, for any other choice of channel $\Psi \in \C(\calY_1, \calY'_1)$, if it holds that $J(\Phi_1 \odot \Phi_2) \geq 0$ then $J( (\Psi \circ \Phi_1) \odot \Phi_2) \geq 0$ as well.

%%%%%%%%%%%%%%%%%%%%%%%%%%%%%%%%%%%%%%%%%%%%%%%%%%%%%%%%%%%%%%

\subsection{Necessary and sufficient conditions for the compatibility from the Jordan product}
Recall that a \emph{projection-valued measure} (PVM) is a collection $\Pi=\{\Pi_1,\dots,\Pi_m \}$ of projection operators that satisfy $\sum_{i=1}^m \Pi_i = \I_\calX$. Let $\Pi=\{\Pi_1,\dots,\Pi_m\}$ be a PVM and let $M=\{M_1,\dots,M_n\}$ be an arbitrary POVM. It is known that $\Pi$ and $M$ are compatible if and only if each of the Jordan product operators $\Pi_i \odot M_j$ is positive semidefinite \cite{HeinosaariReitznerStano-compat}. In this section, we discuss a similar result for channels.

For the PVM $\Pi=\{\Pi_1,\dots,\Pi_m \}$, consider the measurement channel $\Delta_\Pi\in\C(\calX,\calZ)$ defined as
\begin{equation}
\Delta_\Pi(X) = \sum_{i=1}^{m}\ip{\Pi_i}{X} E_{i,i}
\end{equation}
for every $X\in\L(\calX)$, where $\op{dim}(\calZ) = m$. Theorem \ref{thm:jordan-PVM-NaS} provides necessary and sufficient criteria for a given channel $\Phi\in\C(\calX,\calY)$ to be compatible with $\Delta_\Pi$ in terms of their Jordan product.

\begin{theorem}
\label{thm:jordan-PVM-NaS}
Let $\{ \Pi_1,\dots,\Pi_m \}$ be a PVM. Define the channels $\Xi_\Pi\in\C(\calX)$ and $\Delta_{\Pi} \in \C(\calX)$ as
\begin{equation}
\label{eq:XiPiDeltaPi}
\Xi_\Pi(X) = \sum_{i=1}^m \Pi_iX\Pi_i\qquad\text{and}\qquad\Delta_\Pi (X) = \sum_{i=1}^m \ip{\Pi_i}{X} E_{i,i}
\end{equation}
for every $X\in\L(\calX)$, and let $\Phi \in \C(\calX,\calY)$ be a channel. The following are equivalent.
\begin{enumerate}[label = (\arabic*)]
\item\label{item:jordan-PVM-Nas-2} The map $\Phi \odot \Delta_\Pi$ is completely positive.
\item\label{item:jordan-PVM-Nas-1} The channels $\Phi$ and $\Delta_\Pi$ are compatible.
\item\label{item:jordan-PVM-Nas-3} It holds that $\Phi = \Phi\circ\Xi_\Pi$.
\end{enumerate}
\end{theorem}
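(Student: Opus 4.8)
The plan is to establish the cycle of implications $\ref{item:jordan-PVM-Nas-2} \Rightarrow \ref{item:jordan-PVM-Nas-1} \Rightarrow \ref{item:jordan-PVM-Nas-3} \Rightarrow \ref{item:jordan-PVM-Nas-2}$. The first implication is immediate: if $\Phi \odot \Delta_\Pi$ is completely positive, then by Corollary~\ref{cor:jordan-suff-condition} it is itself a compatibilizer for $\Phi$ and $\Delta_\Pi$, so the two channels are compatible.

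For $\ref{item:jordan-PVM-Nas-1} \Rightarrow \ref{item:jordan-PVM-Nas-3}$, I would first invoke Proposition~\ref{prop:meascompat}, with the POVM taken to be the PVM $\Pi$, to obtain completely positive maps $\Phi_1, \dots, \Phi_m \in \T(\calX, \calY)$ with $\Phi = \sum_{i=1}^m \Phi_i$ and $\Tr_\calY(J(\Phi_i)) = \Pi_i^\t$ for each $i$. The key observation is that $\Pi_i^\t$ is itself an (orthogonal) projection, so it coincides with the projection onto its own image; Lemma~\ref{lemma:quantumMarginal-subspace} applied to the positive operator $J(\Phi_i)$ then yields $J(\Phi_i) = (\Pi_i^\t \otimes \I_\calY)\,J(\Phi_i)\,(\Pi_i^\t \otimes \I_\calY)$. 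Translating this through the recovery formula $\Phi_i(X) = \Tr_\calX((X^\t \otimes \I_\calY)J(\Phi_i))$ and using $(\Pi_i X \Pi_i)^\t = \Pi_i^\t X^\t \Pi_i^\t$ gives the localization identity $\Phi_i(X) = \Phi_i(\Pi_i X \Pi_i)$ for every $X$. Combining this with the orthogonality relations $\Pi_i \Pi_j = \delta_{ij}\Pi_i$ of the PVM, I would then compute $\Phi(\Xi_\Pi(X)) = \sum_{i}\sum_{b}\Phi_i(\Pi_b X \Pi_b) = \sum_i \Phi_i(X) = \Phi(X)$, since $\Phi_i(\Pi_b X \Pi_b) = \Phi_i(\Pi_i\Pi_b X \Pi_b\Pi_i)$ vanishes unless $b = i$ and equals $\Phi_i(X)$ when $b=i$. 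This establishes $\Phi = \Phi \circ \Xi_\Pi$.

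For $\ref{item:jordan-PVM-Nas-3} \Rightarrow \ref{item:jordan-PVM-Nas-2}$, I would first record a general identity for the Choi operator of $\Phi \odot \Delta_\Pi$, obtained by substituting $\Delta_\Pi(E_{k,\ell}) = \sum_a \ip{\Pi_a}{E_{k,\ell}} E_{a,a}$ into \eqref{eq:ChoiJordanProduct} and collapsing the sums over matrix units. This yields the block-diagonal expression
\begin{equation}
J(\Phi \odot \Delta_\Pi) = \sum_{a=1}^m \left(\sum_{i,j=1}^{\dim(\calX)} E_{i,j}\otimes \Phi(\Pi_a \odot E_{i,j})\right)\otimes E_{a,a} = \sum_{a=1}^m J(\Lambda_a)\otimes E_{a,a},
\end{equation}
where $\Lambda_a \in \T(\calX, \calY)$ is the map $\Lambda_a(X) = \Phi(\Pi_a \odot X)$. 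Because the operators $E_{a,a}$ are supported on pairwise orthogonal one-dimensional subspaces of $\calZ$, the operator $J(\Phi \odot \Delta_\Pi)$ is positive semidefinite if and only if each $J(\Lambda_a)$ is, that is, if and only if each $\Lambda_a$ is completely positive. Under assumption~\ref{item:jordan-PVM-Nas-3}, the relation $\Phi = \Phi \circ \Xi_\Pi$ together with $\Xi_\Pi(\Pi_a X) = \Xi_\Pi(X\Pi_a) = \Pi_a X \Pi_a$ forces $\Lambda_a(X) = \Phi(\Pi_a \odot X) = \Phi(\Pi_a X \Pi_a)$, which is completely positive as the composition of $\Phi$ with the completely positive map $X \mapsto \Pi_a X \Pi_a$. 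Hence every $\Lambda_a$, and therefore $\Phi \odot \Delta_\Pi$, is completely positive.

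I expect the main obstacle to be the implication $\ref{item:jordan-PVM-Nas-1}\Rightarrow\ref{item:jordan-PVM-Nas-3}$, and specifically the localization step: one must correctly transfer the marginal constraint $\Tr_\calY(J(\Phi_i)) = \Pi_i^\t$ on each Choi operator into a constraint on the action of $\Phi_i$ itself, which is where Lemma~\ref{lemma:quantumMarginal-subspace} and the idempotence and orthogonality of the PVM projections are essential. The remaining ingredients---the block-diagonal Choi identity in the last implication and the partial-trace manipulations---are routine once the matrix-unit bookkeeping is set up carefully.
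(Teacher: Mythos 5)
Your proof is correct, and for two of the three implications it coincides with the paper's own argument: the step from complete positivity of $\Phi \odot \Delta_\Pi$ to compatibility is Corollary~\ref{cor:jordan-suff-condition} in both, and your derivation of $\Phi = \Phi\circ\Xi_\Pi$ from compatibility follows the paper's route exactly (Proposition~\ref{prop:meascompat}, then Lemma~\ref{lemma:quantumMarginal-subspace} to localize each $\Phi_i$, then the orthogonality relations of the PVM). You are in fact slightly more careful there than the paper, which writes $\Tr_{\calY}(J(\Phi_i)) = \Pi_i$ where its own remark after Proposition~\ref{prop:meascompat} gives $\Pi_i^{\t}$; this is inconsequential, since the transpose of a projection is a projection and the localization identity $\Phi_i(X) = \Phi_i(\Pi_i X \Pi_i)$ comes out the same, but your handling of the transpose is the cleaner one.

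Where you genuinely diverge is the closing implication \ref{item:jordan-PVM-Nas-3}$\Rightarrow$\ref{item:jordan-PVM-Nas-2}. The paper avoids matrix-unit bookkeeping: it observes that $(\Xi_\Pi\odot\Delta_\Pi)(X) = \sum_{i=1}^m \Pi_i X \Pi_i \otimes E_{i,i}$ is manifestly completely positive, and then invokes the composition property $(\Psi_1\otimes\Psi_2)\circ(\Phi_1\odot\Phi_2) = (\Psi_1\circ\Phi_1)\odot(\Psi_2\circ\Phi_2)$ from Section~\ref{sec:JordanProdDef} to conclude that $\Phi\odot\Delta_\Pi = (\Phi\circ\Xi_\Pi)\odot\Delta_\Pi = (\Phi\otimes\I_{\L(\calZ)})\circ(\Xi_\Pi\odot\Delta_\Pi)$ is a composition of completely positive maps. (The paper's displayed equations misprint $\odot$ as $\circ$ in two places, but this is clearly the intended argument.) Your alternative is a direct computation of the Choi operator: the block-diagonal identity $J(\Phi\odot\Delta_\Pi) = \sum_{a=1}^m J(\Lambda_a)\otimes E_{a,a}$ with $\Lambda_a(X) = \Phi(\Pi_a\odot X)$ is correct (the matrix-unit collapse checks out, keeping track of the transpose hidden in $\ip{\Pi_a}{E_{k,\ell}}$), and under hypothesis \ref{item:jordan-PVM-Nas-3} each $\Lambda_a(X) = \Phi(\Pi_a X\Pi_a)$ is completely positive. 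What your route buys is an intermediate statement of independent value: the block identity holds for an \emph{arbitrary} channel $\Phi$, so it characterizes complete positivity of $\Phi\odot\Delta_\Pi$ as complete positivity of every map $X\mapsto\Phi(\Pi_a\odot X)$, with no hypothesis on $\Phi$. What the paper's route buys is brevity and reuse of the algebraic machinery already established for the Jordan product, sidestepping any index computation. Both arguments are sound, and the underlying content is ultimately the same formula $(\Phi\odot\Delta_\Pi)(X) = \sum_a \Phi(\Pi_a X \Pi_a)\otimes E_{a,a}$.
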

Note that the map $\Xi_\Pi$ as defined in Theorem \ref{thm:jordan-PVM-NaS} is called the \emph{pinching} channel corresponding to the PVM $\Pi$.

\begin{proof}
The implication \ref{item:jordan-PVM-Nas-2}$\implies$\ref{item:jordan-PVM-Nas-1} follows from Corollary \ref{cor:jordan-suff-condition}. To prove the implication \ref{item:jordan-PVM-Nas-1}$\implies$\ref{item:jordan-PVM-Nas-3}, suppose that the channels $\Phi$ and $\Delta_\Pi$ are compatible. By Proposition \ref{prop:meascompat}, there exist completely positive maps $\Phi_1,\dots,\Phi_m\in\T(\calX,\calY)$ such that $\Phi = \sum_{i=1}^m\Phi_i$ and satisfy $\Tr_{\calY}(J(\Phi_i)) = \Pi_i$ for each index $i\in\{1,\dots,m\}$. It follows from Lemma \ref{lemma:quantumMarginal-subspace}, for each index $i\in\{1,\dots,m\}$, that
\begin{equation}
J(\Phi_i) = (\Pi_i\otimes\I_\calY)J(\Phi_i)(\Pi_i\otimes\I_\calY)
\end{equation}
since $\Pi_{\op{im}(\Tr_{\calY}(J(\Phi_i)))}= \Pi_i$, and thus
\begin{equation}
\label{eq:PhiiPiiXPii}
\Phi_i(X)= \Phi_i(\Pi_iX\Pi_i)
\end{equation}
holds for every $X\in\L(\calX)$. Making use of the equality in \eqref{eq:PhiiPiiXPii}, and the fact that $\Pi_i\Pi_j =0$ whenever $i\neq j$, we have that
\begin{align}
(\Phi\circ\Xi_\Pi)(X) &= \sum_{i=1}^m\sum_{j=1}^m \Phi_i(\Pi_jX\Pi_j)
= \sum_{i=1}^m\sum_{j=1}^m \Phi_i(\Pi_i\Pi_jX\Pi_j\Pi_i)\\
&= \sum_{i=1}^m\Phi_i(\Pi_iX\Pi_i) = \sum_{i=1}^m \Phi_i(X) = \Phi(X)
\end{align}
for every $X\in\L(\calX)$, and thus $\Phi = \Phi\circ\Xi_\Pi$.

Finally, to prove the implication \ref{item:jordan-PVM-Nas-3}$\implies$\ref{item:jordan-PVM-Nas-2}, suppose that $\Phi = \Phi\circ\Xi_\Pi$. Note that the channels $\Xi_\Pi$ and $\Delta_\Pi$ are compatible. Indeed, it may be verified that the Jordan product of these channels can be expressed as
\begin{equation}
(\Xi_\Pi\odot\Delta_\Pi)(X) = \sum_{i=1}^m \Pi_iX\Pi_i\otimes E_{i,i}.
\end{equation}
This map is clearly completely positive and therefore compatibilizes $\Xi_\Pi$ and $\Delta_\Pi$. Hence the map
\begin{equation}
\Phi\circ\Delta_\Pi = (\Phi\circ\Xi_\Pi)\odot\Delta_\Pi = (\Phi\otimes\I_{\L(\calZ)})\circ(\Xi_\Pi\circ\Delta_\Pi)
\end{equation}
is completely positive, as it is the composition of completely positive maps. This completes the proof.
\end{proof}

If $\Pi = \{E_{1,1},\dots,E_{n,n}\}$ is the PVM consisting of the rank-one projections in the computation basis (where $n=\op{dim}(\calX)$), the channel $\Delta_\Pi=\Delta$ is simply the \emph{completely dephasing channel} on $\calX$,
\begin{equation}
\label{triangle}
\Delta(X) = \sum_{i=1}^{n}\ip{E_{i,i}}{X} E_{i,i}.
\end{equation}
The map $\Delta$ is the measurement channel corresponding to measuring in the computational basis. Applying the result of Theorem \ref{thm:jordan-PVM-NaS} in this case provides us with the following necessary and sufficient conditions for a channel to be compatible with the completely dephasing channel $\Delta$.

\begin{corollary}
\label{prop:jordan-Delta-NaS}
Let $\Phi \in \C(\calX,\calY)$ be a channel and let $\Delta\in\C(\calX)$ be the completely dephasing channel. The following are equivalent.
\begin{enumerate}[label = (\arabic*)]
\item\label{item:prop:jordan-Delta-NaS-1} The map $\Phi \odot \Delta$ is completely positive.
\item\label{item:prop:jordan-Delta-NaS-2} The channels $\Phi$ and $\Delta$ are compatible.
\item\label{item:prop:jordan-Delta-NaS-3} It holds that $\Phi$ is a measure-and-prepare channel generated by the computational basis PVM.
\end{enumerate}
\end{corollary}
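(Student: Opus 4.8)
The plan is to deduce the corollary directly from Theorem~\ref{thm:jordan-PVM-NaS} by specializing to the PVM $\Pi = \{E_{1,1},\dots,E_{n,n}\}$ of rank-one projections in the computational basis, where $n = \op{dim}(\calX)$. The crucial first observation is that, for this particular PVM, both auxiliary channels appearing in the theorem collapse to the completely dephasing channel. Indeed, since $E_{i,i}XE_{i,i} = \ip{E_{i,i}}{X}E_{i,i}$ for each $i$, one has
\begin{equation}
\Xi_\Pi(X) = \sum_{i=1}^n E_{i,i}XE_{i,i} = \sum_{i=1}^n \ip{E_{i,i}}{X}E_{i,i} = \Delta(X) = \Delta_\Pi(X)
\end{equation}
for every $X\in\L(\calX)$, so that $\Xi_\Pi = \Delta_\Pi = \Delta$. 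Applying Theorem~\ref{thm:jordan-PVM-NaS} with this PVM then immediately yields that conditions~\ref{item:prop:jordan-Delta-NaS-1} and~\ref{item:prop:jordan-Delta-NaS-2} are equivalent to one another and to the condition $\Phi = \Phi\circ\Delta$.

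It therefore remains to show that the condition $\Phi = \Phi\circ\Delta$ is equivalent to statement~\ref{item:prop:jordan-Delta-NaS-3}. For the forward implication I would assume $\Phi = \Phi\circ\Delta$ and substitute the explicit form of $\Delta$ to obtain
\begin{equation}
\Phi(X) = (\Phi\circ\Delta)(X) = \Phi\Bigl(\sum_{i=1}^n \ip{E_{i,i}}{X}E_{i,i}\Bigr) = \sum_{i=1}^n \ip{E_{i,i}}{X}\,\Phi(E_{i,i})
\end{equation}
for every $X\in\L(\calX)$. Setting $\rho_i := \Phi(E_{i,i})$ then exhibits $\Phi$ as a measure-and-prepare channel generated by the computational basis PVM; the operators $\rho_i$ are genuine density matrices since $\Phi$ is a channel and each $E_{i,i}$ is a density matrix, so complete positivity and trace-preservation of $\Phi$ force $\rho_i\in\D(\calY)$.

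For the reverse implication I would assume $\Phi(X) = \sum_{i=1}^n \ip{E_{i,i}}{X}\rho_i$ for some density matrices $\rho_1,\dots,\rho_n$ and verify $\Phi\circ\Delta = \Phi$ by direct substitution. Applying $\Phi$ to $\Delta(X) = \sum_{j=1}^n \ip{E_{j,j}}{X}E_{j,j}$ and using the orthogonality relation $\ip{E_{i,i}}{E_{j,j}} = \delta_{i,j}$ collapses the resulting double sum to $\sum_{i=1}^n \ip{E_{i,i}}{X}\rho_i = \Phi(X)$, which closes the chain of equivalences.

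Since every step is either a direct invocation of Theorem~\ref{thm:jordan-PVM-NaS} or a one-line substitution, there is no genuine obstacle in this argument; the only point requiring any care is the bookkeeping remark that $\Xi_\Pi$ and $\Delta_\Pi$ both coincide with $\Delta$ for the computational basis PVM, which is exactly what permits the general theorem to be specialized cleanly.
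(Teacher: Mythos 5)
Your proposal is correct and follows essentially the same route as the paper's proof: specialize Theorem~\ref{thm:jordan-PVM-NaS} to the computational-basis PVM, observe that $\Xi_\Pi = \Delta_\Pi = \Delta$, and then verify by direct substitution that the condition $\Phi = \Phi\circ\Delta$ is equivalent to $\Phi$ being measure-and-prepare generated by the computational basis PVM. The only difference is cosmetic—you spell out the identity $E_{i,i}XE_{i,i} = \ip{E_{i,i}}{X}E_{i,i}$ and the fact that each $\rho_i = \Phi(E_{i,i})$ is a genuine density matrix, details the paper leaves implicit.
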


\begin{proof}
Let $n=\op{dim}(\calX)$. For the PVM $\Pi = \{E_{1,1},\dots,E_{n,n}\}$, it is evident that $\Xi_\Pi = \Delta_\Pi= \Delta$, where $\Xi_\Pi$ and $\Delta_\Pi$ are the channels as defined in \eqref{eq:XiPiDeltaPi}. By Theorem~\ref{thm:jordan-PVM-NaS}, it will suffice to show that the condition in \ref{item:prop:jordan-Delta-NaS-3} is equivalent to the condition that $\Phi = \Phi\circ\Delta$. If $\Phi\circ\Delta = \Phi$ then
\begin{equation}
\label{eq:PhiDelta}
\Phi(X) = \Phi(\Delta(X)) = \sum_{i=1}^n \ip{E_{i,i}}{X}\Phi(E_{i,i}) = \sum_{i=1}^n \ip{E_{i,i}}{X}\rho_i,
\end{equation}
for every $X\in\L(\calX)$, where $\rho_1,\dots,\rho_n\in\D(\calY)$ are the density matrices defined as $\rho_i=\Phi(E_{i,i})$ for each $i$, and thus $\Phi$ satisfies condition \ref{item:prop:jordan-Delta-NaS-3}. On the other hand, if $\Phi$ satisfies condition \ref{item:prop:jordan-Delta-NaS-3} then there exist density matrices $\rho_1,\dots,\rho_n$ such that $\Phi$ may be expressed as
\begin{equation}
\Phi(X) = \sum_{i=1}^n\ip{E_{i,i}}{X}\rho_i,
\end{equation}
and thus
\begin{equation}
\Phi(\Delta(X)) = \sum_{i=1}^n\sum_{j=1}^n\ip{E_{i,i}}{\ip{E_{j,j}}{X}E_{j,j}}\rho_i = \sum_{i=1}^n\ip{E_{i,i}}{X}\rho_i = \Phi(X)
\end{equation}
holds for every $X\in\L(\calX)$.
\end{proof}

For the sake of completeness we also investigate other extreme cases.
\begin{lemma}
\label{lemma:jordan-unitary-NaS}
Let $\calX$ and let $\Phi_U \in \C(\calX)$ be a unitary channel, i.e., $\Phi_U(X) = UXU^*$ for some $U \in \U(\calX)$. Let $\Psi \in \C(\calX, \calY)$. Then the following are equivalent.
\begin{enumerate}[label = (\arabic*)]
\item\label{item:jordan-unitary-Nas-1} $\Phi_U$ and $\Psi$ are compatible.
\item\label{item:jordan-unitary-Nas-2} $\Psi$ is a constant channel, i.e., there is a fixed state $\rho \in \D(\calY)$ such that $\Psi(X) = \Tr(X) \rho$ for all $X \in \L(\calX)$.
\item\label{item:jordan-unitary-Nas-3} $J(\Phi_U \odot \Psi) \geq 0$.
\end{enumerate}
\end{lemma}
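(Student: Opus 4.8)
The plan is to establish the cycle of implications \ref{item:jordan-unitary-Nas-3}$\implies$\ref{item:jordan-unitary-Nas-1}$\implies$\ref{item:jordan-unitary-Nas-2}$\implies$\ref{item:jordan-unitary-Nas-3}, since the Jordan-product condition gives compatibility ``for free'' via Corollary~\ref{cor:jordan-suff-condition}, and the constant-channel characterization is the easiest one to feed back into complete positivity of the Jordan product. The implication \ref{item:jordan-unitary-Nas-3}$\implies$\ref{item:jordan-unitary-Nas-1} is immediate from Corollary~\ref{cor:jordan-suff-condition}: if $J(\Phi_U\odot\Psi)\geq0$ then $\Phi_U\odot\Psi$ is a completely positive (hence, by Proposition~\ref{prop:jordan-trace-preserving}, trace-preserving) compatibilizer for $\Phi_U$ and $\Psi$.

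For \ref{item:jordan-unitary-Nas-2}$\implies$\ref{item:jordan-unitary-Nas-3}, I would compute $J(\Phi_U\odot\Psi)$ directly from Definition~\ref{def:jordan-prod} when $\Psi(X)=\Tr(X)\rho$. Since $\Psi(E_{k,\ell})=\delta_{k,\ell}\,\rho$, the quadruple sum in~\eqref{eq:jordan-def} collapses: only terms with $k=\ell$ survive, giving
\begin{equation}
J(\Phi_U\odot\Psi) = \sum_{i,j,k=1}^{\dim(\calX)} (E_{i,j}\odot E_{k,k})\otimes \Phi_U(E_{i,j})\otimes\rho = \Bigl(\sum_{i,j}(E_{i,j}\odot\I_\calX)\otimes UE_{i,j}U^*\Bigr)\otimes\rho.
\end{equation}
Because $E_{i,j}\odot\I_\calX=E_{i,j}$, the bracketed factor is just $J(\Phi_U)=\sum_{i,j}E_{i,j}\otimes UE_{i,j}U^*$, which is positive semidefinite since $\Phi_U$ is a channel; tensoring with the positive operator $\rho$ preserves positivity, so \ref{item:jordan-unitary-Nas-3} holds.

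The substantive step is \ref{item:jordan-unitary-Nas-1}$\implies$\ref{item:jordan-unitary-Nas-2}, and this is where I expect the main obstacle. The idea is that a unitary channel is ``maximally incompatible'': any compatibilizer $\Phi$ must reproduce $\Phi_U$ as a perfect (invertible) marginal, and since $\Phi_U$ leaks no information about the input in a destroyable way, the other marginal can retain none. Concretely, I would precompose with $\Phi_U^{-1}$ (itself a unitary channel) to reduce to the case $U=\I_\calX$, i.e.\ to showing that any channel compatible with the identity channel is constant; this uses the property $(\Psi_1\otimes\Psi_2)\circ(\Phi_1\odot\Phi_2)=(\Psi_1\circ\Phi_1)\odot(\Psi_2\circ\Phi_2)$ together with the fact that complete positivity of a marginal condition is preserved under such composition. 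Once reduced to the identity, the cleanest route is the no-broadcasting intuition made rigorous at the level of Choi matrices: a compatibilizer $\Phi$ for $\I_{\L(\calX)}$ and $\Psi$ has Choi operator $J(\Phi)\in\Pos(\calX\otimes\calX\otimes\calY)$ with $\Tr_\calY(J(\Phi))=J(\I_{\L(\calX)})$, the (unnormalized) maximally entangled operator. Since the maximally entangled state is pure and its purifications are forced, the positivity of $J(\Phi)$ together with this marginal constraint forces $J(\Phi)$ to factor as $J(\I_{\L(\calX)})\otimes\rho$ for some state $\rho\in\D(\calY)$, from which $\Tr_\calX(J(\Phi))$ being proportional to $\I\otimes\rho$ yields $\Psi(X)=\Tr(X)\rho$. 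The delicate point to get right is the rigidity argument: I would invoke that a positive operator whose partial trace onto one factor equals a rank-one (pure) operator must itself be a tensor product with that pure operator, which is exactly the statement that one cannot correlate a maximally entangled marginal with anything nontrivial—this is the quantitative heart of monogamy/no-broadcasting and is the step requiring the most care.
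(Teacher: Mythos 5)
Your proposal is correct and takes essentially the same route as the paper: \ref{item:jordan-unitary-Nas-3}$\implies$\ref{item:jordan-unitary-Nas-1} via Corollary~\ref{cor:jordan-suff-condition}, \ref{item:jordan-unitary-Nas-2}$\implies$\ref{item:jordan-unitary-Nas-3} via the identity $J(\Phi_U \odot \Psi) = J(\Phi_U)\otimes\rho$, and \ref{item:jordan-unitary-Nas-1}$\implies$\ref{item:jordan-unitary-Nas-2} via the rigidity fact that a positive operator whose marginal is proportional to a pure (maximally entangled) state must factorize as a tensor product---which is exactly the paper's ``monogamy of entanglement'' step. The only difference is your reduction to $U=\I_{\calX}$, which is valid but unnecessary, since the purity argument applies directly to $J(\Phi_U)$ for arbitrary unitary $U$.
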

\begin{proof}
To prove the implication \ref{item:jordan-unitary-Nas-1}$\implies$\ref{item:jordan-unitary-Nas-2}, note that $J(\Phi_U)$ is a multiple of a maximally entangled state. Suppose $\Phi \in \C(\calX, \calX' \otimes \calY)$ is a compatibilizer of $\Phi_U$ and $\Psi$, where, for convenience, we define $\calX'=\calX$. It follows that $\Tr_{\calY}( J(\Phi) ) = J(\Phi_U)$. By the monogamy of entanglement, it holds that
\begin{equation}
J(\Phi) = J(\Phi_U) \otimes Y.
\end{equation}
for some fixed choice of positive operator $Y \in \Pos(\calY)$. Now,
\begin{equation}
J(\Psi) = \Tr_{\calX'}( J(\Phi) ) = \Tr_{\calX'}( J(\Phi_U) \otimes Y ) = \I_\calX \otimes Y.
\end{equation}
For all $X \in \L(\calX)$, one has that
\begin{equation}
\Psi(X) = \Tr_{\calX} ( (X^\t \otimes \I_{\calY}) J(\Psi) ) = \Tr_{\calX} ( (X^\t \otimes Y ) = \Tr(X) Y.
\end{equation}
The fact that $\Tr(Y) = 1$ (and thus $Y$ is a state) follows from the fact that $\Psi$ must be trace preserving.

To prove the implication \ref{item:jordan-unitary-Nas-2}$\implies$\ref{item:jordan-unitary-Nas-3}, let $\rho\in\D(\calY)$ be a state such that $\Psi(X) = \Tr(X) \rho$ for every $X \in \L(\calX)$. Note that
\begin{equation}
J(\Phi_U \odot \Psi) = J(\Phi_U) \otimes \rho,
\end{equation}
and thus the operator $J(\Phi_U \odot \Psi)$ is clearly positive semidefinite. The implication \mbox{\ref{item:jordan-unitary-Nas-3}$\implies$\ref{item:jordan-unitary-Nas-1}} follows from Corollary \ref{cor:jordan-suff-condition}.
\end{proof}

\begin{lemma}
\label{lemma:jordan-constant-NaS}
Let $\Phi_\rho \in \C(\calX, \calY)$ be a constant channel, i.e., $\Phi_\rho(X) = \Tr(X) \rho$ for all $X \in \L(X)$, for some fixed choice of state $\rho \in \D(\calY)$. Let $\Psi \in \C(\calX, \calZ)$. It holds that $J(\Phi_\rho \odot \Psi) \geq 0$ and the channels $\Phi_\rho$ and $\Psi$ are compatible.
\end{lemma}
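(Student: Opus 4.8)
The plan is to compute the Choi representation $J(\Phi_\rho \odot \Psi)$ explicitly and to recognize it, after a reordering of tensor factors, as a tensor product of two positive semidefinite operators; the compatibility claim then follows for free from Corollary~\ref{cor:jordan-suff-condition}.

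First I would record that $\Phi_\rho(E_{i,j}) = \Tr(E_{i,j})\,\rho = \delta_{i,j}\,\rho$, so that the constant channel annihilates every off-diagonal matrix unit. Substituting this into the defining formula~\eqref{eq:jordan-def} collapses the sum: every term containing $\Phi_\rho(E_{i,j})$ with $i \neq j$ vanishes, and the remaining diagonal terms combine through the identity $\sum_{i}(E_{i,i}\odot E_{k,\ell}) = E_{k,\ell}$ to give
\begin{equation}
J(\Phi_\rho \odot \Psi) = \sum_{k,\ell=1}^{\dim(\calX)} E_{k,\ell}\otimes\rho\otimes\Psi(E_{k,\ell}).
\end{equation}

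Next I would observe that this operator becomes a genuine tensor product once the $\calX$ and $\calY$ registers are interchanged. Conjugating by the unitary that swaps these two registers (and acts as the identity on $\calZ$) sends each summand $E_{k,\ell}\otimes\rho\otimes\Psi(E_{k,\ell})$ to $\rho\otimes E_{k,\ell}\otimes\Psi(E_{k,\ell})$, so that the whole operator is carried to $\rho\otimes J(\Psi)$. Since $\rho\in\D(\calY)$ is a density operator and $J(\Psi)\geq 0$ because $\Psi$ is a channel (hence completely positive), the product $\rho\otimes J(\Psi)$ is positive semidefinite; as conjugation by a unitary preserves positivity, I conclude that $J(\Phi_\rho \odot \Psi)\geq 0$. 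Equivalently, one may verify directly that $(\Phi_\rho\odot\Psi)(X)=\rho\otimes\Psi(X)$, which is manifestly completely positive as the composition of $\Psi$ with the completely positive map $Z\mapsto\rho\otimes Z$.

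Finally, $\Phi_\rho \odot \Psi$ is trace preserving by Proposition~\ref{prop:jordan-trace-preserving}, so the complete positivity just established makes it a channel, and Corollary~\ref{cor:jordan-suff-condition} then immediately yields that $\Phi_\rho \odot \Psi$ compatibilizes $\Phi_\rho$ and $\Psi$. There is no genuine obstacle here; the only point requiring care is the bookkeeping of the tensor-factor ordering in the swap step, since the $\calY$ register sits between $\calX$ and $\calZ$ rather than at an extreme of the tensor product.
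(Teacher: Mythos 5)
Your proof is correct and follows essentially the same route as the paper: both reduce the Choi representation of the Jordan product to a tensor product of $\rho$ with $J(\Psi)$, conclude positivity, and then obtain compatibility from trace preservation together with Corollary~\ref{cor:jordan-suff-condition}. The only cosmetic difference is that the paper sidesteps your register-swap bookkeeping by computing $J(\Psi \odot \Phi_\rho)$ instead (placing the constant channel second), so that the Choi matrix is literally equal to $J(\Psi)\otimes\rho$.
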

\begin{proof}
As in the proof of Lemma \ref{lemma:jordan-unitary-NaS}, it is evident that
\begin{equation}
J(\Psi \odot \Phi_\rho) = J(\Psi) \otimes \rho,
\end{equation}
where we have changed the order of the channels for convenience.
\end{proof}

%%%%%%%%%%%%%%%%%%%%%%%%%%%%%%%%%%%%%%%%%%%%%%%%%%%%%%%%%%%%%%

\subsection{Generalized Jordan products}

One may ask whether the Jordan product of maps given by Definition \ref{def:jordan-prod} is the unique construction that satisfies the key properties outlined in Section \ref{sec:JordanProdDef}. We now show that this is not the case by generalizing the construction of the Jordan product of linear maps.

\begin{definition}[Generalized Jordan product of linear maps] \label{def:genJordProd}
Let $\calX=\calX_1=\calX_2$ be complex Euclidean spaces and let $A\in\Herm(\calX\otimes\calX_1\otimes\calX_2)$ be an operator satisfying
\begin{equation}
\label{eq:Tr1ATr2A}
\Tr_{\calX_1}(A) = \Tr_{\calX_2}(A) = \sum_{i,j=1}^{\dim(\calX)}E_{i,j}\otimes E_{i,j}.
\end{equation}
Let $\Phi_1 \in \T(\calX, \calY_1)$ and $\Phi_2 \in \T(\calX, \calY_2)$ be linear maps. The \emph{generalized Jordan product} of the maps $\Phi_1$ and $\Phi_2$ with respect to the operator~$A$ is the linear map
\begin{equation}
\Phi_1\odot_A\Phi_2 \in \T(\calX, \calY_1 \otimes \calY_2)
\end{equation}
whose Choi representation is the operator $J(\Phi_1\odot_A\Phi_2) \in \L(\calX \otimes \calY_1 \otimes \calY_2)$ given by
\begin{equation}
J(\Phi_1\odot_A\Phi_2) = \bigl(\I_{\L(\calX)}\otimes\Phi_1\otimes\Phi_2\bigr)(A).
\end{equation}
\end{definition}
Note that the requirement in \eqref{eq:Tr1ATr2A} can be rephrased as
\begin{equation}
\Tr_{\calX_1}(A) = \Tr_{\calX_2}(A) = J(\I_{\L(\calX)}).
\end{equation}
If one defines the operator $A_{\mathrm{JP}}\in\Herm(\calX\otimes\calX_1\otimes\calX_2)$ as
\begin{equation}
A_{\mathrm{JP}} =\I_{\L(\calX)}\odot\I_{\L(\calX)} =\frac{1}{2}\sum_{i,j=1}^{\dim(\calX)} E_{i,j}\otimes \left( \sum_{k=1}^{\dim(\calX)} E_{i,k}\otimes E_{k,j}+E_{k,j}\otimes E_{i,k} \right),
\end{equation}
one recovers the (standard) Jordan product from Definition \ref{def:jordan-prod} by choosing $A=A_{\mathrm{JP}}$. That is, the (standard) Jordan product of linear maps $\Phi_1 \in \T(\calX, \calY_1)$ and $\Phi_1 \in \T(\calX, \calY_1)$ may be expressed as
\begin{equation}
\Phi_1\odot\Phi_2 = \Phi_1\odot_{A_{\mathrm{JP}}}\Phi_2.
\end{equation}
This choice of $A$ is not unique. Indeed, for any choice of nonzero Hermitian operator $X \in\Herm(\calX)$ satisfying $\Tr(X) = 0$, one may define
\begin{equation}
A = A_{\mathrm{JP}} + \I_{\calX}\otimes X \otimes X.
\end{equation}
This operator clearly satisfies \eqref{eq:Tr1ATr2A} and thus defines a generalized Jordan product that is distinct from the standard one $\odot_{A_{\mathrm{JP}}}$.

The generalized Jordan product as defined in Definition \ref{def:genJordProd} possesses all of the key properties outline in Section \ref{sec:JordanProdDef} that are satisfied by the standard Jordan product. In particular, for any choice of trace-preserving linear maps $\Phi_1 \in \T(\calX, \calY_1)$ and $\Phi_2 \in \T(\calX, \calY_2)$, one has
\begin{equation}
\Tr_{\calY_2}(J(\Phi_1\odot_A\Phi_2)) = J(\Phi_1) \qquad\text{and}\qquad\Tr_{\calY_1}(J(\Phi_1\odot_A\Phi_2)) = J(\Phi_2)
\end{equation}
and, moreover, the map $\Phi_1\odot_A\Phi_2$ is also trace preserving when $\Phi_1$ and $\Phi_2$ are trace preserving. In other words, Proposition~\ref{prop:jordan-trace-preserving} holds for generalized Jordan products as well. Indeed, making use of the assumption that $\Phi_2$ is trace preserving, one has that
\begin{equation}
\Tr_{\calY_2}(J(\Phi_1\odot_A\Phi_2)) = (\I_{\L(\calX)}\otimes\Phi_1)\bigl(\Tr_{\calY_2}(A)\bigr) = (\I_{\L(\calX)}\otimes\Phi_1)(J(\I_{\L(\calX)})) = J(\Phi_1).
\end{equation}
An analogous argument shows that $\Tr_{\calY_1}(J(\Phi_1\odot_A\Phi_2)) = J(\Phi_2)$ which follows from the assumption that $\Phi_1$ is trace preserving. Moreover, if $\Phi_1$ and $\Phi_2$ are Hermitian-preserving, then $\Phi_1\odot_A\Phi_2$ is also Hermitian-preserving.

Analogous to the (standard) Jordan product, a generalized Jordan product provides a useful condition to check to see if two channels are compatible.

\begin{proposition}
\label{prop:jordan-gen-compat}
Let $\Phi_1 \in \C(\calX, \calY_1)$ and $\Phi_2 \in \C(\calX, \calY_2)$ be two quantum channels. If there exists $A \in \Herm(\calX \otimes \calX_1 \otimes \calX_2)$, where $\calX = \calX_1 = \calX_2$, satisfying \eqref{eq:Tr1ATr2A} such that $\Phi_1\odot_{A}\Phi_2$ is completely positive, then $\Phi_1\odot_{A}\Phi_2$ compatibilizes $\Phi_1$ and $\Phi_2$.
\end{proposition}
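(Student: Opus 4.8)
The plan is to show that $\Phi_1 \odot_A \Phi_2$ is a channel with the correct marginals, so that whenever it is completely positive it is automatically a valid compatibilizer. The statement has essentially been set up by the discussion immediately preceding it, so the proof reduces to invoking the properties already established for generalized Jordan products.

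First I would observe that, by hypothesis, $\Phi_1\odot_A\Phi_2$ is completely positive. Next, since $\Phi_1$ and $\Phi_2$ are quantum channels they are in particular trace preserving, so by the generalized version of Proposition~\ref{prop:jordan-trace-preserving} (established in the text immediately above the statement, using only the defining condition~\eqref{eq:Tr1ATr2A} on $A$) one has
\begin{equation}
\Tr_{\calY_2}(J(\Phi_1\odot_A\Phi_2)) = J(\Phi_1) \qquad\text{and}\qquad \Tr_{\calY_1}(J(\Phi_1\odot_A\Phi_2)) = J(\Phi_2).
\end{equation}
The same argument (taking a further partial trace) shows that $\Phi_1\odot_A\Phi_2$ is trace preserving. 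Being completely positive and trace preserving, $\Phi_1\odot_A\Phi_2$ is therefore a channel in $\C(\calX,\calY_1\otimes\calY_2)$.

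It then remains only to translate the marginal conditions on Choi representations back into the language of Definition~\ref{def:intro-channelComapatibility}. Writing $\Phi := \Phi_1\odot_A\Phi_2$, the equality $\Tr_{\calY_2}(J(\Phi)) = J(\Phi_1)$ is equivalent to $\Tr_{\calY_2}\circ\Phi = \Phi_1$ as maps (since the Choi representation is a bijection between linear maps and their Choi operators, and the partial trace over $\calY_2$ commutes appropriately with the Choi construction), and likewise $\Tr_{\calY_1}\circ\Phi = \Phi_2$. Hence $\Phi$ satisfies exactly the defining conditions of a compatibilizing channel for $\Phi_1$ and $\Phi_2$.

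There is no genuine obstacle here: the proposition is an immediate corollary of the generalized analogue of Proposition~\ref{prop:jordan-trace-preserving} together with Corollary~\ref{cor:jordan-suff-condition}, mirroring exactly the logic by which Corollary~\ref{cor:jordan-suff-condition} followed from Proposition~\ref{prop:jordan-trace-preserving} in the standard case. The only point requiring any care is to confirm that the trace-preserving property survives the passage to the generalized product, but this was already verified in the preceding paragraph using condition~\eqref{eq:Tr1ATr2A}, so the proof can simply cite those facts. Accordingly I expect the proof to be one or two sentences long.
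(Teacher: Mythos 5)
Your proposal is correct and follows exactly the paper's own route: the paper's proof is the single sentence ``this follows directly from observations in the previous paragraph,'' where that paragraph establishes precisely the marginal identities $\Tr_{\calY_2}(J(\Phi_1\odot_A\Phi_2)) = J(\Phi_1)$ and $\Tr_{\calY_1}(J(\Phi_1\odot_A\Phi_2)) = J(\Phi_2)$ and trace preservation that you invoke. Your version merely makes explicit the translation from Choi-matrix marginal conditions back to the map-level compatibility conditions of Definition~\ref{def:intro-channelComapatibility}, which the paper leaves implicit.
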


\begin{proof}
This follows directly from observations in the previous paragraph.
\end{proof}

This suggests the following definition.

\begin{definition}[Jordan compatible]
We say that $\Phi_1 \in \C(\calX, \calY_1)$ and $\Phi_2 \in \C(\calX, \calY_2)$ are Jordan compatible if there exists an operator $A \in \Herm(\calX \otimes \calX_1 \otimes \calX_2)$ satisfying \eqref{eq:Tr1ATr2A} (where $\calX = \calX_1 = \calX_2$) such that $\Phi_1\odot_{A}\Phi_2$ is completely positive.
\end{definition}

Therefore, if two channels are Jordan compatible, then they are compatible.

\begin{remark}
In this work, we consider both the generalized Jordan product and the standard version.  We sometimes refer to (standard) Jordan compatibility or (generalized) Jordan compatibility to emphasize which one we mean.
\end{remark}
   
%%%%%%%%%%%%%%%%%%%%%%%%%%%

\subsubsection*{Generalizing Jordan products of matrices.}
Following the procedure of generalizing the Jordan product for linear maps, one may use similar ideas to generalize the Jordan product of operators in the following manner. For every Hermitian operator $A\in\Herm(\calX\otimes\calX\otimes\calX)$ having the form $A=A_{\mathrm{JP}}+\I_{\calX}\otimes X\otimes X$ for some fixed choice of Hermitian operator $X\in\Herm(\calX)$ satisfying $\Tr(X)=0$, one may define the \emph{generalized Jordan product of operators} with respect to $A$ as
\begin{equation}
\label{eq:genJordProdOperators}
B\odot_A C = B\odot C + \ip{X\otimes X}{B\otimes C}\I_{\calX}.
\end{equation}
This type of generalized Jordan product of operators provides a condition for checking if two POVMs are compatible. Suppose $\{M_1,\dots,M_m\}\subset\Pos(\calX)$ and $\{N_1,\dots,N_n\}\subset\Pos(\calX)$ are POVMs. If it is the case that
\begin{equation}
M_i\odot_A N_j \geq0
\end{equation}
for each pair of indices $i$ and $j$ (where $\odot_A$ is the generalized Jordan product as defined in \eqref{eq:genJordProdOperators}), then the POVMs $M$ and $N$ are compatible as the operators defined as $P_{i,j}=M_i\odot N_j$ necessarily compose a compatibilizing POVM.

%%%%%%%%%%%%%%%%%%%%%%%%%%% 

\subsection{Jordan product compatibility of channels}
It is natural to ask if the converse to the main result of the previous subsection also holds. That is, if two channels are compatible, are they necessarily Jordan compatible? The results of Proposition~\ref{prop:jordan-Delta-NaS}, Proposition~\ref{thm:jordan-PVM-NaS}, Lemma \ref{lemma:jordan-unitary-NaS} and Lemma \ref{lemma:jordan-constant-NaS} show that this is true if either channel is of a certain type. We now show that this is also true for other classes of channels as well.

In the following, we make use of the \emph{inverse map} (if it exists) of a linear map of the form $\Phi \in \T(\calX, \calY)$. If $\Phi$ is completely positive and invertible as a linear map, its inverse $\Phi^{-1}$ may not necessarily be completely positive. However, the following lemma shows that $\Phi^{-1}$ is necessarily trace preserving if $\Phi$ is trace preserving.

Note the following: let $\Phi \in \T(\calX, \calY)$ and assume that an inverse map $\Phi^{-1} \in \T(\calY, \calX)$ exists. Then the vector spaces $\L(\calX)$ and $\L(\calY)$ are isomorphic and so also $\calX$ and $\calY$ are isomorphic.

\begin{lemma}
\label{lemma:jordan-gen-inverse-TP}
Let $\Phi \in \T(\calX, \calY)$ be an invertible linear map. The map $\Phi$ is trace preserving if and only if its inverse map $\Phi^{-1} \in \T(\calY, \calX)$ is trace preserving.
\end{lemma}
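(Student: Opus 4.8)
The plan is to use the defining property of trace preservation directly, together with the fact that $\Phi$ is a bijection between the relevant operator spaces. Recall that a linear map $\Psi \in \T(\calX, \calY)$ is trace preserving precisely when $\Tr(\Psi(X)) = \Tr(X)$ holds for every $X \in \L(\calX)$. Since $\Phi$ is invertible as a linear map, $\Phi^{-1} \in \T(\calY, \calX)$ is a well-defined linear map, and (as noted just before the statement) $\calX$ and $\calY$ are necessarily isomorphic.

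First I would prove the forward direction. Assuming $\Phi$ is trace preserving, I take an arbitrary operator $Y \in \L(\calY)$ and set $X := \Phi^{-1}(Y) \in \L(\calX)$, so that $\Phi(X) = Y$. Then
\begin{equation}
\Tr(\Phi^{-1}(Y)) = \Tr(X) = \Tr(\Phi(X)) = \Tr(Y),
\end{equation}
where the middle equality is exactly the trace-preserving property of $\Phi$. As $Y$ was arbitrary, $\Phi^{-1}$ is trace preserving. The converse direction is then completely symmetric: it is obtained by interchanging the roles of $\Phi$ and $\Phi^{-1}$ and using $(\Phi^{-1})^{-1} = \Phi$. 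This establishes the equivalence.

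There is essentially no obstacle in this argument: the entire content is that trace preservation is a condition imposed on \emph{every} operator, and that $\Phi$ maps $\L(\calX)$ onto all of $\L(\calY)$, so the constraint transfers across the bijection. The only point requiring a word of care is the appeal to surjectivity of $\Phi$, which is immediate from invertibility.

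As an alternative I could phrase the argument through adjoints, which may fit better stylistically with the rest of the paper: trace preservation of $\Psi$ is equivalent to $\Psi^*(\I_\calY) = \I_\calX$, and since taking adjoints commutes with taking inverses, $(\Phi^{-1})^* = (\Phi^*)^{-1}$. Applying $(\Phi^*)^{-1}$ to both sides of $\Phi^*(\I_\calY) = \I_\calX$ then yields $(\Phi^{-1})^*(\I_\calX) = \I_\calY$, which is precisely trace preservation of $\Phi^{-1}$. Both routes are short, and I would present the surjectivity argument as the main proof since it avoids any discussion of adjoints.
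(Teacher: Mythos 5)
Your proof is correct and is essentially identical to the paper's own argument: both apply the trace-preserving property of $\Phi$ to the operator $\Phi^{-1}(Y)$, use $\Phi(\Phi^{-1}(Y)) = Y$, and dispose of the converse by symmetry between $\Phi$ and $\Phi^{-1}$. The adjoint-based alternative you sketch is also valid but unnecessary, as you note.
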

\begin{proof}
Suppose that $\Phi$ is trace preserving. For every $Y \in \L(\calY)$, one has that
\begin{equation}
\Tr( \Phi^{-1}(Y) ) = \Tr( \Phi( \Phi^{-1}(Y) ) ) = \Tr(Y) ,
\end{equation}
where the first equality follows from the assumption that $\Phi$ is trace preserving. The rest of the proof follows by symmetry between $\Phi$ and $\Phi^{-1}$.
\end{proof}

We now state the equivalence of channel compatibility and Jordan compatibility for certain pairs of channels.

\begin{theorem}
\label{thm:jordan-gen-inverseIFF}
Let $\Phi_1 \in \C(\calX, \calY_1)$ and $\Phi_2 \in \C(\calX, \calY_2)$ be two quantum channels, such that they have \emph{inverse} linear maps $\Phi_1^{-1} \in \T(\calY_1, \calX)$ and $\Phi_2^{-1} \in \T(\calY_2, \calX)$. The channels $\Phi_1$ and $\Phi_2$ are compatible if and only if they are Jordan compatible.
\end{theorem}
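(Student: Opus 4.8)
The plan is to prove the two implications separately. The direction that Jordan compatibility implies compatibility is immediate from Proposition~\ref{prop:jordan-gen-compat}, so the real work lies in the converse. Suppose $\Phi_1$ and $\Phi_2$ are compatible and fix a compatibilizing channel $\Phi\in\C(\calX,\calY_1\otimes\calY_2)$, so that $\Tr_{\calY_2}(J(\Phi)) = J(\Phi_1)$ and $\Tr_{\calY_1}(J(\Phi)) = J(\Phi_2)$. The key idea is to \emph{pull back} the Choi operator of $\Phi$ through the inverse maps. I would define
\begin{equation}
A := (\I_{\L(\calX)}\otimes\Phi_1^{-1}\otimes\Phi_2^{-1})(J(\Phi)) \in \L(\calX\otimes\calX_1\otimes\calX_2),
\end{equation}
where $\calX_1=\calX_2=\calX$, and then argue that $A$ is an admissible operator for Definition~\ref{def:genJordProd} whose generalized Jordan product reproduces $\Phi$.

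First I would verify that $\Phi_1\odot_A\Phi_2 = \Phi$, which is the easy part: since $\Phi_i\circ\Phi_i^{-1} = \I_{\L(\calY_i)}$, applying $\I_{\L(\calX)}\otimes\Phi_1\otimes\Phi_2$ to $A$ returns $J(\Phi)$, which is positive semidefinite because $\Phi$ is a channel. Hence $\Phi_1\odot_A\Phi_2$ is completely positive, and Jordan compatibility will follow once $A$ is shown to be Hermitian and to satisfy the marginal constraint \eqref{eq:Tr1ATr2A}. For Hermiticity, I would use that an invertible Hermitian-preserving map has a Hermitian-preserving inverse: if $Y=Y^*$ and $X=\Phi_i^{-1}(Y)$, then $\Phi_i(X^*)=\Phi_i(X)^*=Y=\Phi_i(X)$, and injectivity forces $X^*=X$. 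Since $J(\Phi)$ is Hermitian and $\I_{\L(\calX)}\otimes\Phi_1^{-1}\otimes\Phi_2^{-1}$ is thus Hermitian-preserving, $A$ is Hermitian. Note that $A$ need not be positive semidefinite, which is precisely why the \emph{generalized} product is required.

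The crux is verifying the trace conditions $\Tr_{\calX_1}(A) = \Tr_{\calX_2}(A) = J(\I_{\L(\calX)})$. Here I would combine two facts. By Lemma~\ref{lemma:jordan-gen-inverse-TP}, the inverse maps $\Phi_1^{-1}$ and $\Phi_2^{-1}$ are trace preserving, so tracing out the output space $\calX_1$ of $\Phi_1^{-1}$ coincides with tracing out its input space $\calY_1$ before $\Phi_1^{-1}$ is applied; this yields
\begin{equation}
\Tr_{\calX_1}(A) = (\I_{\L(\calX)}\otimes\Phi_2^{-1})\bigl(\Tr_{\calY_1}(J(\Phi))\bigr) = (\I_{\L(\calX)}\otimes\Phi_2^{-1})(J(\Phi_2)).
\end{equation}
Then, using $J(\Phi_2) = (\I_{\L(\calX)}\otimes\Phi_2)(J(\I_{\L(\calX)}))$ together with $\Phi_2^{-1}\circ\Phi_2 = \I_{\L(\calX)}$, the right-hand side collapses to $J(\I_{\L(\calX)})$. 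The computation of $\Tr_{\calX_2}(A)$ is entirely symmetric, using $\Tr_{\calY_2}(J(\Phi))=J(\Phi_1)$ and $\Phi_1^{-1}\circ\Phi_1=\I_{\L(\calX)}$.

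The main obstacle, such as it is, is purely bookkeeping: correctly interchanging the partial trace over the output of $\Phi_i^{-1}$ with the partial trace over its input, which rests squarely on the trace preservation supplied by Lemma~\ref{lemma:jordan-gen-inverse-TP}. There is no genuine analytic difficulty. Invertibility is used only to guarantee that the pullback $A$ is well-defined and that its trace marginals unwind to $J(\I_{\L(\calX)})$; this is exactly where the hypothesis that $\Phi_1$ and $\Phi_2$ are invertible as linear maps—though not necessarily as channels—enters.
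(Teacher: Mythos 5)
Your proposal is correct and follows essentially the same route as the paper's proof: your operator $A = (\I_{\L(\calX)}\otimes\Phi_1^{-1}\otimes\Phi_2^{-1})(J(\Phi))$ is exactly the paper's $A = J\bigl((\Phi_1^{-1}\otimes\Phi_2^{-1})\circ\Phi\bigr)$ written in an equivalent form, and both arguments verify the marginal conditions \eqref{eq:Tr1ATr2A} via Lemma~\ref{lemma:jordan-gen-inverse-TP} and conclude by observing that $J(\Phi_1\odot_A\Phi_2)=J(\Phi)\geq 0$. The only (welcome) addition on your part is the explicit check that $A$ is Hermitian, via the fact that an invertible Hermitian-preserving map has a Hermitian-preserving inverse, a point the paper leaves implicit.
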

\begin{proof}
By Proposition~\ref{prop:jordan-gen-compat}, if the channels are Jordan compatible then they are compatible. To prove the converse, assume the channels are compatible and let $\Phi \in \C(\calX, \calY_1 \otimes \calY_2)$ be a compatibilizing channel. Define the operator $A\in\Herm(\calX\otimes\calX_1\otimes\calX_2)$, where $\calX=\calX_1=\calX_2$, as
\begin{equation}
A = J\bigl((\Phi_1^{-1}\otimes\Phi_2^{-1})\circ\Phi)\bigr).
\end{equation}
We now show that $A$ satisfies \eqref{eq:Tr1ATr2A}. Note that
\begin{align}
\Tr_{\calX_1}(A)
&= J\bigl(((\mathrm{Tr}\circ\Phi_1^{-1})\otimes\Phi_2^{-1}) \circ \Phi \bigr)\\
&= J\bigl((\mathrm{Tr}\otimes\Phi_2^{-1})\circ\Phi)\bigr)\\
&= J\bigl(\Phi_2^{-1}\circ(\mathrm{Tr}_{\calY_1}\circ\Phi)\bigr) \\
& = J\bigl(\Phi_2^{-1}\circ\Phi_2\bigr) \\
& = J(\I_{\L(\calX)}),
\end{align}
where equality in the second line follows from the fact that $\Phi_1^{-1}$ is trace preserving by Lemma \ref{lemma:jordan-gen-inverse-TP}. Similarly, one finds that
\begin{equation}
\Tr_{\calX_2}(A) = J\bigl(\Phi_1^{-1}\otimes\mathrm{Tr})\circ\Phi)\bigr)=J(\Phi_1^{-1}\circ\Phi_1) = J(\I_{\L(\calX)}),
\end{equation}
which completes the proof. Note that $J(\Phi_1 \odot_A \Phi_2) \geq 0$ simply because
\begin{equation}
J(\Phi_1 \odot_A \Phi_2) = (\I_{\L(\calX)} \otimes \Phi_1 \otimes \Phi_2)(A) = (\I_{\L(\calX)} \otimes \Phi_1 \otimes \Phi_2) ( J\bigl((\Phi_1^{-1}\otimes\Phi_2^{-1})\circ\Phi)\bigr) ) = J(\Phi).
\end{equation}
\end{proof}

Almost all linear maps in $\T(\calX)$ are invertible. Hence naive numerical approaches cannot be used to search for a pair of channels $\Phi_1, \Phi_2 \in \C(\calX)$ that are compatible but not Jordan compatible (if such a pair exists), as this would involve randomly sampling from a set having zero measure. For this reason, the question of whether compatibility of channels is equivalent to Jordan compatibility remains open. We conjecture that such a pair of channels does not exists. Nonetheless, it can be shown that the set of Jordan-compatible pairs of channels is dense in the set of all pairs of compatible channels, as will be discussed in section \ref{section:geometry}.

%%%%%%%%%%%%%%%%%%%%%%%%%%%%%%%%%%%%%%%% 

\section{Geometry of pairs of compatible quantum channels}
\label{section:geometry}

In this section, we discuss the geometry of the set of compatible pairs of channels. To this end, we introduce some notation. Let $\op{HP}(\calX,\calY)$ denote the space of Hermitian-preserving linear maps from $\L(\calX)$ to $\L(\calY)$. This is a real vector space with dimension
\begin{equation}
\op{dim}\bigl(\op{HP}(\calX,\calY)\bigr)=\op{dim}(\calX)^2\op{dim}(\calY)^2.
\end{equation}
Here we are concerned with the space $\op{HP}(\calX,\calY_1)\oplus\op{HP}(\calX,\calY_2)$ of pairs of such linear maps.

\begin{definition}
We define the following sets of pairs of Hermitian-preserving linear maps:
\begin{itemize}
\item $\op{CPairs}(\calX,\calY_1,\calY_2) = \{(\Phi_1,\Phi_2)\,:\, \Phi_1\in\op{C}(\calX,\calY_1),\, \Phi_2\in\C(\calX,\calY_2)\}$
\item $\op{Comp}(\calX,\calY_1,\calY_2) = \{(\Phi_1,\Phi_2)\in\op{CPairs}\,:\, \Phi_1\text{ and }\Phi_2\text{ are compatible}\}$
\item $\op{JComp}(\calX,\calY_1,\calY_2) = \{(\Phi_1,\Phi_2)\in\op{CPairs}\,:\, \Phi_1\text{ and }\Phi_2\text{ are Jordan compatible}\}$.
\end{itemize}
\end{definition}

It is evident that we have the containments
\begin{equation}
\op{JComp}(\calX,\calY_1,\calY_2)\subseteq\op{Comp}(\calX,\calY_1,\calY_2)\subsetneq \op{CPairs}(\calX,\calY_1,\calY_2).
\end{equation}
The remainder of this section is dedicated to stating and proving a few facts regarding the geometry of these sets, which we summarize here to outline our approach:
\begin{enumerate}
\item The set $\op{Comp}(\calX,\calY_1,\calY_2)$ is compact and convex.
\item The set $\op{Comp}(\calX,\calY_1,\calY_2)$ has positive measure as a subset of $\op{CPairs}(\calX,\calY_1,\calY_2)$. (That is, a randomly selected pair of channels has a nonzero probability of being compatible.)
\item In the case when $\calX=\calY_1=\calY_2$, almost all pairs of compatible channels are Jordan compatible. (That is, the set of non-Jordan-compatible pairs has zero measure as a subset of $\op{Comp}(\calX,\calX,\calX)$).

\item In particular, one has that $\overline{\op{JComp}(\calX,\calX,\calX)} = \op{Comp}(\calX,\calX,\calX)$ (where $\overline{\mathcal{A}}$ indicates the topological closure of a set $\mathcal{A}$).
\end{enumerate}

We stress that the above implies that with probability $1$, a randomly selected pair of compatible channels (with respect to the measure as discussed below) is also Jordan compatible. This rules out, for instance, a brute-force random search to find a pair of channels that are compatible but not Jordan compatible. 
   
Since we are referring to probabilities and measure-zero sets of pairs of channels, it is necessary to clarify the measure on the set $\op{CPairs}(\calX,\calY_1,\calY_2)$. Note that $\op{CPairs}(\calX,\calY_1,\calY_2)$ is a compact and convex subset of the affine subspace of pairs of trace-preserving maps in the Euclidean space $\op{HP}(\calX,\calY_1)\oplus\op{HP}(\calX,\calY_2)$. Thus the set $\op{CPairs}$ is a submanifold and has a measure that is induced by the natural Lebesgue measure of the underlying Euclidean space. (See, e.g., \cite[Section 5.5]{Lerner2014}.)

\subsubsection*{Convexity of the set of pairs of compatible channels}

Convexity of the set $\op{CPairs}(\calX,\calY_1,\calY_2)$ follows trivially from convexity of $\C(\calX,\calY_1)$ and $\C(\calX,\calY_2)$. Importantly, the set of compatible pairs of channels is also convex---a fact that we prove in the following lemma.
\begin{lemma}
\label{lemConvexity}
The set $\op{Comp}(\calX,\calY_1,\calY_2)$ is convex.
\end{lemma}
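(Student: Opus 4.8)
The plan is to establish convexity directly by taking convex combinations of the compatibilizing channels themselves. First I would fix two pairs $(\Phi_1,\Phi_2)$ and $(\Phi_1',\Phi_2')$ in $\op{Comp}(\calX,\calY_1,\calY_2)$ together with compatibilizing channels $\Phi,\Phi'\in\C(\calX,\calY_1\otimes\calY_2)$, so that by Definition~\ref{def:intro-channelComapatibility} one has $\Tr_{\calY_2}\circ\Phi=\Phi_1$, $\Tr_{\calY_1}\circ\Phi=\Phi_2$, and the analogous identities for $\Phi'$. For a scalar $\lambda\in[0,1]$, the natural candidate compatibilizer for the convex combination $\bigl(\lambda\Phi_1+(1-\lambda)\Phi_1',\,\lambda\Phi_2+(1-\lambda)\Phi_2'\bigr)$ is the map $\lambda\Phi+(1-\lambda)\Phi'$.

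The key steps are then twofold. First, I would observe that $\lambda\Phi+(1-\lambda)\Phi'$ is again a quantum channel in $\C(\calX,\calY_1\otimes\calY_2)$, since the set of channels is convex: a convex combination of completely positive maps is completely positive, and a convex combination of trace-preserving maps is trace preserving. Second, I would verify the two marginal conditions, which follow immediately from the linearity of the partial trace maps $\Tr_{\calY_1}$ and $\Tr_{\calY_2}$:
\begin{equation}
\Tr_{\calY_2}\circ\bigl(\lambda\Phi+(1-\lambda)\Phi'\bigr)
= \lambda\,\Tr_{\calY_2}\circ\Phi + (1-\lambda)\,\Tr_{\calY_2}\circ\Phi'
= \lambda\Phi_1 + (1-\lambda)\Phi_1',
\end{equation}
and the entirely analogous computation with $\Tr_{\calY_1}$ yields $\lambda\Phi_2+(1-\lambda)\Phi_2'$. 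Hence $\lambda\Phi+(1-\lambda)\Phi'$ compatibilizes the combined pair, which therefore lies in $\op{Comp}(\calX,\calY_1,\calY_2)$.

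There is essentially no obstacle here: the entire argument rests only on the convexity of the channel set and the linearity of composition with the partial trace, so the conclusion is immediate. (One could equally phrase the same proof at the level of Choi representations via the semidefinite feasibility formulation in~\eqref{eq:channelcompatibility-sdp}, where convexity is manifest because the feasible region is an intersection of the positive semidefinite cone with affine constraints; but the map-level argument above is the most transparent.)
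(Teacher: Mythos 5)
Your proof is correct and follows essentially the same route as the paper: take convex combinations of the compatibilizing channels and use linearity of the partial trace to verify the marginal conditions. Your explicit remark that $\lambda\Phi+(1-\lambda)\Phi'$ is again a channel (which the paper leaves implicit) is a welcome touch of completeness, but the argument is otherwise identical.
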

\begin{proof}
Let $(\Phi_1,\Phi_2)$ and $(\Psi_1,\Psi_2)$ be pairs of compatible channels, let $\Phi,\Psi\in\C(\calX,\calY_1\otimes\calY_2)$ be respective channels that compatibilize these pairs, and let $\lambda\in[0,1]$. One has that
\begin{equation}
\Tr_{\calY_2}\circ(\lambda\Phi+(1-\lambda)\Psi)= \lambda\Tr_{\calY_2} \circ \Phi + (1-\lambda)\Tr_{\calY_2} \circ \Psi = \lambda\Phi_1 + (1-\lambda)\Psi_1
\end{equation}
and, analogously, that $\Tr_{\calY_1}\circ(\lambda\Phi+(1-\lambda)\Psi) = \lambda\Phi_2 + (1-\lambda)\Psi_2$. It follows that $\lambda\Phi+(1-\lambda)\Psi$ is a channel that compatibilizes the pair
\begin{equation}
\lambda(\Phi_1,\Phi_2)+(1-\lambda)(\Psi_1,\Psi_2) = \bigl(\lambda\Phi_1+(1-\lambda)\Psi_1,\,\lambda\Phi_2+(1-\lambda)\Psi_2\bigr)
\end{equation}
and thus this pair is compatible.
\end{proof}

We now prove the following useful result, which states that mixing any pair of channels with a pair of constant channels yields a compatible pair.

\begin{proposition}
\label{prop:Phi1Phi2-mixed-with-Psi1Psi2}
Let $\rho_1\in\D(\calY_1)$ and $\rho_2\in\D(\calY_2)$ be states and let $\Phi_{\rho_1}\in\C(\calX,\calY_1)$ and $\Phi_{\rho_2}\in\C(\calX,\calY_2)$ be the constant channels defined as
\begin{equation}
\Phi_{\rho_1}(X) = \Tr(X)\rho_1\qquad\text{and}\qquad \Phi_{\rho_2}(X) = \Tr(X)\rho_2
\end{equation}
for every $X\in\L(\calX)$. For every other pair of channels $\Psi_1\in\C(\calX,\calY_1)$ and $\Psi_2\in\C(\calX,\calY_2)$, it holds that
\begin{equation}
\left(\frac{1}{2}\Psi_1+\frac{1}{2}\Phi_{\rho_1},\, \frac{1}{2}\Psi_2+\frac{1}{2}\Phi_{\rho_2}\right)\in\op{Comp}(\calX,\calY_1,\calY_2).
\end{equation}
\end{proposition}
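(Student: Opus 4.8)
The plan is to exhibit an explicit compatibilizing channel. The natural guess is to pair each genuine channel with the constant preparation belonging to the \emph{other} output system, namely to define the map $\Phi\in\T(\calX,\calY_1\otimes\calY_2)$ by
\[
\Phi(X) = \tfrac{1}{2}\,\Psi_1(X)\otimes\rho_2 + \tfrac{1}{2}\,\rho_1\otimes\Psi_2(X)
\]
for every $X\in\L(\calX)$. The intuition behind this choice is that tracing out $\calY_2$ should leave $\Psi_1(X)$ intact from the first term (since $\Tr(\rho_2)=1$) while collapsing the second term to the constant $\Tr(X)\rho_1$ (since $\Psi_2$ is trace preserving), and symmetrically for $\Tr_{\calY_1}$.

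First I would verify the two marginal conditions. Using $\Tr(\rho_2)=1$ together with trace preservation of $\Psi_2$, one computes $\Tr_{\calY_2}(\Phi(X)) = \tfrac12\Psi_1(X) + \tfrac12\Tr(X)\rho_1$, which is exactly $\tfrac12\Psi_1(X) + \tfrac12\Phi_{\rho_1}(X)$. The symmetric computation, using $\Tr(\rho_1)=1$ and trace preservation of $\Psi_1$, gives $\Tr_{\calY_1}(\Phi(X)) = \tfrac12\Psi_2(X)+\tfrac12\Phi_{\rho_2}(X)$, matching the second marginal demanded by the statement. Hence $\Phi$ has the correct marginals, provided it is a channel.

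Next I would check that $\Phi$ is a genuine quantum channel. For complete positivity, observe that $X\mapsto\Psi_1(X)\otimes\rho_2$ is the composition of the completely positive map $\Psi_1$ with the completely positive map $Y\mapsto Y\otimes\rho_2$, and likewise $X\mapsto\rho_1\otimes\Psi_2(X)$ is completely positive; therefore $\Phi$, being a nonnegative combination of completely positive maps, is completely positive. Trace preservation is then automatic from the marginal already computed, since $\Tr(\Phi(X)) = \Tr\bigl(\Tr_{\calY_2}(\Phi(X))\bigr) = \tfrac12\Tr(X)+\tfrac12\Tr(X)=\Tr(X)$. Thus $\Phi$ is a channel compatibilizing the stated pair.

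The only real decision in this argument is the choice of compatibilizer; once the correct pairing of each $\Psi_i$ with the opposite constant channel is made, every verification reduces to a one-line partial-trace computation, so I do not anticipate any genuine obstacle. As a cleaner alternative that avoids writing $\Phi$ down, I could instead invoke Lemma~\ref{lemma:jordan-constant-NaS}, which guarantees that the pairs $(\Psi_1,\Phi_{\rho_2})$ and $(\Phi_{\rho_1},\Psi_2)$ are each compatible, and then apply the convexity of $\op{Comp}(\calX,\calY_1,\calY_2)$ from Lemma~\ref{lemConvexity} to the midpoint of these two compatible pairs, which is precisely the pair appearing in the statement.
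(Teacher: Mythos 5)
Your proof is correct. Your primary route---writing down the explicit compatibilizer $\Phi(X)=\tfrac12\Psi_1(X)\otimes\rho_2+\tfrac12\,\rho_1\otimes\Psi_2(X)$ and checking the two marginals, complete positivity, and trace preservation by hand---differs from the paper's proof, which instead observes via Lemma~\ref{lemma:jordan-constant-NaS} that $(\Psi_1,\Phi_{\rho_2})$ and $(\Phi_{\rho_1},\Psi_2)$ are each compatible and then takes the midpoint of these two pairs using convexity (Lemma~\ref{lemConvexity}). The two arguments are closely related: the natural compatibilizers of those two pairs are $X\mapsto\Psi_1(X)\otimes\rho_2$ and $X\mapsto\rho_1\otimes\Psi_2(X)$, and averaging compatibilizers (which is exactly what the proof of Lemma~\ref{lemConvexity} does) produces precisely your map $\Phi$; so your construction is what one obtains by unwinding the paper's abstract argument into a single formula. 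What your version buys is self-containment---it needs nothing beyond trace preservation of $\Psi_1,\Psi_2$ and elementary closure properties of completely positive maps, with no appeal to prior lemmas---while the paper's version is shorter and reuses machinery already established (and scales more cleanly if one wants to mix more than two pairs). Your closing alternative, invoking Lemma~\ref{lemma:jordan-constant-NaS} together with Lemma~\ref{lemConvexity}, is precisely the paper's own proof, so you have in effect given both arguments.
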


\begin{proof}
\label{lem:halfway-pairs-are-compatible}
It is evident that the constant channels $\Phi_{\rho_1}$ and $\Phi_{\rho_2}$ are each compatible with every channel in $\C(\calX,\calY_2)$ and $\C(\calX,\calY_1)$ (see Lemma \ref{lemma:jordan-constant-NaS}). In particular, one has that
\begin{equation}
(\Psi_1,\Phi_{\rho_2})\in\op{Comp}(\calX,\calY_1,\calY_2)
\qquad\text{and}\qquad
(\Phi_{\rho_1},\Psi_2)\in\op{Comp}(\calX,\calY_1,\calY_2).
\end{equation}
It follows from convexity (Lemma~\ref{lemConvexity}) that the pair
\begin{equation}
\frac{1}{2}(\Psi_1,\Phi_{\rho_2}) + \frac{1}{2}(\Phi_{\rho_1},\Psi_2) =
\left(\frac{1}{2}\Psi_1+\frac{1}{2}\Phi_{\rho_1},\, \frac{1}{2}\Psi_2+\frac{1}{2}\Phi_{\rho_2}\right)
\end{equation}
is compatible.
\end{proof}

Importantly, we also point out that the set of compatible pairs is also closed, as proved below.

\begin{proposition}
\label{prop:comp-is-closed}
The set $\op{Comp}(\calX,\calY_1,\calY_2)$ is compact.
\end{proposition}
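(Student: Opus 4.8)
The plan is to use the fact that in a finite-dimensional real vector space compactness is equivalent to being closed and bounded, and to reduce the nontrivial direction (closedness) to a compactness argument for the set of compatibilizing channels. Since $\op{Comp}(\calX,\calY_1,\calY_2)\subseteq\op{CPairs}(\calX,\calY_1,\calY_2)$ and the latter is already known to be bounded (indeed compact, as a compact convex subset of the affine space of pairs of trace-preserving maps), boundedness of $\op{Comp}(\calX,\calY_1,\calY_2)$ is immediate. It therefore remains only to establish that $\op{Comp}(\calX,\calY_1,\calY_2)$ is closed.

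To show closedness, I would take an arbitrary sequence $\{(\Phi_1^{(n)},\Phi_2^{(n)})\}_{n}$ in $\op{Comp}(\calX,\calY_1,\calY_2)$ converging to some pair $(\Phi_1,\Phi_2)$, and argue that the limit is itself compatible. By Definition~\ref{def:intro-channelComapatibility}, for each $n$ there is a compatibilizing channel $\Phi^{(n)}\in\C(\calX,\calY_1\otimes\calY_2)$ satisfying $\Tr_{\calY_2}\circ\Phi^{(n)}=\Phi_1^{(n)}$ and $\Tr_{\calY_1}\circ\Phi^{(n)}=\Phi_2^{(n)}$. The crucial observation is that all of these compatibilizers lie in the single set $\C(\calX,\calY_1\otimes\calY_2)$, which is compact: via the Choi isomorphism it corresponds to the set of operators $J\geq 0$ with $\Tr_{\calY_1\otimes\calY_2}(J)=\I_\calX$, a closed and bounded subset of a finite-dimensional space. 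Hence, by passing to a subsequence if necessary, I may assume that $\Phi^{(n)}$ converges to some channel $\Phi\in\C(\calX,\calY_1\otimes\calY_2)$.

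Finally, I would pass to the limit in the two constraint equations. The maps $\Psi\mapsto\Tr_{\calY_2}\circ\Psi$ and $\Psi\mapsto\Tr_{\calY_1}\circ\Psi$ are linear, hence continuous, so that $\Tr_{\calY_2}\circ\Phi=\lim_n\Tr_{\calY_2}\circ\Phi^{(n)}=\lim_n\Phi_1^{(n)}=\Phi_1$, and analogously $\Tr_{\calY_1}\circ\Phi=\Phi_2$. Thus $\Phi$ is a compatibilizer for $(\Phi_1,\Phi_2)$, so the limit pair lies in $\op{Comp}(\calX,\calY_1,\calY_2)$ and the set is closed, completing the argument.

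I expect the only genuine subtlety to be the extraction of a convergent subsequence of compatibilizers: a priori the channels $\Phi^{(n)}$ need not converge, and indeed need not even be uniquely determined by their marginals (as noted in the footnote following Definition~\ref{def:intro-channelComapatibility}). The argument succeeds precisely because the compatibilizers are constrained to the compact set $\C(\calX,\calY_1\otimes\calY_2)$, so sequential compactness supplies a convergent subsequence whose limit serves as the required compatibilizer. Once this compactness is invoked, the remaining steps are routine applications of continuity.
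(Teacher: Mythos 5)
Your proof is correct and rests on exactly the same ingredients as the paper's: compactness of the set $\C(\calX,\calY_1\otimes\calY_2)$ of potential compatibilizers and continuity of the linear map $\Phi\mapsto(\Tr_{\calY_2}\circ\Phi,\Tr_{\calY_1}\circ\Phi)$. The paper simply compresses your argument into one line---$\op{Comp}(\calX,\calY_1,\calY_2)$ is the image of a compact set under a continuous map, hence compact---whereas you unpack the same fact sequentially via closed-plus-bounded and a subsequence extraction.
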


\begin{proof}
This may be proved by observing that $\op{Comp}(\calX,\calY_1,\calY_2)$ is the image of the set of channels $\C(\calX,\calY_1\otimes\calY_2)$ under the linear map $\op{HP}(\calX,\calY_1\otimes\calY_2)\rightarrow\op{HP}(\calX,\calY_1)\oplus \op{HP}(\calX,\calY_2)$ that is defined by $\Phi\mapsto({\Tr_{\calY_2}}\circ\Phi,{\Tr_{\calY_1}}\circ\Phi)$. The desired result follows from the fact that $\C(\calX,\calY_1\otimes\calY_2)$ is compact.
\end{proof}

\subsubsection*{Norm and measure for pairs of channels}

We may define a norm on the real vector space $\op{HP}(\calX,\calY_1)\oplus\op{HP}(\calX,\calY_2)$ as follows. For a pair $(\Phi_1,\Phi_2)$ of Hermitian-preserving maps, define
\begin{equation}
\lVert (\Phi_1,\Phi_2)\rVert_J = 
\lVert J(\Phi_1)\rVert + \lVert J(\Phi_2)\rVert
\end{equation}
where $\lVert J(\Phi_1)\rVert$ and $\lVert J(\Phi_2)\rVert$ denote the operator norms of the Choi representations of $\Phi_1$ and $\Phi_2$. The following lemma shows the existence of a ball of positive radius that is fully contained in the set of pairs of channels.

\begin{lemma}
\label{lem:ball-pairs-of-channels}
Let $\Omega_{\calY_1}\in\C(\calX,\calY_1)$ and $\Omega_{\calY_2}\in\C(\calX,\calY_2)$ be the constant channels defined as
\begin{equation}
\label{eq:Psi1Psi2depolarizing}
\Omega_{\calY_1}(X) = \frac{\Tr(X)}{\op{dim}(\calY_1)}\I_{\calY_1}\qquad\text{and}\qquad \Omega_{\calY_2}(X) = \frac{\Tr(X)}{\op{dim}(\calY_2)}\I_{\calY_2}
\end{equation}
for all $X\in\L(\calX)$ and let $(\Phi_1,\Phi_2)\in\op{HP}(\calX,\calY_1)\oplus\op{HP}(\calX,\calY_1)$ be a pair of Hermitian-preserving maps for which
\begin{equation}
\label{eq:Jnorm_of_difference_of_pairs_mindy1dy2}
\big\lVert(\Phi_1,\Phi_2) - (\Omega_{\calY_1},\Omega_{\calY_2})\big\rVert_J \leq \min\left\{\frac{1}{\op{dim}(\calY_1)},\frac{1}{\op{dim}(\calY_2)}\right\}.
\end{equation}
If $\Phi_1$ and $\Phi_2$ are also trace preserving, then $(\Phi_1,\Phi_2)\in\op{CPairs}(\calX,\calY_1,\calY_2)$.
\end{lemma}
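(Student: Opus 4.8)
The plan is to unpack what membership in $\op{CPairs}(\calX,\calY_1,\calY_2)$ actually requires and then reduce the claim to a single complete-positivity check. By definition, $(\Phi_1,\Phi_2)\in\op{CPairs}(\calX,\calY_1,\calY_2)$ means precisely that $\Phi_1\in\C(\calX,\calY_1)$ and $\Phi_2\in\C(\calX,\calY_2)$, i.e.\ that each map is completely positive and trace preserving; it does \emph{not} assert compatibility (that would be membership in $\op{Comp}$), so no joint channel on $\calY_1\otimes\calY_2$ need be built. Since $\Phi_1$ and $\Phi_2$ are assumed trace preserving and are Hermitian-preserving by hypothesis, it suffices to show that each is completely positive, which by the Choi criterion recorded in Section~\ref{section:background} amounts to proving $J(\Phi_1)\geq 0$ and $J(\Phi_2)\geq 0$.

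First I would compute the Choi representations of the two depolarizing channels at the center of the ball. Directly from the definition of the Choi representation, using $\Omega_{\calY_1}(E_{i,j})=\tfrac{\delta_{i,j}}{\op{dim}(\calY_1)}\I_{\calY_1}$, one gets
\begin{equation}
J(\Omega_{\calY_1}) = \frac{1}{\op{dim}(\calY_1)}\I_{\calX\otimes\calY_1}
\qquad\text{and likewise}\qquad
J(\Omega_{\calY_2}) = \frac{1}{\op{dim}(\calY_2)}\I_{\calX\otimes\calY_2}.
\end{equation}
Each is a positive multiple of the identity whose smallest eigenvalue equals $1/\op{dim}(\calY_1)$ and $1/\op{dim}(\calY_2)$, respectively; this is the quantitative statement that the completely depolarizing channels sit in the relative interior of the respective channel sets.

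The key step is a perturbation estimate. Because $\lVert(\Phi_1,\Phi_2)-(\Omega_{\calY_1},\Omega_{\calY_2})\rVert_J$ is by definition the sum $\lVert J(\Phi_1)-J(\Omega_{\calY_1})\rVert+\lVert J(\Phi_2)-J(\Omega_{\calY_2})\rVert$ of two nonnegative operator norms, each summand is bounded above by the whole, so the hypothesis \eqref{eq:Jnorm_of_difference_of_pairs_mindy1dy2} gives $\lVert J(\Phi_1)-J(\Omega_{\calY_1})\rVert\leq 1/\op{dim}(\calY_1)$ and $\lVert J(\Phi_2)-J(\Omega_{\calY_2})\rVert\leq 1/\op{dim}(\calY_2)$ (this is exactly where the $\min$ in the radius is used). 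Both differences are Hermitian, since every map involved is Hermitian-preserving, so for any unit vector $u\in\calX\otimes\calY_1$ I would estimate
\begin{equation}
\ip{u}{J(\Phi_1)u} \geq \frac{1}{\op{dim}(\calY_1)} - \lVert J(\Phi_1)-J(\Omega_{\calY_1})\rVert \geq 0,
\end{equation}
and analogously $\ip{v}{J(\Phi_2)v}\geq 0$ for every unit vector $v\in\calX\otimes\calY_2$. Hence $J(\Phi_1)\geq 0$ and $J(\Phi_2)\geq 0$, so $\Phi_1$ and $\Phi_2$ are both channels and $(\Phi_1,\Phi_2)\in\op{CPairs}(\calX,\calY_1,\calY_2)$.

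The only real subtlety—and the single point I would handle with care—is the bookkeeping connecting the norm $\lVert\cdot\rVert_J$ on the space of pairs to the per-component operator-norm bounds, together with the Hermiticity of the two Choi differences that legitimizes the smallest-eigenvalue (expectation-value) estimate. Everything else is routine, and crucially the argument never appeals to a compatibilizing channel, because the conclusion of this lemma is only that the pair lies in $\op{CPairs}$ (the set of pairs of channels), not in $\op{Comp}$.
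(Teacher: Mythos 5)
Your proposal is correct and follows essentially the same route as the paper's proof: extract the per-component bounds $\lVert J(\Phi_1)-J(\Omega_{\calY_1})\rVert \leq 1/\op{dim}(\calY_1)$ and $\lVert J(\Phi_2)-J(\Omega_{\calY_2})\rVert \leq 1/\op{dim}(\calY_2)$ from the hypothesis, note that $J(\Omega_{\calY_1})$ and $J(\Omega_{\calY_2})$ are the corresponding multiples of the identity, and conclude $J(\Phi_1)\geq 0$ and $J(\Phi_2)\geq 0$. Your explicit unit-vector eigenvalue estimate merely spells out what the paper leaves implicit, and your observation that the conclusion concerns $\op{CPairs}$ rather than $\op{Comp}$ (so no compatibilizer is needed) matches the paper's reading of the statement.
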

\begin{proof}
Suppose that the Hermitian-preserving linear maps $\Phi_1$ and $\Phi_2$ are trace preserving. It suffices to show that $\Phi_1$ and $\Phi_2$ are completely positive. Note that the inequality in \eqref{eq:Jnorm_of_difference_of_pairs_mindy1dy2} implies that
\begin{equation}
\label{eq:JPhi-minus-Psi}
\lVert J(\Phi_1)-J(\Omega_{\calY_1})\rVert \leq \frac{1}{\op{dim}(\calY_1)}
\qquad\text{and}\qquad
\lVert J(\Phi_2)-J(\Omega_{\calY_2})\rVert \leq \frac{1}{\op{dim}(\calY_2)}.
\end{equation}
Observe that
\begin{equation}
\label{eq:JPsi1-JPsi2}
J(\Omega_{\calY_1}) = \frac{1}{\op{dim}(\calY_1)}\I_\calX\otimes\I_{\calY_1} \qquad\text{and}\qquad J(\Omega_{\calY_1}) = \frac{1}{\op{dim}(\calY_2)}\I_\calX\otimes\I_{\calY_2}.
\end{equation}
The inequalities in \eqref{eq:JPhi-minus-Psi}, together with the equalities in \eqref{eq:JPsi1-JPsi2}, imply that $J(\Phi_1)\geq0$ and $J(\Phi_2)\geq0$, and thus $\Phi_1$ and $\Phi_2$ are completely positive, as required.
\end{proof}

For the channels $\Omega_{\calY_1}$ and $\Omega_{\calY_2}$ defined in \eqref{eq:Psi1Psi2depolarizing}, Lemma \ref{lem:ball-pairs-of-channels} implies that, for any pair $(\Phi_1,\Phi_2)$ of trace-preserving linear maps that is within a distance of $\min\{1/\op{dim}(\calY_1),1/\op{dim}(\calY_2)\}$ from the pair of channels $(\Omega_{\calY_1},\Omega_{\calY_2})$, the maps $\Phi_1$ and $\Phi_2$ are also channels. This fact, together with the result from Proposition \ref{prop:Phi1Phi2-mixed-with-Psi1Psi2}, implies the following proposition.

\begin{proposition}
Let $\Phi_1\in\C(\calX,\calY_1)$ and $\Phi_2\in\C(\calX,\calY_2)$ be channels, let $\Omega_{\calY_1}$ and $\Omega_{\calY_2}$ be the channels as defined in Equation~\eqref{eq:Psi1Psi2depolarizing}, and suppose that
\begin{equation}
\lVert (\Phi_1,\Phi_2) - (\Omega_{\calY_1},\Omega_{\calY_2})\rVert_J \leq \frac{1}{2}\min\left\{\frac{1}{\op{dim}(\calY_1)},\frac{1}{\op{dim}(\calY_2)}\right\}.
\end{equation}
Then the channels $\Phi_1$ and $\Phi_2$ are compatible.
\end{proposition}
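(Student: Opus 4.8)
The plan is to realize the pair $(\Phi_1,\Phi_2)$ as the midpoint of a pair of \emph{genuine} channels and the pair of depolarizing channels, so that compatibility follows immediately from Proposition~\ref{prop:Phi1Phi2-mixed-with-Psi1Psi2}. The key observation is that each $\Omega_{\calY_i}$ is a constant channel $\Phi_{\rho_i}$ with $\rho_i=\I_{\calY_i}/\op{dim}(\calY_i)$ the maximally mixed state, which is exactly the form required by that proposition.

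First I would define the ``reflected'' maps
\begin{equation}
\Psi_1 = 2\Phi_1 - \Omega_{\calY_1} \qquad\text{and}\qquad \Psi_2 = 2\Phi_2 - \Omega_{\calY_2},
\end{equation}
chosen precisely so that $\tfrac{1}{2}\Psi_1 + \tfrac{1}{2}\Omega_{\calY_1} = \Phi_1$ and $\tfrac{1}{2}\Psi_2 + \tfrac{1}{2}\Omega_{\calY_2} = \Phi_2$. Each $\Psi_i$ is Hermitian-preserving as a real-linear combination of Hermitian-preserving maps, and each is trace preserving since $\Tr(\Psi_i(X)) = 2\Tr(X) - \Tr(X) = \Tr(X)$ for every $X$. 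It therefore remains only to verify that $\Psi_1$ and $\Psi_2$ are completely positive, i.e., that $(\Psi_1,\Psi_2)\in\op{CPairs}(\calX,\calY_1,\calY_2)$.

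For this I would invoke Lemma~\ref{lem:ball-pairs-of-channels}. Passing to Choi representations and using linearity of $J$ gives $J(\Psi_i)-J(\Omega_{\calY_i}) = 2\bigl(J(\Phi_i)-J(\Omega_{\calY_i})\bigr)$, so by the definition of $\lVert\cdot\rVert_J$,
\begin{equation}
\bigl\lVert(\Psi_1,\Psi_2)-(\Omega_{\calY_1},\Omega_{\calY_2})\bigr\rVert_J = 2\bigl\lVert(\Phi_1,\Phi_2)-(\Omega_{\calY_1},\Omega_{\calY_2})\bigr\rVert_J \leq \min\left\{\tfrac{1}{\op{dim}(\calY_1)},\tfrac{1}{\op{dim}(\calY_2)}\right\},
\end{equation}
where the final inequality is the hypothesis. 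Since $\Psi_1$ and $\Psi_2$ are trace preserving, Lemma~\ref{lem:ball-pairs-of-channels} then guarantees that both are channels.

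Finally, applying Proposition~\ref{prop:Phi1Phi2-mixed-with-Psi1Psi2} with the channels $\Psi_1,\Psi_2$ and the states $\rho_1=\I_{\calY_1}/\op{dim}(\calY_1)$ and $\rho_2=\I_{\calY_2}/\op{dim}(\calY_2)$ shows that the pair $\bigl(\tfrac{1}{2}\Psi_1+\tfrac{1}{2}\Omega_{\calY_1},\,\tfrac{1}{2}\Psi_2+\tfrac{1}{2}\Omega_{\calY_2}\bigr) = (\Phi_1,\Phi_2)$ is compatible, which is the claim. I expect no genuine obstacle here; the only point requiring care is the factor-of-$2$ bookkeeping in the norm bound, which is exactly why the admissible radius in the hypothesis is half the radius appearing in Lemma~\ref{lem:ball-pairs-of-channels}.
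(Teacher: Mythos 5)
Your proof is correct and is essentially identical to the paper's own argument: the same reflection $\Psi_i = 2\Phi_i - \Omega_{\calY_i}$, the same application of Lemma~\ref{lem:ball-pairs-of-channels} to certify that the reflected maps are channels, and the same conclusion via Proposition~\ref{prop:Phi1Phi2-mixed-with-Psi1Psi2}. In fact your write-up is slightly cleaner than the paper's, which contains a typo ($\Omega_{\calY_2}$ where $\Omega_{\calY_1}$ is meant in the definition of $\Psi_1$) and omits the explicit check that the $\Psi_i$ are Hermitian-preserving.
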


\begin{proof}
Consider the linear maps defined as $\Psi_1 = 2\Phi_1 - \Omega_{\calY_2}$ and $\Psi_2 = 2\Phi_2 - \Omega_{\calY_2}$. It is evident that these maps are trace-preserving. We have that
\begin{equation}
\lVert (\Psi_1,\Psi_2) - (\Omega_{\calY_1},\Omega_{\calY_2})\rVert_J = 2\lVert (\Phi_1,\Phi_2)-(\Omega_{\calY_1},\Omega_{\calY_2})\rVert_J \leq \min\left\{\frac{1}{\op{dim}(\calY_1)},\frac{1}{\op{dim}(\calY_2)}\right\},
\end{equation}
and thus $\Psi_1$ and $\Psi_2$ are channels by Lemma \ref{lem:ball-pairs-of-channels}. It follows from Proposition \ref{prop:Phi1Phi2-mixed-with-Psi1Psi2} that the pair
\begin{equation}
(\Phi_1,\Phi_2) = \frac{1}{2}(\Psi_1,\Psi_2) + \frac{1}{2}(\Omega_{\calY_1},\Omega_{\calY_2})
\end{equation}
is compatible, as desired.
\end{proof}

The existence of a ball of positive radius within the set of all pairs of compatible channels implies the following corollary.

\begin{corollary}
The {sets $\op{Comp}(\calX,\calY_1,\calY_2)$ and $\op{CPairs}(\calX,\calY_1,\calY_2)$ have the same dimension as convex sets}. In particular, $\op{Comp}(\calX,\calY_1,\calY_2)$ has positive measure in $\op{CPairs}(\calX,\calY_1,\calY_2)$.
\end{corollary}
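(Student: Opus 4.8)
The plan is to read off the entire corollary from the preceding proposition, which has already produced a ball of positive radius around the pair of depolarizing channels $(\Omega_{\calY_1},\Omega_{\calY_2})$ inside $\op{Comp}(\calX,\calY_1,\calY_2)$. First I would fix the ambient affine subspace: let $V\subseteq\op{HP}(\calX,\calY_1)\oplus\op{HP}(\calX,\calY_2)$ denote the affine subspace consisting of pairs $(\Phi_1,\Phi_2)$ of trace-preserving Hermitian-preserving maps. All of $\op{JComp}$, $\op{Comp}$, and $\op{CPairs}$ lie in $V$, and the measure on $\op{CPairs}$ referred to in the statement is exactly the Lebesgue measure of the Euclidean space $V$ restricted to $\op{CPairs}$. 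The guiding principle is that ``dimension as a convex set'' means the dimension of the affine hull, and a convex subset of $V$ has affine hull equal to $V$ precisely when it contains a relatively open ball of positive radius in $V$.

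I would first record that $\op{CPairs}(\calX,\calY_1,\calY_2)$ already has affine hull equal to $V$. By Lemma~\ref{lem:ball-pairs-of-channels}, every trace-preserving Hermitian-preserving pair within $\lVert\cdot\rVert_J$-distance $\min\{1/\op{dim}(\calY_1),1/\op{dim}(\calY_2)\}$ of $(\Omega_{\calY_1},\Omega_{\calY_2})$ is automatically a pair of channels; hence $\op{CPairs}$ contains a relatively open ball in $V$ of that radius, so $\op{aff}(\op{CPairs}) = V$.

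Next I would invoke the preceding proposition. Setting $r=\tfrac12\min\{1/\op{dim}(\calY_1),1/\op{dim}(\calY_2)\}$, any pair $(\Phi_1,\Phi_2)\in V$ with $\lVert(\Phi_1,\Phi_2)-(\Omega_{\calY_1},\Omega_{\calY_2})\rVert_J\le r$ is first a pair of channels, since $r$ lies below the threshold of Lemma~\ref{lem:ball-pairs-of-channels}, and then compatible by the proposition. Thus the relatively open ball of radius $r$ around $(\Omega_{\calY_1},\Omega_{\calY_2})$ in $V$ is contained in $\op{Comp}(\calX,\calY_1,\calY_2)$, which forces $\op{aff}(\op{Comp}) = V$ as well. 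Combining this with the previous paragraph yields $\op{dim}(\op{Comp}) = \op{dim}(V) = \op{dim}(\op{CPairs})$, establishing the dimension claim.

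Finally, for the measure statement, that same ball of radius $r$ is a relatively open subset of $V$ of positive Lebesgue measure contained in $\op{Comp}\subseteq\op{CPairs}$; hence $\op{Comp}$ has positive measure as a subset of $\op{CPairs}$. There is no genuine obstacle here: the entire content has been packaged into the preceding proposition. The single point demanding care is to confirm that the ball it supplies is a ball relative to $V$ (that is, among \emph{all} trace-preserving pairs, not merely among channels), which is exactly what Lemma~\ref{lem:ball-pairs-of-channels} guarantees by upgrading nearby trace-preserving Hermitian-preserving pairs to channels.
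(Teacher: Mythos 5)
Your proposal is correct and takes essentially the same route as the paper: the paper reads the corollary off the preceding proposition (the ball of compatible pairs around $(\Omega_{\calY_1},\Omega_{\calY_2})$), with Lemma~\ref{lem:ball-pairs-of-channels} supplying exactly the upgrade from nearby trace-preserving pairs to channel pairs that you single out as the one delicate point. Your write-up simply makes explicit the affine-hull and Lebesgue-measure bookkeeping that the paper leaves implicit.
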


In particular, this means that a randomly selected pair of channels has a nonzero probability of being compatible.

\subsubsection*{Invertible maps and Jordan compatibility}

Here we prove that almost all pairs of compatible pairs of channels are Jordan compatible. To do so, we first observe that almost all channels are invertible as linear maps (and thus almost all pairs of channels are invertible pairs).

\begin{lemma}
Almost all channels in $\C(\calX)$ are invertible as linear maps (in the sense that non-invertible channels form a set of measure zero).
\end{lemma}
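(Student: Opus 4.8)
The plan is to view a channel $\Phi \in \C(\calX)$ as a linear map on the $\dim(\calX)^2$-dimensional space $\L(\calX)$ and observe that invertibility fails exactly when the determinant of this map vanishes. To make this precise, I would fix a basis for $\L(\calX)$ (say the standard matrix units $E_{i,j}$) and represent any $\Phi \in \T(\calX)$ by its matrix $[\Phi]$ in this basis, so that $\Phi$ is invertible as a linear map if and only if $\det([\Phi]) \neq 0$. The map $\Phi \mapsto \det([\Phi])$ is a polynomial in the entries of $[\Phi]$, hence a polynomial in the entries of the Choi representation $J(\Phi)$ (since the correspondence between $\Phi$ and $J(\Phi)$ is linear and invertible).

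The key observation is that $\C(\calX)$ is an affine slice of the space $\Herm(\calX \otimes \calX)$ cut out by the positivity constraint $J(\Phi) \geq 0$ and the affine trace-preservation constraint $\Tr_{\calY}(J(\Phi)) = \I_\calX$; as noted in the discussion preceding this lemma, $\C(\calX)$ is a compact convex body sitting inside this affine subspace, and it inherits Lebesgue measure from that subspace. The set of non-invertible channels is the zero set of the polynomial $p(\Phi) := \det([\Phi])$ restricted to this affine subspace. The standard fact I would invoke is that the zero set of a polynomial that is not identically zero has Lebesgue measure zero in any affine (or Euclidean) space of the appropriate dimension.

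The main obstacle, and the only genuine content of the argument, is to verify that $p$ is \emph{not} identically zero on the affine subspace of trace-preserving maps --- otherwise the measure-zero conclusion would be vacuous. For this it suffices to exhibit a single invertible channel, and the identity channel $\I_{\L(\calX)}$ does the job, since it is a channel with $\det([\I_{\L(\calX)}]) = 1 \neq 0$. (One could equally cite the partially dephasing-depolarizing channels $\Xi_{p,q}$ with $p,q>0$ and $p+q<1$, which are remarked to be invertible.) Because the affine subspace of trace-preserving Hermitian-preserving maps is irreducible (it is an affine space) and $p$ does not vanish at this point, $p$ is not identically zero on it, so its zero set has measure zero in that subspace. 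Intersecting with the compact convex body $\C(\calX)$ preserves the measure-zero conclusion, and therefore the non-invertible channels form a set of measure zero in $\C(\calX)$, as claimed.
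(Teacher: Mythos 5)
Your proposal is correct and follows essentially the same route as the paper's own proof: identify the trace-preserving (Hermitian-preserving) maps with an affine subspace of matrices, observe that the determinant is a polynomial on that subspace, exhibit the identity channel as a point where it does not vanish, and conclude that its zero set---and hence the set of non-invertible channels inside the convex body $\C(\calX)$---has measure zero. The only cosmetic difference is that the paper phrases the matrix representation over the real vector space $\Herm(\calX)\simeq\mathbb{R}^{n^2}$ rather than in the basis of matrix units of $\L(\calX)$, which sidesteps any discussion of the determinant being complex-valued; this does not affect the validity of your argument.
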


\begin{proof}
We may view all Hermitian-preserving maps on $\L(\calX)$ as linear maps on the real vector space of Hermitian operators $\op{Herm}(\calX)$. By setting $n=\op{dim}(\calX)$, we may identify $\op{Herm}(\calX)\simeq \mathbb{R}^{n^2}$, and we may identify the space $\op{HP}(\calX)$ of Hermitian-preserving maps with the space of $n^2\times n^2$ matrices over $\mathbb{R}$. Moreover, under these identifications, the affine space of trace-preserving maps in $\op{HP}(\calX)$ corresponds to some affine subspace of $n^2\times n^2$ matrices. The set $\C(\calX)$ of all channels may be identified with a convex subset in this affine subspace. Finally, note that the determinant is a polynomial on the $n^4$-dimensional vector space of $n^2\times n^2$ matrices over $\mathbb{R}$, and thus the determinant is either constant on this affine subspace or the set of zeroes has measure zero (with respect to the measure on this affine subspace that is induced by the Lebesgue measure on $\mathbb{R}^{n^4}$). A matrix is invertible if and only if its determinant is nonzero. Since the identity channel is certainly invertible, it follows from the above observations that the set of non-invertible channels (as a subset of all channels) has zero measure.
\end{proof}

This immediately implies that almost all pairs of channels are invertible.

\begin{corollary}
For almost all pairs of channels $(\Phi_1,\Phi_2)\in\op{CPairs}(\calX,\calX,\calX)$, the channels $\Phi_1$ and $\Phi_2$ are invertible as linear maps. (That is, the set of pairs of channels such that at least one channel is not invertible has zero measure.)
\end{corollary}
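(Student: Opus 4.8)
The plan is to deduce this corollary directly from the preceding lemma, which establishes that the non-invertible channels form a measure-zero subset of $\C(\calX)$. The essential observation is that, as a measure space, $\op{CPairs}(\calX,\calX,\calX)$ is simply the product $\C(\calX)\times\C(\calX)$: the ambient real vector space $\op{HP}(\calX,\calX)\oplus\op{HP}(\calX,\calX)$ is a direct sum, the condition of being trace preserving is a separate linear constraint on each component, so the affine subspace of pairs of trace-preserving maps is the product of the two individual affine subspaces of trace-preserving maps, and the Lebesgue measure on a direct sum of finite-dimensional spaces decomposes as the product of the Lebesgue measures on the summands. Hence the measure on $\op{CPairs}$ induced from the ambient Lebesgue measure is (up to normalization) the product of the induced measures on the two copies of $\C(\calX)$.

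First I would let $N\subseteq\C(\calX)$ denote the set of channels that are not invertible as linear maps; by the preceding lemma, $N$ has measure zero. The set of pairs in which at least one channel fails to be invertible is then exactly
\begin{equation}
\bigl(N\times\C(\calX)\bigr)\cup\bigl(\C(\calX)\times N\bigr).
\end{equation}
By the product-measure structure identified above (equivalently, by Fubini's theorem), the product of a measure-zero set with a set of finite measure has measure zero, so each of the two sets in this union has measure zero. A finite union of measure-zero sets is itself measure zero, and therefore the set of pairs with at least one non-invertible channel has measure zero. Its complement---the pairs in which both $\Phi_1$ and $\Phi_2$ are invertible---thus has full measure, which is the claim.

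The only point requiring genuine care is the identification of the measure on $\op{CPairs}(\calX,\calX,\calX)$ as an honest product measure, so that Fubini applies and slices of the form $N\times\C(\calX)$ are seen to be null. I expect this to be the main (and essentially only) obstacle, though it reduces to the routine fact that the Lebesgue null sets of a product of finite-dimensional affine spaces are exactly those detected slicewise. Once this is granted, the remainder of the argument is immediate.
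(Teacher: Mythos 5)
Your proof is correct and follows the same route as the paper, which states this corollary as an immediate consequence of the preceding lemma without further argument; your write-up simply makes explicit the product-measure structure of $\op{CPairs}(\calX,\calX,\calX)\simeq\C(\calX)\times\C(\calX)$ and the Fubini-type fact that justifies the word ``immediately.'' The identification of the induced measure as a product measure is exactly the right (and only) point needing care, and you have handled it correctly.
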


We may now prove the main result, which is that almost all pairs of compatible channels are Jordan compatible. In particular, this means that the set of compatible channels is equal to the closure of the set of Jordan compatible channels.

\begin{theorem}
\label{thm:geometry-fullMeasure}
The following statements hold.
\begin{enumerate}[label = (\arabic*)]
\item\label{item:geometry-fullMeasure-1} The set $\op{JComp}(\calX,\calX,\calX)$ has full measure as a subset of $\op{Comp}(\calX,\calX,\calX)$.
\item\label{item:geometry-fullMeasure-2} $\overline{\op{JComp}(\calX,\calX,\calX)} = \op{Comp}(\calX,\calX,\calX)$.
\end{enumerate}
\end{theorem}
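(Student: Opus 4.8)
The plan is to derive both statements from Theorem~\ref{thm:jordan-gen-inverseIFF}, which says that a pair of \emph{invertible} channels is compatible if and only if it is Jordan compatible, together with the genericity of invertibility. The overarching observation is that the only compatible pairs that can possibly fail to be Jordan compatible are those in which at least one channel is non-invertible, and the non-invertible pairs form a measure-zero set.

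For statement~\ref{item:geometry-fullMeasure-1}, I would argue as follows. Let $\mathcal{A}$ denote the affine subspace of pairs of trace-preserving Hermitian-preserving maps inside $\op{HP}(\calX,\calX)\oplus\op{HP}(\calX,\calX)$, carrying the Lebesgue-induced measure described in this section; recall that $\op{Comp}(\calX,\calX,\calX)$ inherits its measure as a subset of $\mathcal{A}$. Viewing each Hermitian-preserving map as a real-linear operator on $\op{Herm}(\calX)\cong\mathbb{R}^{n^2}$ with $n=\op{dim}(\calX)$, the quantity $p(\Phi_1,\Phi_2)=\det(\Phi_1)\det(\Phi_2)$ is a polynomial in affine coordinates on $\mathcal{A}$, and the set of non-invertible pairs is exactly its zero locus. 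Since $p$ does not vanish at the pair of identity channels, $p$ is not identically zero on $\mathcal{A}$, so its zero locus has Lebesgue measure zero in $\mathcal{A}$; hence the non-invertible pairs meet $\op{Comp}(\calX,\calX,\calX)$ in a measure-zero set. By Theorem~\ref{thm:jordan-gen-inverseIFF}, every compatible pair of invertible channels is Jordan compatible, so $\op{Comp}\setminus\op{JComp}$ is contained in this measure-zero set of non-invertible pairs. This shows $\op{JComp}(\calX,\calX,\calX)$ has full measure in $\op{Comp}(\calX,\calX,\calX)$.

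For statement~\ref{item:geometry-fullMeasure-2}, one inclusion is immediate: $\op{JComp}\subseteq\op{Comp}$ and $\op{Comp}(\calX,\calX,\calX)$ is compact (Proposition~\ref{prop:comp-is-closed}), hence closed, so $\overline{\op{JComp}}\subseteq\op{Comp}$. For the reverse inclusion I would give a direct perturbation argument. First I would fix, once and for all, a compatible pair of \emph{invertible} channels together with a compatibilizer $\Theta\in\C(\calX,\calX\otimes\calX)$; such a $\Theta$ exists because $\op{Comp}$ contains a ball of positive radius (established earlier in this section) while the non-invertible pairs have measure zero, so the ball must contain a compatible pair whose marginals $\Tr_{\calX_2}\circ\Theta$ and $\Tr_{\calX_1}\circ\Theta$ are both invertible. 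Given an arbitrary $(\Phi_1,\Phi_2)\in\op{Comp}$ with compatibilizer $\Phi$, set $\Phi^\epsilon=(1-\epsilon)\Phi+\epsilon\Theta$. Each $\Phi^\epsilon$ is a channel that compatibilizes the pair $(\Phi_1^\epsilon,\Phi_2^\epsilon)$ formed by its marginals, which are convex combinations of the corresponding marginals of $\Phi$ and $\Theta$. The function $\epsilon\mapsto\det(\Phi_i^\epsilon)$ is a polynomial that is nonzero at $\epsilon=1$, hence vanishes for only finitely many $\epsilon$; choosing a sequence $\epsilon_k\to0^+$ avoiding these finitely many values makes both $\Phi_1^{\epsilon_k}$ and $\Phi_2^{\epsilon_k}$ invertible. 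By Theorem~\ref{thm:jordan-gen-inverseIFF} each $(\Phi_1^{\epsilon_k},\Phi_2^{\epsilon_k})$ is then Jordan compatible, and since $\Phi^{\epsilon_k}\to\Phi$ these pairs converge to $(\Phi_1,\Phi_2)$. Thus $(\Phi_1,\Phi_2)\in\overline{\op{JComp}}$, giving $\op{Comp}\subseteq\overline{\op{JComp}}$ and completing the argument.

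The step I expect to require the most care is guaranteeing that the perturbed pairs in part~\ref{item:geometry-fullMeasure-2} remain inside $\op{Comp}$ while simultaneously becoming invertible. The key device resolving this is to perturb the \emph{compatibilizer} rather than the pair directly: mixing $\Phi$ with a fixed compatibilizer $\Theta$ automatically keeps the perturbed pair compatible, since its marginals are genuine marginals of the channel $\Phi^\epsilon$, so I only need the one-variable polynomial nonvanishing to secure invertibility. The only genuinely new ingredient beyond the cited results is the existence of a single invertible compatible pair to serve as $\Theta$, which follows cleanly from the positive-measure (ball) statement together with the genericity of invertibility used already in part~\ref{item:geometry-fullMeasure-1}.
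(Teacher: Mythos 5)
Your proof is correct. For statement~\ref{item:geometry-fullMeasure-1} you follow essentially the same route as the paper: you combine Theorem~\ref{thm:jordan-gen-inverseIFF} with genericity of invertibility, inlining (via the single polynomial $\det(\Phi_1)\det(\Phi_2)$ on the affine space of trace-preserving pairs) what the paper isolates as a separate lemma and corollary about almost all channels being invertible. For statement~\ref{item:geometry-fullMeasure-2}, however, you take a genuinely different route. The paper deduces~\ref{item:geometry-fullMeasure-2} as a one-line corollary of~\ref{item:geometry-fullMeasure-1}: since $\op{Comp}(\calX,\calX,\calX)$ is closed (Proposition~\ref{prop:comp-is-closed}), and is convex with nonempty relative interior, a full-measure subset of it is automatically dense, so its closure is all of $\op{Comp}$. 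Your argument is instead constructive and independent of statement~\ref{item:geometry-fullMeasure-1}: you fix a compatibilizer $\Theta$ of some invertible compatible pair (correctly extracted from the ball inside $\op{Comp}$ together with the measure-zero set of non-invertible pairs), perturb an arbitrary compatibilizer to $(1-\epsilon)\Phi+\epsilon\Theta$, note that the perturbed pair stays compatible because its members are genuine marginals of this channel, and use one-variable polynomial non-vanishing of $\epsilon\mapsto\det(\Phi_i^\epsilon)$ to pick $\epsilon_k\to 0$ along which both marginals are invertible, hence Jordan compatible by Theorem~\ref{thm:jordan-gen-inverseIFF}. What the paper's route buys is brevity, though it implicitly leans on the measure-theoretic fact that a full-measure subset of a convex body with nonempty interior is dense (which uses convexity and the ball from earlier in the section). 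What your route buys is an explicit approximating sequence of Jordan-compatible pairs for every compatible pair, with no appeal to that density fact; the device of perturbing the compatibilizer rather than the pair itself is precisely what keeps the perturbed pairs inside $\op{Comp}$, and it is the right one.
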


\begin{proof}
Consider the set of pairs of invertible channels, which we denote as
\begin{equation}
\op{InvPairs}(\calX,\calX,\calX) = \{(\Phi_1,\Phi_2)\in\op{CPairs}(\calX,\calX,\calX)\,:\, \Phi_1,\Phi_2\text{ are invertible as linear maps}\}.
\end{equation}
Recall from Theorem \ref{thm:jordan-gen-inverseIFF} that, a pair of invertible channels $(\Phi_1,\Phi_2)\in\op{InvPairs}(\calX,\calX,\calX)$ is compatible if and only if it is Jordan compatible. It follows that
\begin{equation}
\op{Comp} \cap \op{InvPairs}\subseteq \op{JComp} \subseteq \op{Comp}
\end{equation}
(where, for simplicity, we have left off the ``$(\calX,\calX,\calX)$'' part of each set in the above containments). Statement~\ref{item:geometry-fullMeasure-1}~now follows from the facts that $\op{Comp}$ has positive measure in $\op{CPairs}$ and the set $\op{InvPairs}$ has full measure in the set $\op{CPairs}$. Finally, statement~\ref{item:geometry-fullMeasure-2}~is a trivial corollary of statement~\ref{item:geometry-fullMeasure-1}, as the set $\op{Comp}$ is closed (see Proposition \ref{prop:comp-is-closed}).
\end{proof}

%%%%%%%%%%%%%%%%%%%%%%%%%%% 
  
\section{Semidefinite programming (SDP) formulations}
\label{SectSDP}
   
In this section, we formulate some of the compatibility questions posed in this work as semi\-definite programs and examine them in a new light via duality theory. Also, we use semidefinite programming methods to provide a novel proof the well-known result that there is no perfect broadcasting in quantum theory.

%%%%%%%%%%%%%%%%%%%%%%%%%%% 

\subsection{SDP formulations of compatibility}

Recall from the introduction that determining whether a given pair of channels $\Phi_1 \in \C(\calX,\calY_1)$ and $\Phi_2 \in \C(\calX,\calY_2)$ is compatible is equivalent to solving the following feasibility problem:
\begin{equation}
\label{CPcomp}
\begin{split}
\text{find:} \quad & \Phi \text{ completely positive} \\
\text{satisfying:}\quad & 
\Phi_1 = \Tr_{\calY_2} \circ \; \Phi \\
& \Phi_2 = \Tr_{\calY_1} \circ \; \Phi.
\end{split}
\end{equation}
Using the Choi representations, one finds that this is equivalent to the following semidefinite programming feasibility problem:
\begin{equation}
\label{eq:sdpchannelcompatibility1}
\begin{split}
\text{find:} \quad & X \in \Pos(\calX \otimes \calY_1 \otimes \calY_2) \\
\text{satisfying:} \quad & \Tr_{\calY_2}(X) = J(\Phi_1) \\
& \Tr_{\calY_1}(X) = J(\Phi_2),
\end{split}
\end{equation}
where a solution $X$ to the problem \eqref{eq:sdpchannelcompatibility1} is the Choi representation of a compatibilizing channel in~\eqref{CPcomp} (if one exists). In other words, we are translating the channel problem into a matrix problem via the Choi isomorphism.

The channel compatibility problem can also be phrased in terms of the following pair of semidefinite programs. Let $\alpha_{\comp}$ be the optimal value of the following semidefinite program
\begin{equation}
\label{eq:sdpchannelcompatibility}
\begin{split}
\text{maximize:} \quad & t \\
\text{satisfying:} \quad & \Tr_{\calY_2}(X) = J(\Phi_1) \\
& \Tr_{\calY_1}(X) = J(\Phi_2) \\
& X \geq t \cdot \I_{\calX \otimes \calY_1 \otimes \calY_2}
\end{split}
\end{equation}
and let $\beta_{\comp}$ be the optimal value of its dual problem, which can be stated as
\begin{equation}
\begin{split}
\label{eq:sdpchannelcompatibilitydual}
\text{minimize:} \quad
& \ip{Z_1}{J(\Phi_1)} + \ip{Z_2}{J(\Phi_2)} \\
\text{satisfying:} \quad
& \Tr_{\calY_2}^*(Z_1) + \Tr_{\calY_1}^*(Z_2) \in \D(\calX \otimes \calY_1 \otimes \calY_2) \\
& Z_1 \in \Herm(\calX\otimes\calY_1) \\
& Z_2 \in \Herm(\calX\otimes\calY_2).
\end{split}
\end{equation}
Recall here that $\Tr_{\calY_1}^*$ and $\Tr_{\calY_2}^*$ are the adjoints of the partial trace maps that are defined in \eqref{eq:Tr*definition}.

Strong duality holds for this pair of semidefinite programs, as we now argue. Indeed, the operator
\begin{equation}
\label{eq}
\bar{X} =
\frac{1}{\dim(\calY_2)}\Tr_{\calY_2}^*\bigl(J(\Phi_1)\bigr) + \frac{1}{\dim(\calY_1)}\Tr_{\calY_1}^*\bigl(J(\Phi_2)\bigr) - \frac{1}{\dim(\calY_1 \otimes \calY_2)} \I_{\calX \otimes \calY_1 \otimes \calY_2}
\end{equation}
is a feasible solution for the primal SDP, and thus
\begin{equation}
\alpha_{\comp} \geq \lambda_{\min}(\bar{X}).
\end{equation}
Similarly, the operators
\begin{equation}
\bar{Z}_1 := \frac{1}{2 \dim(\calX \otimes \calY_1 \otimes \calY_2)} \I_{\calX \otimes \calY_1}
\quad
\text{ and }
\quad
\bar{Z}_2 := \frac{1}{2 \dim(\calX \otimes \calY_1 \otimes \calY_2)} \I_{\calX \otimes \calY_2}
\end{equation}
form a strictly feasible dual solution. Thus,
\begin{equation}
\beta_{\comp} \leq \frac{1}{\dim(\calY_1 \otimes \calY_2)}.
\end{equation}
{By Slater's theorem (see, e.g., \cite[Theorem 1.18]{Watrous-QI})} it holds that $\alpha_{\comp} = \beta_{\comp}$ and that the optimal value $\alpha_{\comp}$ is attained. Summarizing these bounds, we have
\begin{equation}
\label{SDPbounds}
\lambda_{\min}(\bar{X}) \leq \alpha_{\comp} = \beta_{\comp} \leq \frac{1}{\dim(\calY_1 \otimes \calY_2)}.
\end{equation}
Note the fact that $\alpha_{\comp}$ is attained tells us something interesting. It tells us that $\Phi_1$ and $\Phi_2$ are compatible if and only if $\alpha_{\comp} \geq 0$ (and thus $\beta_{\comp} \geq 0$). This brings us to the following theorem.

\begin{theorem}[Theorem of the alternative (version 1)]
\label{TotAV1}
\emph{Exactly one} of the following statements is true:
\begin{enumerate}[label = (\arabic*)]
\item\label{item:TotAV1-1} $\Phi_1 \in \C(\calX,\calY_1)$ and $\Phi_2 \in \C(\calX,\calY_2)$ are compatible.
\item\label{item:TotAV1-2} There exists $Z_1 \in \Herm(\calX \otimes \calY_1)$ and $Z_2 \in \Herm(\calX \otimes \calY_2)$ such that
 \begin{equation}
 \label{eq:dual-like}  
 \Tr_{\calY_2}^*(Z_1) + \Tr_{\calY_1}^*(Z_2) \geq 0
 \quad \text{ and } \quad \ip{Z_1}{J(\Phi_1)} + \ip{Z_2}{J(\Phi_2)} < 0.
 \end{equation}
\end{enumerate}
\end{theorem}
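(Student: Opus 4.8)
The plan is to read the alternative directly off the primal--dual pair of semidefinite programs \eqref{eq:sdpchannelcompatibility} and \eqref{eq:sdpchannelcompatibilitydual}, relying on the three facts already established above: strong duality ($\alpha_{\comp}=\beta_{\comp}$), attainment of $\alpha_{\comp}$, and the characterization that $\Phi_1$ and $\Phi_2$ are compatible if and only if $\alpha_{\comp}\geq 0$. To prove that \emph{exactly} one of \ref{item:TotAV1-1} and \ref{item:TotAV1-2} holds, I would separately show that the two statements are mutually exclusive and that at least one of them must hold.

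For mutual exclusivity I would argue directly, without invoking duality. Suppose both hold: let $X\in\Pos(\calX\otimes\calY_1\otimes\calY_2)$ be the Choi representation of a compatibilizer, so that $\Tr_{\calY_2}(X)=J(\Phi_1)$ and $\Tr_{\calY_1}(X)=J(\Phi_2)$, and let $Z_1,Z_2$ be as in \ref{item:TotAV1-2}. Using the defining property of the adjoint maps $\Tr_{\calY_2}^*$ and $\Tr_{\calY_1}^*$ from \eqref{eq:Tr*definition}, the quantity $\ip{Z_1}{J(\Phi_1)}+\ip{Z_2}{J(\Phi_2)}$ rewrites as $\ip{\Tr_{\calY_2}^*(Z_1)+\Tr_{\calY_1}^*(Z_2)}{X}$. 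This is the Hilbert--Schmidt inner product of two positive semidefinite operators and is therefore nonnegative, contradicting the strict inequality in \eqref{eq:dual-like}. Hence \ref{item:TotAV1-1} and \ref{item:TotAV1-2} cannot hold simultaneously.

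For the remaining direction I would show that if \ref{item:TotAV1-1} fails then \ref{item:TotAV1-2} holds. If the channels are not compatible, then $\alpha_{\comp}<0$ by the established characterization, and strong duality gives $\beta_{\comp}=\alpha_{\comp}<0$. Since $\beta_{\comp}$ is the infimum of the dual objective over the feasible set, and this infimum is strictly negative, there exists a dual-feasible pair $(Z_1,Z_2)$ whose objective $\ip{Z_1}{J(\Phi_1)}+\ip{Z_2}{J(\Phi_2)}$ is strictly negative. Dual feasibility forces $\Tr_{\calY_2}^*(Z_1)+\Tr_{\calY_1}^*(Z_2)$ to be a density operator, which in particular is positive semidefinite; thus $(Z_1,Z_2)$ satisfies both conditions in \eqref{eq:dual-like}, establishing \ref{item:TotAV1-2}.

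The only subtlety to address carefully is the mismatch between the dual constraint in \eqref{eq:sdpchannelcompatibilitydual}, which forces $\Tr_{\calY_2}^*(Z_1)+\Tr_{\calY_1}^*(Z_2)$ to have unit trace, and the weaker requirement in \eqref{eq:dual-like} that it merely be positive semidefinite. This is harmless in both directions: dual feasibility is \emph{stronger} than the constraint in \ref{item:TotAV1-2}, so a dual witness immediately yields a witness for \ref{item:TotAV1-2}, and conversely the two conditions in \eqref{eq:dual-like} are invariant under the scaling $(Z_1,Z_2)\mapsto c(Z_1,Z_2)$ for any $c>0$, so a witness could always be renormalized to unit trace if desired. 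I expect this normalization point, together with verifying that strict negativity of $\beta_{\comp}$ really does produce a feasible point of strictly negative objective (and not merely an infimizing sequence), to be the main things requiring care; attainment of $\alpha_{\comp}$ enters only indirectly, through the characterization of compatibility as $\alpha_{\comp}\geq 0$.
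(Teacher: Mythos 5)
Your proposal is correct and follows essentially the same route as the paper's proof: mutual exclusivity via the adjoint identity $\ip{Z_1}{J(\Phi_1)}+\ip{Z_2}{J(\Phi_2)}=\ip{\Tr_{\calY_2}^*(Z_1)+\Tr_{\calY_1}^*(Z_2)}{X}\geq 0$, and the other direction by combining the characterization of compatibility as $\alpha_{\comp}\geq 0$ with strong duality to extract a dual-feasible witness of strictly negative objective value. Your explicit handling of the normalization mismatch (dual feasibility requires a density operator while \eqref{eq:dual-like} only requires positive semidefiniteness) is a point the paper passes over silently, but it does not change the argument.
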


\begin{proof}
As stated previously, statement~\ref{item:TotAV1-1} is equivalent to the condition that $\alpha_{\comp} \geq 0$. Thus, if statement~\ref{item:TotAV1-1} is not true, then $\alpha_{\comp} < 0$ and thus there exists a dual feasible solution $(\bar{Z}_1, \bar{Z}_2)$ with negative objective function value. The pair $(\bar{Z}_1, \bar{Z}_2)$ witnesses that statement~\ref{item:TotAV1-2} is true. In other words, both statements cannot be false.

Now, suppose that both statements are true for the purpose of a contradiction. This implies the existence of a primal feasible solution $\bar{X} \geq 0$ and $Z_1 \in \Herm(\calX \otimes \calY_1)$ and $Z_2 \in \Herm(\calX \otimes \calY_2)$ such that the conditions in \eqref{eq:dual-like} hold. Now we have 
\begin{align}
0 & > \ip{Z_1}{J(\Phi_1)} + \ip{Z_2}{J(\Phi_2)} \\ & = \ip{Z_1}{\Tr_{\calY_2}(\bar{X})} + \ip{Z_2}{\Tr_{\calY_1}(\bar{X})} \\ & = \ip{\Tr_{\calY_2}^*(Z_1) + \Tr_{\calY_1}^*(Z_2)}{\bar{X}} \\ & \geq 0,
\end{align}
as each operator is positive semidefinite, which yields a contradiction. Thus, both statements cannot be true. The result follows.
\end{proof}

To state a neat corollary, we define an inner product on linear maps.

\begin{definition}
We may define an inner product on $\T(\calX,\calY)$ as
\begin{equation}
\ip{\Psi}{\Phi} := \ip{J(\Psi)}{J(\Phi)}
\end{equation}
for every choice of linear maps $\Phi, \Psi \in \T(\calX,\calY)$. That is, the inner product between two linear maps is defined as the inner product of their Choi representations. Note that this is a proper inner product as the Choi representation is an isomorphism.
\end{definition}

We now have the following corollary of Theorem~\ref{TotAV1}.
\begin{theorem}[Theorem of the alternative (version 2)]

\emph{Exactly one} of the following statements is true:
\begin{enumerate}[label = (\arabic*)]
\item $\Phi_1 \in \C(\calX,\calY_1)$ and $\Phi_2 \in \C(\calX,\calY_2)$ are compatible.
\item There exists Hermitian-preserving maps $\Psi_1 \in \T(\calX,\calY_1)$ and $\Psi_2 \in \T(\calX,\calY_2)$ such that
 \begin{equation}
 \Tr_{\calY_2}^* \circ \Psi_1 + \Tr_{\calY_1}^* \circ \Psi_2 \; \text{ is completely positive} 
 \quad \text{ and } \quad
 \ip{\Psi_1}{\Phi_1} + \ip{\Psi_2}{\Phi_2} < 0.
 \end{equation}
\end{enumerate}
\end{theorem}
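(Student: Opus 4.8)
The plan is to obtain this second version as the direct image of Theorem~\ref{TotAV1} under the Choi isomorphism. Both versions share the identical first alternative---namely, that $\Phi_1$ and $\Phi_2$ are compatible---and Theorem~\ref{TotAV1} already guarantees that \emph{exactly one} of its two alternatives holds. It therefore suffices to show that the second alternative stated here is equivalent to the second alternative of Theorem~\ref{TotAV1}. To set up the correspondence I would use that $J$ is a linear bijection from $\T(\calX,\calY_1)$ onto $\L(\calX\otimes\calY_1)$ (and likewise for $\calY_2$), and put $Z_1 = J(\Psi_1)$ and $Z_2 = J(\Psi_2)$. Under this bijection a map $\Psi_i$ is Hermitian-preserving if and only if its Choi representation $Z_i$ is Hermitian, so the quantifier ``there exist Hermitian-preserving $\Psi_1,\Psi_2$'' matches precisely the quantifier ``there exist $Z_1\in\Herm(\calX\otimes\calY_1)$ and $Z_2\in\Herm(\calX\otimes\calY_2)$'' from Theorem~\ref{TotAV1}.

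Next I would translate the two conditions of the second alternative one at a time. The inner-product inequality is immediate from the definition $\ip{\Psi_i}{\Phi_i} = \ip{J(\Psi_i)}{J(\Phi_i)} = \ip{Z_i}{J(\Phi_i)}$, which gives
\begin{equation}
\ip{\Psi_1}{\Phi_1} + \ip{\Psi_2}{\Phi_2} = \ip{Z_1}{J(\Phi_1)} + \ip{Z_2}{J(\Phi_2)}.
\end{equation}
The crux is the complete-positivity condition. Here I would establish the identity
\begin{equation}
J\bigl(\Tr_{\calY_2}^*\circ\Psi_1 + \Tr_{\calY_1}^*\circ\Psi_2\bigr) = \Tr_{\calY_2}^*(Z_1) + \Tr_{\calY_1}^*(Z_2),
\end{equation}
asserting that forming the Choi representation commutes with the operation of adjoining an identity factor. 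This follows by a short computation: for instance $J(\Tr_{\calY_2}^*\circ\Psi_1) = \sum_{i,j} E_{i,j}\otimes\Psi_1(E_{i,j})\otimes\I_{\calY_2} = \Tr_{\calY_2}^*(J(\Psi_1))$ using the explicit form of $\Tr_{\calY_2}^*$ from \eqref{eq:Tr*definition}, and analogously for the second summand. Since a linear map is completely positive exactly when its Choi representation is positive semidefinite, the map $\Tr_{\calY_2}^*\circ\Psi_1 + \Tr_{\calY_1}^*\circ\Psi_2$ is completely positive if and only if $\Tr_{\calY_2}^*(Z_1) + \Tr_{\calY_1}^*(Z_2)\geq 0$. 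Combining these translations shows that the second alternative of the present theorem holds precisely when the second alternative of Theorem~\ref{TotAV1} holds, and the conclusion follows.

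The step I expect to require the most care is the commutation identity $J(\Tr_{\calY_i}^*\circ\Psi) = \Tr_{\calY_i}^*(J(\Psi))$, which reconciles two a priori distinct uses of the symbol $\Tr_{\calY_i}^*$: on the left it denotes post-composition with the map that adjoins an identity on the output register, while on the right it denotes the adjoint partial-trace \emph{operator} of \eqref{eq:Tr*definition} acting on the Choi matrix. Verifying agreement amounts to checking that the adjoined identity is inserted into the correct tensor slot, so that the ordering of the factors in $\calX\otimes\calY_1\otimes\calY_2$ is preserved consistently in both pictures. Once this bookkeeping is settled, the remainder is a routine transcription of Theorem~\ref{TotAV1} through the isomorphism $J$.
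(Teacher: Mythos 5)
Your proposal is correct and matches the paper's intent exactly: the paper states this result as an immediate corollary of Theorem~\ref{TotAV1}, obtained by transporting the second alternative through the Choi isomorphism using the inner product $\ip{\Psi}{\Phi}=\ip{J(\Psi)}{J(\Phi)}$, which is precisely the translation you carry out. Your explicit verification of the identity $J(\Tr_{\calY_i}^*\circ\Psi)=\Tr_{\calY_i}^*(J(\Psi))$, including the tensor-slot bookkeeping fixed by \eqref{eq:Tr*definition}, simply fills in the routine details the paper leaves implicit.
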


We now use the machinery we have developed to provide a novel proof of the no-broadcasting theorem in the following example.

\begin{example}[No-broadcasting theorem] \label{exm:no-broadcasting}
Recall that the no-broadcasting theorem states that the identity channel is not self-compatible. To study the self-compatibility of the identity channel, we define the \emph{partially depolarizing channel} $\Omega_p \in \C(\calX)$ with parameter $p\in[0,1]$, as
\begin{equation}
\label{eq:partialdepolarising}
\Omega_p = p \Omega + (1-p) \I_{\L(\calX)}.
\end{equation}
Consider the operators $Z_1$ and $Z_2$ defined as
\begin{equation}
Z_1 = Z_2 := \I\otimes \I - \dfrac{2}{{\op{dim}(\calX)}+1}\sum_{i,j=1}^{\op{dim}(\calX)} E_{i,j}\otimes E_{i,j}.
\end{equation}
It can be verified that this choice of operators satisfies
\begin{equation}
\Tr_{\calY_2}^*(Z_1) + \Tr_{\calY_1}^*(Z_2)\geq0
\qquad\text{and}\qquad \ip{Z_1+Z_2}{J(\Omega_p)} <0
\end{equation}
for $p$ in the range
\begin{equation}
0 \leq p < \dfrac{{\op{dim}(\calX)}}{2({\op{dim}(\calX)}+1)}.
\end{equation}
Thus, by Theorem \ref{TotAV1}, the channel $\Omega_p$ is not self-compatible for these values of $p$ (see also~\cite{Werner-cloning, KeylWerner-cloning} for another proof of this fact). Taking $p=0$ implies the result of the no-broadcasting theorem, as $\Omega_0$ is the identity channel.
\end{example}

\begin{remark}[Rewriting the dual using Jordan products]
\label{JordanRemark} We have shown a few close relationships between the notions of compatibility and Jordan products. We now show that this relationship is rather natural, as the Jordan product appears (in a somewhat hidden form) in the dual SDP above.
To see this, observe that
\begin{equation}
\ip{Z_1}{J(\Phi_1)} + \ip{Z_2}{J(\Phi_2)}
= \ip{J(\Phi_1 \odot_A \Phi_2)}{\Tr_{\calY_2}^*(Z_1) + \Tr_{\calY_1}^*(Z_2) }
\end{equation}
where $J(\Phi_1 \odot_A \Phi_2)$ is a generalized Jordan product for every choice of operator $A$ satisfying the conditions in \eqref{eq:Tr1ATr2A}. From this we may obtain the following alternative form of the dual problem in \eqref{eq:sdpchannelcompatibilitydual}: 
\begin{equation}
\label{eq:dualsdpchannelJordan}
\begin{split}
\text{minimize:} \quad & \ip{J(\Phi_1 \odot_A \Phi_2)}{\rho} \\
\text{subject to:} \quad 
& \rho = \Tr_{\calY_2}^*(Z_1) + \Tr_{\calY_1}^*(Z_2) \in \D(\calX \otimes \calY_1 \otimes \calY_2)\\
& Z_1 \in \Herm(\calX \otimes \calY_1) \\
& Z_2 \in \Herm(\calX \otimes \calY_2).
\end{split}
\end{equation} 
Note that this yields a previous result of ours. Namely, if there exists an operator $A$ satisfying the conditions in \eqref{eq:Tr1ATr2A} such that $J(\Phi_1 \odot_A \Phi_2) \geq 0$, then from the dual (as written above) we have that $\beta_{\comp} \geq 0$ and thus $\Phi_1$ and $\Phi_2$ are compatible.
\end{remark}

%%%%%%%%%%%%%%%%%%%%%%%%%%%%%%

\subsection{SDP for finding a generalized Jordan product compatibilizer}

Recall that for channels $\Phi_1 \in \C(\calX, \calY_1)$ and $\Phi_2 \in \C(\calX, \calY_2)$, the task of determining Jordan compatibility is equivalent to the following problem
\begin{equation}
\begin{split}
\text{find:} \quad
& A \in \Herm(\calX \otimes \calX_1 \otimes \calX_2) \\
\text{satisfying:} \quad
& \Tr_{\calX_1}(A) = J(\I_{\L(\calX)}) \\
& \Tr_{\calX_2}(A) = J(\I_{\L(\calX)}) \\
& \bigl( \I_{\L(\calX)} \otimes \Phi_1 \otimes \Phi_2 \bigr)(A) \in \Pos(\calX\otimes\calY_1\otimes\calY_2)
\end{split}
\end{equation}
where $J(\Phi_1 \odot_A \Phi_2) = \bigl(\I_{\L(\calX)}\otimes\Phi_1\otimes\Phi_2\bigr)(A)$ is the generalized Jordan product of $\Phi_1$ and $\Phi_2$ with respect to $A$.

The problem of determining Jordan compatibility of a pair of channels can be phrased in terms of the following pair of semidefinite programs. Let $\alpha_{\jord}$ be the optimal value of the following semidefinite program
\begin{equation}
\begin{split}
\label{eq:jordan-gen-SDP}
\text{maximize:} \quad & t \\
\text{satisfying:} \quad
& \Tr_{\calX_1}(A) = J(\I_{\L(\calX)}) \\
& \Tr_{\calX_2}(A) = J(\I_{\L(\calX)}) \\
& \bigl( \I_{\L(\calX)} \otimes \Phi_1 \otimes \Phi_2 \bigr)(A) \geq t \cdot \I_{\calX \otimes \calY_1 \otimes \calY_2} \\
& A \in \Herm(\calX \otimes \calX_1 \otimes \calX_2),
\end{split}
\end{equation}
and let $\beta_{\jord}$ be the optimal value of the corresponding dual problem, which can be stated as follows
\begin{equation}
\label{eq:sdpchannelcompatibilityJdual}
\begin{split}
\text{minimize:} \quad & \ip{W_1+W_2}{J(\I_{\L(\calX)})} \\
\text{satisfying:} \quad
& (\I_{\L(\calX)} \otimes \Phi_1^* \otimes \Phi_2^*)(\rho) =
\Tr_{\calX_2}^*(W_1) + \Tr_{\calX_1}^*(W_2) \\ 
& W_1 \in \Herm(\calX \otimes \calX_1) \\
& W_2 \in \Herm(\calX \otimes \calX_2) \\
& \rho \in \D(\calX \otimes \calY_1 \otimes \calY_2).
\end{split}
\end{equation}
We now show that strong duality holds for this pair of semidefinite programs. To see this, note that the operator $A = A_{\JP}$ together with the value $t = \lambda_{\min}(\bigl( \I_{\L(\calX)} \otimes \Phi_1 \otimes \Phi_2 \bigr)(A_{\JP}))$ forms a feasible solution to the primal problem in \eqref{eq:jordan-gen-SDP}, and thus
\begin{equation}
\alpha_{\jord} \geq \lambda_{\min} \left( \bigl( \I_{\L(\calX)} \otimes \Phi_1 \otimes \Phi_2 \bigr)(A_{\JP}) \right).
\end{equation}
Similarly, the operators
\begin{align}
W_1 & := \frac{1}{2 \dim(\calX)^3} \; \I_{\calX \otimes \calX_1} \\ W_2 & := \frac{1}{2 \dim(\calX)^3} \; \I_{\calX \otimes \calX_2} \\ \rho & := \frac{1}{\dim(\calX)^3} \; \I_{\calX \otimes \calY_1 \otimes \calY_2}
\end{align}
form a strictly feasible solution to the dual problem, and thus
\begin{equation}
\beta_{\jord} \leq \frac{1}{\dim(\calX)^2}.
\end{equation}
Strong duality now follows from the fact that the primal is feasible and the dual is strictly feasible. Moreover, it holds that $\alpha_{\jord} = \beta_{\jord}$ and that the optimal value $\alpha_{\jord}$ is attained. Summarizing these bounds, we have
\begin{equation}
\lambda_{\min} \left( \bigl( \I_{\L(\calX)} \otimes \Phi_1 \otimes \Phi_2 \bigr)(A_{\JP}) \right) \leq \alpha_{\jord} = \beta_{\jord} \leq \frac{1}{\dim(\calX)^2}.
\end{equation}
Again, note that the fact that $\alpha_{\jord}$ is attained tells us something interesting. It tells us that $\Phi_1$ and $\Phi_2$ are Jordan compatible if and only if $\alpha_{\jord} \geq 0$ (and thus $\beta_{\jord} \geq 0$). This brings us to the following theorem.

\begin{theorem}[Theorem of the alternative (Jordan version)]
\label{TotAVJ1}
\emph{Exactly one} of the following statements is true:
\begin{enumerate}[label = (\arabic*)]
\item $\Phi_1 \in \C(\calX,\calY_1)$ and $\Phi_2 \in \C(\calX,\calY_2)$ are Jordan compatible.
\item There exists $W_1 \in \Herm(\calX \otimes \calX_1)$, $W_2 \in \Herm(\calX \otimes \calX_2)$, and $\rho \in \Pos(\calX \otimes \calY_1 \otimes \calY_2)$ such that
 \begin{equation}
 \label{eq:phi1phi2starW1W2}
 (\I_{\L(\calX)} \otimes \Phi_1^* \otimes \Phi_2^*)(\rho) = \Tr_{\calX_2}^*(W_1)+ \Tr_{\calX_1}^*(W_2)\\ 
 \quad \text{ and } \quad
 \bigip{W_1+W_2}{J(\I_{\L(\calX)})} < 0.
 \end{equation}
\end{enumerate}
\end{theorem}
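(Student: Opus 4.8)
The plan is to mirror the proof of Theorem~\ref{TotAV1} exactly, since the only substantive change is that the relevant SDP pair is now \eqref{eq:jordan-gen-SDP}--\eqref{eq:sdpchannelcompatibilityJdual} rather than \eqref{eq:sdpchannelcompatibility}--\eqref{eq:sdpchannelcompatibilitydual}. The key structural facts, already established in the discussion preceding the theorem, are that strong duality holds with $\alpha_{\jord}=\beta_{\jord}$, that this value is attained, and that $\Phi_1$ and $\Phi_2$ are Jordan compatible if and only if $\alpha_{\jord}\geq 0$. With these in hand, the argument splits into showing that the two statements cannot both be false and cannot both be true.

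First I would show both statements cannot be false. If statement~(1) fails, then the channels are not Jordan compatible, so $\alpha_{\jord}<0$ and hence $\beta_{\jord}<0$ by strong duality. Since $\beta_{\jord}$ is the optimal value of the dual \eqref{eq:sdpchannelcompatibilityJdual}, there must exist a dual-feasible triple $(W_1,W_2,\rho)$ attaining a negative objective value, i.e.\ with $\bigip{W_1+W_2}{J(\I_{\L(\calX)})}<0$. The dual feasibility constraint is precisely the first equation in \eqref{eq:phi1phi2starW1W2}, and since $\rho\in\D(\calX\otimes\calY_1\otimes\calY_2)\subseteq\Pos(\calX\otimes\calY_1\otimes\calY_2)$, this triple directly witnesses statement~(2).

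Next I would show both statements cannot be true, by deriving a contradiction. Suppose statement~(1) holds, so Jordan compatibility supplies an operator $A\in\Herm(\calX\otimes\calX_1\otimes\calX_2)$ satisfying $\Tr_{\calX_1}(A)=\Tr_{\calX_2}(A)=J(\I_{\L(\calX)})$ with $J(\Phi_1\odot_A\Phi_2)=(\I_{\L(\calX)}\otimes\Phi_1\otimes\Phi_2)(A)\geq 0$, and suppose statement~(2) holds, supplying $(W_1,W_2,\rho)$ with $\rho\geq 0$. I would then rewrite the objective using the two marginal conditions on $A$:
\begin{equation}
\bigip{W_1+W_2}{J(\I_{\L(\calX)})} = \ip{\Tr_{\calX_2}^*(W_1)+\Tr_{\calX_1}^*(W_2)}{A} = \ip{(\I_{\L(\calX)}\otimes\Phi_1^*\otimes\Phi_2^*)(\rho)}{A} = \ip{\rho}{J(\Phi_1\odot_A\Phi_2)},
\end{equation}
where the first equality uses $J(\I_{\L(\calX)})=\Tr_{\calX_2}(A)=\Tr_{\calX_1}(A)$ together with the defining property of the adjoint partial traces, the second uses the constraint in \eqref{eq:phi1phi2starW1W2}, and the third uses that $\I_{\L(\calX)}\otimes\Phi_1^*\otimes\Phi_2^*$ is the adjoint of $\I_{\L(\calX)}\otimes\Phi_1\otimes\Phi_2$. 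Since $\rho\geq 0$ and $J(\Phi_1\odot_A\Phi_2)\geq 0$, the right-hand side is nonnegative, contradicting $\bigip{W_1+W_2}{J(\I_{\L(\calX)})}<0$.

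I do not expect a genuine obstacle here, as the argument is routine once the dual SDP and strong duality are in place. The one point requiring care is bookkeeping the tensor-factor orderings in $\Tr_{\calX_1}^*$ and $\Tr_{\calX_2}^*$ (as fixed in \eqref{eq:Tr*definition}) so that the adjoint identities $\ip{W}{\Tr_{\calX_i}(A)}=\ip{\Tr_{\calX_i}^*(W)}{A}$ are applied to the correct subsystems, and noting the harmless relaxation from $\rho\in\D$ in the dual to $\rho\in\Pos$ in statement~(2).
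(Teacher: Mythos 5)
Your proposal is correct and is essentially the paper's own proof: the paper simply states that the argument is ``completely analogous to that of Theorem~\ref{TotAV1},'' and what you have written is exactly that analogue spelled out, with the chain of equalities $\bigip{W_1+W_2}{J(\I_{\L(\calX)})} = \ip{\Tr_{\calX_2}^*(W_1)+\Tr_{\calX_1}^*(W_2)}{A} = \ip{(\I_{\L(\calX)}\otimes\Phi_1^*\otimes\Phi_2^*)(\rho)}{A} = \ip{\rho}{J(\Phi_1\odot_A\Phi_2)}$ matching the identity the paper records in the remark immediately following the theorem. Your handling of the two minor subtleties (dual non-attainment is irrelevant since any feasible point with negative value suffices, and the relaxation from $\rho\in\D$ to $\rho\in\Pos$ is harmless in both directions) is also sound.
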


\begin{proof}
The proof is completely analogous to that of Theorem~\ref{TotAV1}.
\end{proof}

\begin{remark}[Rewriting this dual using Jordan products]
Again, we can use Jordan products to rewrite the dual SDP~\eqref{eq:sdpchannelcompatibilityJdual}. In particular, the dual objective function can be given by any of the four equivalent expressions:
\begin{align}
\ip{W_1+W_2}{J(\I_{\L(\calX)})} &= \ip{A}{\Tr_{\calX_2}^*(W_1)+ \Tr_{\calX_1}^*(W_2)} = \ip{A}{(\I \otimes \Phi_1^* \otimes \Phi_2^*)(\rho)} \\
&= \ip{J(\Phi_1 \odot_A \Phi_2)}{\rho}
\end{align}
where $A$ satisfies the conditions in \eqref{eq:Tr1ATr2A}. Note that they are equivalent since we are assuming $(W_1, W_2, \rho)$ is dual feasible.
\end{remark}

%%%%%%%%%%%%%%%%%%%%%%%%%%%%%%%
%%%%%%%%%%%%%%%%%%%%%%%%%%%%%%% 

\section{Qubit channels} \label{sec:qubitchanels}
\label{SectNumerical}

We now consider the case where $\calX =\calY= \complex^2$. From a result about the symmetric extendibility of two-qubit states~\cite{ChenJiKribsLutkenhausZeng-symExtension}, we have that a qubit channel $\Phi\in\C(\calX)$ is self-compatible if and only if it holds that
\begin{equation}
\label{eq:qubit-symmext}
\Tr\left(\bigl(\Tr_{\calX}(J(\Phi))\bigr)^2\right) \geq \Tr(J(\Phi)^2) - 4 \sqrt{\det(J(\Phi))}.
\end{equation}
However, closed-form algebraic criteria for the compatibility of other channels remain unknown.

\subsection{On the compatibility of (partially) dephasing-depolarizing channels}

In this subsection, we consider the compatibility and $k$-self-compatibility of a class of channels that we call \emph{dephasing-depolarizing channels}, which we define as follows. For fixed parameters ${p, q \in [0,1]}$ satisfying $p + q \leq 1$, the \emph{partially dephasing-depolarizing channel} is the linear map  
\begin{equation}
\label{eq:dephasingdepolarizing}
\Xi_{p,q} = (1 - p - q) \I_{\L(\calX)} + p \Delta + q \Omega,
\end{equation}
where we recall the (completely) dephasing channel $\Delta$ and the (completely) depolarizing channel $\Omega$ from Equation~\eqref{qubitchannels}.

\subsubsection*{Self-compatibility}

It can be easily verified that $\Xi_{p,q}$ is a measure-and-prepare channel (i.e., its Choi representation is separable, or in this case, PPT) if and only if $q\geq 2(1-p)/3$ (which is easy to check from the PPT criterion). Thus, for values of $p$ and $q$ in this region, the channel $\Xi_{p,q}$ is $k$-self-compatible for all $k$. We now consider the $k$-self-compatibility for these channels for small values of $k$.

The case $k=2$ is simple, as one may apply the condition \eqref{eq:qubit-symmext} to determine self-compatibility. Note that $\Tr_{\calX}(J(\Xi_{p,q}))= \I_{\calY}$ and thus $\Tr((\Tr_{\calX}(J(\Xi_{p,q})))^2) = 2$ holds for these channels. It is then straightforward to verify that \eqref{eq:qubit-symmext} is satisfied for $p,q\in[0,1]$ with $p+q\leq 1$ if and only if
\begin{equation}
\label{eq:qpineq}
q\geq \frac{2 - p - \sqrt{1 + 2p(1-p)}}{3}.
\end{equation}
This yields the region for which $\Xi_{p,q}$ is self-compatible. (Note that if one could not obtain a closed-form expression for the self-compatibility of a given family of channels analytically, it could be computed numerically by solving the relevant SDPs provided in Section~\ref{SectSDP}.)

For $k \in \{3, \dots 10\}$, we also numerically determine the regions where $\Xi_{p,q}$ is $k$-self-compatible using the SDP solver CVX \cite{cvx, GrantBoyd-convex}. The results are depicted in Figure~\ref{fig:Xipq_ksymmext}. Note that $k$-self-compatibility implies $l$-self-compatibility for $k \geq l$. (We note that although we did not give the SDP explicitly for $k$-self-compatibility, one can extend the SDP for self-compatibility in the natural way.  In fact, it is equivalent to the $k$-symmetric extendibility of the (normalized) Choi representation, and the SDP is then given in~\cite{DPS}.) 

\subsubsection*{Jordan self-compatibility}

Note that, for parameters $p, q > 0$ satisfying $p + q < 1$, the map $\Xi_{p, q}$ is invertible as a linear map. Indeed, the inverse map is given as
\begin{equation}
\label{eq:Xipq-inverse}
\Xi_{p, q}^{-1} = \dfrac{1}{1-p-q} \left( \I_{\L(\calX)} - \dfrac{p}{1-q} \Delta - q \dfrac{1-p-q}{1-q} \Omega \right).
\end{equation}
By Proposition \ref{thm:jordan-gen-inverseIFF}, and for these values of $p$ and $q$, the channel $\Xi_{p,q}$ is therefore self-compatible if and only if it is (generalized) Jordan compatible with itself. Recall that every measure-and-prepare channel is self-compatible. Thus, the region of pairs $(p,q)$ such that $\Xi_{p, q}$ is a measure-and-prepare channel (i.e., such that the Choi representation $J(\Xi_{p, q})$ is a separable operator) forms a subset of the region of pairs for which $\Xi_{p, q}$ is self-compatible. Consider now the standard Jordan product of $\Xi_{p, q}$ with itself. If the map $\Xi_{p, q}\odot\Xi_{p, q}$ is completely positive (i.e., if $J(\Xi_{p, q}\odot\Xi_{p, q})\geq0$), then $\Xi_{p, q}$ is self-compatible.

The boundaries of these three regions are depicted in Figure~\ref{fig:JP-XiPQ}. Interestingly, the region where $J(\Xi_{p, q}\odot\Xi_{p, q})\geq0$ is strictly smaller than the region where $\Xi_{p, q}$ is self-compatible and strictly larger than the region where $J(\Xi_{p, q})$ is separable.

\begin{figure}[t]
\centering
\begin{subfigure}[t]{.475\linewidth}
\centering
\includegraphics[width=\linewidth]{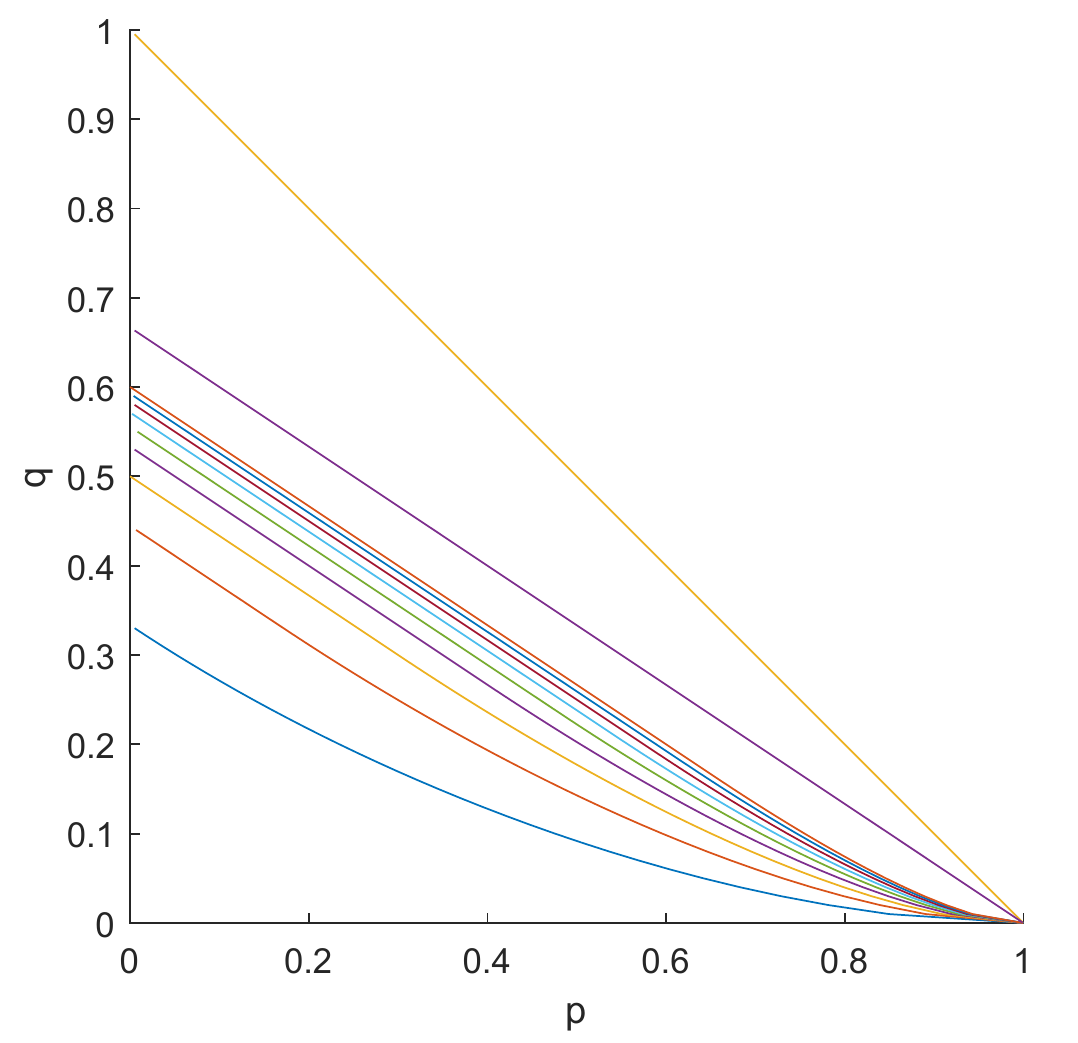}
\caption{Boundaries for the regions of $k$-self-compatible of the channel $\Xi_{p,q}$. The topmost (yellow) line is $p+q=1$ below which is the region in consideration. The next line down (purple) is defined by $q=2(1-p)/3$ and is the lower boundary for which $\Xi_{p,q}$ is $k$-self-compatible for all values of $k$. The bottom-most curve (blue) is the lower boundary of the $2$-self-compatible channels given by $q = \frac{2 - p - \sqrt{1 + 2p(1-p)}}{3}$. The subsequent curves above represent the boundaries of the $k$-self-compatible channels for $k \in \{3, \ldots, 10\}$ which were determined numerically (using CVX).}\label{fig:Xipq_ksymmext}
\end{subfigure}
\hfill
\begin{subfigure}[t]{.475\linewidth}
\centering
\includegraphics[width=\linewidth]{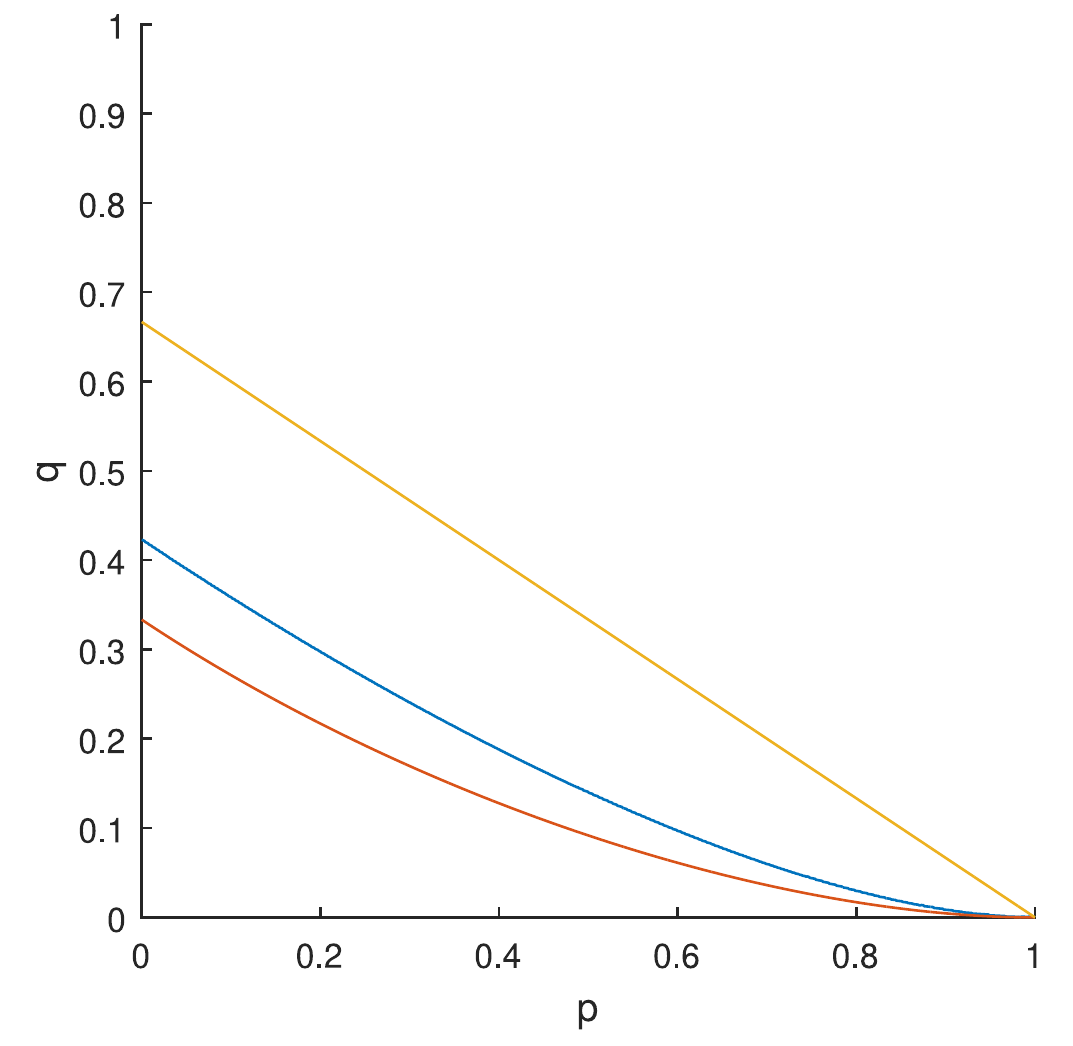}
\caption{The bottom-most curve (red) given by $q = \frac{2 - p - \sqrt{1 + 2p(1-p)}}{3}$ is the lower boundary of the region where the channel $\Xi_{p, q}$ is self-compatible. The middle curve (blue) is the lower boundary of the region where $J(\Xi_{p, q} \odot \Xi_{p, q}) \geq 0$, i.e., (standard) Jordan self-compatible. The top-most line (yellow) is the lower boundary of the region where $\Xi_{p, q}$ is a measure-and-prepare channel, i.e., $k$-self-compatible for all $k$. Recall that $\Xi_{p,q}$ is only defined for $p+q \leq 1$.}\label{fig:JP-XiPQ}
\end{subfigure}
\caption{{The regions of $k$-self-compatibility and (standard) Jordan self-compatibility of the partially dephasing-depolarizing channels.}}
\end{figure}

%%%%%%%%%%%%%%%%%%%%%%%%%%%%%%%%%%%%%%%%%%%%%%%%%%%%%%%%

\subsection{On the compatibility of partially depolarizing channels}
  
We now restrict our attention to pairs of partially depolarizing channels $(\Omega_{q_0}, \Omega_{q_1})$ as defined in~\eqref{eq:partialdepolarising}. (To connect this analysis to the discussion in the previous section, note channels of this form may be expressed as $\Omega_q = \Xi_{0,q}$.) It is evident that $\Omega = \Omega_1$ is compatible with every channel (see Lemma \ref{lemma:jordan-constant-NaS}). Moreover it is known that the channel $\Omega_{q}$ is self-compatible whenever ${1/3 \leq q \leq 1}$ (see \cite{Werner-cloning} for details). Furthermore, the set of pairs of compatible channels must be convex. Hence, the region of pairs $(q_0,q_1)$ for which $\Omega_{q_0}$ and $\Omega_{q_1}$ are compatible must at least contain the convex hull of these pairs. In fact, it was shown in \cite{Haapasalo-compat} that the channels $(\Omega_{q_0}, \Omega_{q_1})$ are compatible if and only if
\begin{equation}
q_0 + \sqrt{q_0 q_1} + q_1 \geq 1.
\end{equation}
We add to this analysis by considering pairs $(q_0, q_1)$ where $J(\Omega_{q_0} \odot \Omega_{q_1}) \geq 0$. (That is, when $\Omega_{q_0}$ and $\Omega_{q_1}$ are (standard) Jordan compatible.) To illustrate the distinction between these three different regions of compatibility, the boundaries of these regions are depicted in Figure~\ref{fig:JP-Deltaq0q1}.
\begin{figure}[t]
\centering
\includegraphics[width=.475\textwidth]{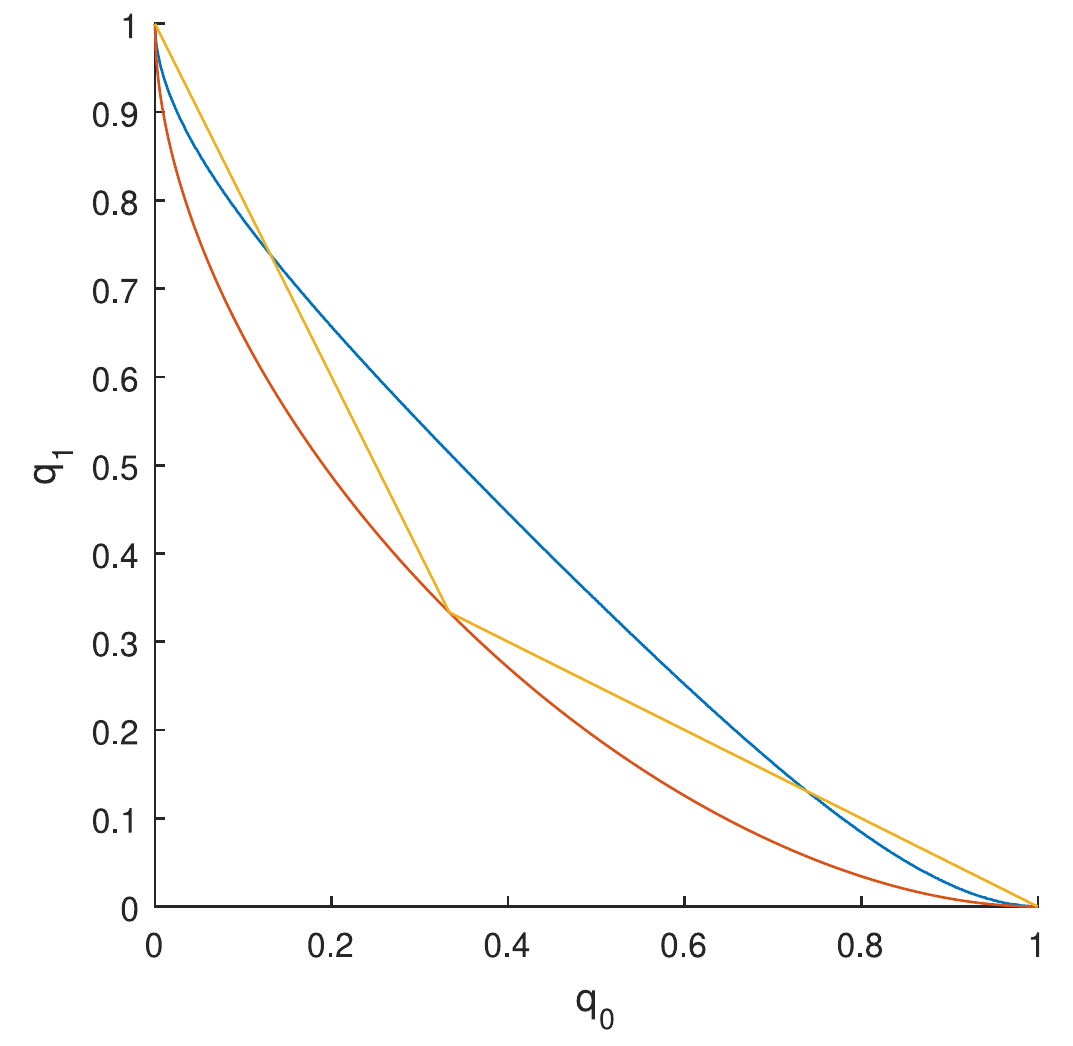}
\caption{Regions of points $(q_0,q_1)$ where the channels $\Omega_{q_0}$ and $\Omega_{q_1}$ satisfy each given property. The red curve is the lower boundary of the region where the channels $\Omega_{q_0}$ and $\Omega_{q_1}$ are compatible. The blue curve is the lower boundary of the region where $J(\Omega_{q_0} \odot \Omega_{q_1}) \geq 0$. The yellow curve is the lower boundary of the convex hull of the points $(1/3,1/3)$, $(0,1)$, $(1,0)$, and $(1,1)$.}
\label{fig:JP-Deltaq0q1}
\end{figure}
These are the boundaries of the regions of values $(q_0,q_1)$ where
\begin{itemize}
\item $\Omega_{q_0}$ and $\Omega_{q_1}$ are compatible (red curve),
\item $J(\Omega_{q_0} \odot \Omega_{q_1}) \geq 0$ (i.e., where $\Omega_{q_0}$ and $\Omega_{q_1}$ are (standard) Jordan compatible) (blue curve), and
\item $(\Omega_{q_0},\Omega_{q_1})$ is in the convex hull of the self-compatible pair $(\Omega_{1/3},\Omega_{1/3})$ and the trivially compatible pairs $(\Omega_{0},\Omega_{1})$, $(\Omega_{1},\Omega_{0})$, and $(\Omega_1, \Omega_1)$ (yellow curve)
\end{itemize}
(where we note that every channel is trivially compatible with the completely depolarizing channel $\Omega_1=\Omega$). Interestingly, the region of points $(q_0,q_1)$ satisfying $J(\Omega_{q_0} \odot \Omega_{q_1}) \geq 0$ is neither a subset nor a superset of the convex hull of the points $(1/3,1/3)$, $(0,1)$, $(1,0)$, and $(1,1)$. This illustrates that the (standard) Jordan product provides an interesting non-trivial sufficient condition for the compatibility of these channels.

Note that for values of $q$ satisfying $0 \leq q < 1$, the partially depolarizing map $\Omega_{q}$ is invertible with inverse map given by
\begin{equation}
\Omega_q^{-1} = \frac{1}{1-q} \left( \I_{\L(\calX)} - q \Omega \right).
\end{equation}
Therefore, for parameters satisfying $0 < q_0 < 1$ and $0 < q_1 < 1$, the channels $\Omega_{q_0}$ and $\Omega_{q_1}$ are compatible if and only if they are (generalized) Jordan compatible.

%%%%%%%%%%%%%%%%%%%%%%%%%%%%%%%
%%%%%%%%%%%%%%%%%%%%%%%%%%%%%%%

\section{Conclusions}
\label{SectConclusions}

In this work, we studied the quantum channel marginal problem (i.e., the channel compatibility problem)---which is the task of determining whether two channels can be executed simultaneously, in the sense that afterwards, one can choose which channel's output to obtain. We showed how to decide this via semidefinite programming and presented several other key properties such as its equivalence to the quantum state marginal problem.

We also studied a generalization of the Jordan product to quantum channels, and in turn, generalized it further such that it captures the compatibility of invertible channels. This Jordan product may be of independent interest.

%%%%%%%%%%%%%%%%%%%%%%%%%%%%%%%

\subsubsection*{Open problems.}
There are many open problems concerning the compatibility of channels. We briefly mention a few which we think are interesting.

One immediate open problem is the question of whether compatibility and Jordan compatibility are equivalent. We conjecture that they are equivalent based on the fact that the set of pairs of compatible channels that are not Jordan compatible must have zero measure when the output and input spaces have the same dimension. On this note, it would be interesting to see if the resolution to this conjecture depends on the dimensions of the input and output spaces.

Another interesting problem is to examine the computational complexity of determining compatibility for a given pair of channels. Since it is equivalent to the quantum state marginal problem, we suspect that there are versions of this problem which are QMA-hard (although the equivalence is a mathematical one, and may or may not translate into efficient algorithmic reductions).

Another open problem is whether one can extend this work to study the compatibility of other quantum objects, such as quantum strategies~\cite{GutoskiWatrous-QS, Gutoski-QS, GutoskiRosmanisSikora-QS}, combs~\cite{Ziman-PPOVM, ChiribellaDArianoPerinotti-combs}, or even channels in other generalized probabilistic theories.

Lastly, there might be a relationship between channel compatibility and cryptography. For example, symmetric extendibility is closely related to quantum key distribution, since you do not want Alice to be just as correlated/entangled with Bob and she is with Eve. Since quantum channel compatibility generalizes symmetric extendibility, perhaps there is another cryptographic setting in which the notion of channel compatibility translates into (in)security.

%%%%%%%%%%%%%%%%%%%%%%%%%%%%%%%
%%%%%%%%%%%%%%%%%%%%%%%%%%%%%%% 

\section*{Acknowledgements}
We thank John Watrous for helpful discussions and for coining the term ``compatibilizer''. JS also thanks Anurag Anshu and Daniel Gottesman for interesting discussions about the capacity of compatibilizing channels. MP is thankful to Teiko Heinosaari for discussing the Jordan product of channels and the compatibility of measure-and-prepare channels.

MG is supported by the Natural Sciences and Engineering Research Council (NSERC) of Cana\-{}da, the Canadian Institute for Advanced Research (CIFAR), and through funding provided to IQC by the Government of Canada.

MP is thankful for the support by Grant VEGA 2/0142/20, by the grant of the Slovak Research and Development Agency under Contract No. APVV-16-0073, by the DFG and by the ERC (Consolidator Grant 683107/TempoQ).

JS is supported in part by the Natural Sciences and Engineering Research Council (NSERC) of Cana\-{}da. 

Research at Perimeter Institute is supported in part by the Government of Canada through the Department of Innovation, Science and Economic Development Canada and by the Province of Ontario through the Ministry of Colleges and Universities. 
   
%%%%%%%%%%%%%%%%%%%%%%%%%%%%%%% 
\bibliographystyle{ieeetr}
\bibliography{citations}

\end{document}